\documentclass[a4paper,11pt]{article}

\usepackage{fullpage}
\usepackage{libertine}

\usepackage[normalem]{ulem}

\usepackage{amsmath,amssymb,amsthm}
\usepackage{enumitem}
\usepackage{booktabs}
\usepackage{algorithm}
\usepackage{algpseudocode}
\usepackage{float}
\usepackage{microtype}

\usepackage{multirow}

\usepackage[breaklinks=true]{hyperref}
\usepackage[svgnames]{xcolor}
\usepackage[capitalise,nameinlink]{cleveref}
\hypersetup{colorlinks={true},linkcolor={DarkBlue},citecolor=[named]{DarkGreen}}

\usepackage{natbib}

\newtheorem{theorem}{Theorem}[section]

\newtheorem{lemma}[theorem]{Lemma}
\newtheorem{claim}{Claim}

\newtheorem{corollary}[theorem]{Corollary}
\theoremstyle{definition}

\newcommand{\cost}{\text{cost}}
\newcommand{\Fone}{F_1}
\newcommand{\Ftwo}{F_2}
\newcommand{\SC}{\mathrm{SC}}
\newcommand{\OPT}{\mathrm{OPT}}
\newcommand{\dist}{d}
\newcommand{\One}{\{1\}}
\newcommand{\Two}{\{2\}}
\newcommand{\Both}{\{1,2\}}
\newcommand{\EPM}{\normalfont\textsc{Parity-Median}}
\newcommand{\Ppm}{\text{PM}}
\newcommand{\Mfourthirds}{M^{4/3}}
\newcommand{\SL}{\textsc{Serial-Lex}}
\newcommand{\OICM}{\textsc{Independent-Cut-Median}}
\newcommand{\bw}{\mathbf{w}}
\newcommand{\bo}{\mathbf{o}}

\usepackage{authblk}

\title{\bf Discrete Truthful Heterogeneous Two-Facility Location: The Line and Beyond}
\author[1]{Panagiotis Kanellopoulos}
\author[1,2]{Alexandros A. Voudouris}
\affil[1]{University of Essex, UK}
\affil[2]{University of Southern Denmark, Denmark}

\date{}

\begin{document}
\maketitle

\begin{abstract}
We study deterministic strategyproof mechanisms for discrete heterogeneous two-facility location. In our model, $n$ agents occupy distinct nodes of a connected graph and privately report non-empty approval preferences over two facilities, which must be placed at distinct nodes. The cost of an agent is her total distance from the facilities she approves, and the objective is to minimize the social cost (the total cost of the agents). For the line graph, the best possible approximation ratio of deterministic strategyproof mechanisms was previously shown to lie between $4/3$ and $17/4$. We close this gap by designing an optimal $4/3$-approximate mechanism that combines a fixed-parity median rule, which suffices for instances with $n\ge7$, and strategyproof local overrides for the remaining smaller cases. Beyond the line, we design a deterministic strategyproof $2$-approximate mechanism for every connected graph and prove a lower bound of $3/2$ on the graph $K_{1,3}$.
\end{abstract}

\section{Introduction}
\label{sec:intro}
Facility location problems model collective decision situations in which a planner must build public facilities while relying on information held by self-interested agents. Since agents may misreport if this can lead to a better outcome for them, the mechanism responsible for mapping reports to locations must be \emph{strategyproof}: no agent should ever be able to strictly decrease her cost by lying. However, strategyproofness is typically not well-aligned with efficiency, and the standard way to quantify this tension, initiated by \citet{procaccia09approximate}, is through the \emph{approximation ratio}: the worst-case ratio between the social cost of the outcome computed by the strategyproof mechanism and the optimal social cost achievable by any feasible outcome. Over the years, facility location has become one of the central benchmarks of {\em approximate mechanism design without money} and many different variants have been studied~\citep{fl-survey}.

In this work, we focus on a {\em discrete} version with agents that occupy distinct public nodes of a graph. There are two {\em heterogeneous} facilities that are not interchangeable; for example, think of a school and a library. The agents have private approval {\em preferences} over the facilities representing which of them they care about. So, in contrast to the classic homogeneous setting, the private information is not \emph{where} an agent is but \emph{what her preferences are}. Given the facility locations on the graph, each agent suffers an individual cost equal to her total graph distance from the facilities she approves, and the goal is to choose the locations to minimize the {\em social cost} (the total cost of the agents). This specific model was introduced by \citet{serafino-ventre} for the line graph, and was later studied by \citet{Kanellopoulos2023sidma} for the case of {\em active} agents with non-empty preferences. 
\citeauthor{Kanellopoulos2023sidma} proved that the best possible approximation ratio of deterministic strategyproof mechanisms lies between
$4/3$ and $17/4$, leaving open two natural questions: {\em What exactly is the best possible approximation for the line, and what happens beyond the line?} 
Graphs bring genuinely new phenomena, e.g., branching nodes or odd cycles, and it is not a priori clear whether the line bound persists, degrades, or is governed by identifiable structural features of the graph. In this paper, we resolve the first question completely and initiate the study of the second one. Our results are summarized in Table~\ref{tab:results}. 

\subsection{Our Contribution}
\label{sec:intro:contribution}

\begin{table}[t]
\centering
\begin{tabular}{lcc}
\toprule
Graph class & Lower bound & Upper bound \\
\midrule
Line & $4/3$~\citep{Kanellopoulos2023sidma} & $4/3$ (Thm.~\ref{thm:line:upper-bound-4/3}) \\
Connected graphs & $3/2$ (Thm.~\ref{thm:lower:3/2:general-graphs}) & $2$ (Thm.~\ref{thm:graph-upper}) \\
\bottomrule
\end{tabular}
\caption{Overview of our bounds on the approximation ratio of deterministic strategyproof mechanisms for the social cost. The lower bound of $3/2$ is shown on the claw graph $K_{1,3}$.}
\label{tab:results}
\end{table}

Our first main result is a strategyproof mechanism for the line graph with approximation ratio at most $4/3$ for every $n\ge4$, matching the lower bound of \citet{Kanellopoulos2023sidma}. For $n=3$, \citeauthor{Kanellopoulos2023sidma} had already provided a tight $4/3$ bound, while the {\sc TwoExtremes} mechanism of \citet{serafino-ventre} is optimal for $n\le2$.
Our mechanism $\Mfourthirds$ is best viewed as a single construction with case branches determined by the (public) number of agents:
\[
\Mfourthirds=
\begin{cases}
\EPM{} \text{ (Section~\ref{sec:line:parity})}, & n\ge7; \\
M_6 \text{ (Section~\ref{sec:line:n6})}, & n=6;\\
M_5 \text{ (Section~\ref{sec:line:n5})}, & n=5;\\
M_4 \text{ (Section~\ref{sec:line:n4})}, & n=4.
\end{cases}
\]
The conceptual core is the \EPM{} mechanism: it first partitions the nodes of the line into fixed opposite parity classes and associates each of them with a different facility. Then, it places each facility at the node of its parity class closest to the median of the agents that approve it. Fixing
disjoint domains guarantees feasibility (facilities are placed at different nodes) and makes strategyproofness a consequence of a simple median-monotonicity property. The parity structure ensures that the mechanism loses at most two additive units relative to the separable median optimum. Hence \EPM{} is $4/3$-approximate whenever $\OPT\ge6$, which always holds for $n\ge7$. The only genuine obstructions therefore occur for smaller instances with $n\in\{4,5,6\}$. For each of these cases, we present local overrides that are carefully designed to preserve strategyproofness and achieve the bound of $4/3$.

We next consider arbitrary connected graphs. For $n\le3$, a serial-lexicographic rule is strategyproof and $2$-approximate. For $n\ge4$, we introduce the \OICM{} mechanism. Starting from the nodes occupied by agents, it first constructs a maximal independent set of the subgraph induced by these nodes, and then extends it, using only empty nodes, to a maximal independent set $L$ of the whole graph; the complementary domain is $R=V\setminus L$, and note that both domains are non-empty and dominating. Facility $F_1$ is permanently restricted to $L$, facility $F_2$ to $R$, and then each facility is placed at a node of its side minimizing the total distance from the agents that approve it. Strategyproofness holds for exactly the same structural reason as on the line, that is, the domains are fixed and disjoint, and restricted medians are monotone under
the removal of a supporter. The approximation analysis uses the stronger structure of the construction: maximality inside the subgraph induced by the nodes occupied by agents supplies the missing unit in the only difficult case, where one facility has at most one supporter. This yields a deterministic strategyproof $2$-approximation for every connected graph.

On the negative side, we prove a lower bound of $3/2$ on the claw graph $K_{1,3}$ (Section~\ref{sec:general:lower}). This establishes the first separation between the line and more complex graphs for this model: the tight bound of $4/3$ on the line provably does not extend even to trees of maximum degree $3$. The proof is a combinatorial strategyproofness-propagation argument. We identify a small set of \emph{anchor} profiles such that any mechanism with approximation ratio better than $3/2$ is forced to choose a unique outcome,  and then follow chains of unilateral deviations that transfer these constraints to a terminal profile, where any admissible outcome creates a profitable manipulation.

\subsection{Related Work}
\label{sec:intro:related}
Facility location is a classic topic in social choice theory, motivated both by its practical importance and by its rich strategic structure. In the standard model, a single facility is placed on the line on the basis of reported single-peaked preferences; \citet{moulin1980} characterized the full class of strategyproof mechanisms for this setting. \citet{schummer-vohra2002} extended this line of work to networks, showing that the topology of the underlying graph places strong restrictions on the mechanisms that can satisfy strategyproofness. The approximation-theoretic study of facility location was initiated by \citet{procaccia09approximate} and has since developed into an extensive literature; see the survey of \citet{fl-survey}. Related work has also considered discrete domains and settings with multiple homogeneous facilities, including, e.g., characterizations on discrete lines and cycles~\citep{dokow-etal-2012} and tight deterministic approximation bounds for locating two facilities on the line~\citep{fotakis-tzamos-2014}.

The study of discrete heterogeneous facility location was initiated by \citet{serafino-ventre}. As already mentioned above, in their model, agents occupy distinct, publicly known nodes of a line and have private, possibly empty approval sets over two facilities. The facilities must be placed at distinct nodes, and each agent incurs a cost equal to the sum of her distances to the facilities she approves. We consider the slightly weaker model of \citet{Kanellopoulos2023sidma} in which every agent approves at least one facility. For this setting, we close the remaining approximation gap on the line by establishing a tight ratio of $4/3$ for every number of agents, and further extend the model to general connected graphs for which we establish constant approximation bounds.

Our work also contributes to the growing literature on \emph{feasibility-constrained facility location}. \citet{Sui2015constrained} studied how restrictions on admissible facility placements affect strategyproofness and efficiency, while \citet{Tang2020candidate} initiated the study of mechanisms for homogeneous facilities whose locations must be selected from a fixed set of candidate ones. The work most closely related to ours is that of \citet{kanellopoulos2025}, who considered two heterogeneous facilities and the same sum-of-distances individual cost objective in a candidate-location setting. Unlike in our model, however, the approval preferences of the agents are publicly known, whereas their positions are private. Related candidate-location models with the max-variant of individual cost were studied by \citet{Zhao2023constrained} and \citet{lotfi2024max}. Another form of joint feasibility constraint was considered by \citet{Xu2021minimum}, who required the selected facilities to satisfy a prescribed minimum-distance requirement. Candidate-location models have also been studied for obnoxious preferences~\citep{GaiLW22,ZhaoLNF24,Kanellopoulos2026obnoxious} and in the presence of predictions~\citep{FangFLNV2025predictions}. A different type of constraint appears in the work of \citet{Deligkas2025agent}, where facilities must be placed at distinct locations reported by the agents, making the feasible set itself dependent on their reports. Our model brings together several of these features: the nodes of the graph form a fixed discrete set of candidate locations, the two facilities must occupy distinct nodes, and agents strategically report which facilities they approve.

\section{Model and Preliminaries}
\label{sec:model}
Let $V=\{1,2,\ldots,m\}$ be the nodes of a connected {\em graph}, and denote by $\dist(u,v)$ the distance between any two nodes $u$ and $v$, defined as the length of the shortest path between $u$ and $v$. 
There is a set $N$ of $n$ {\em agents} at distinct public nodes $x_1, \ldots, x_n \in V$, and two {\em facilities} $\{\Fone,\Ftwo\}$.  
Each agent $i$ reports a non-empty {\em approval} set
$A_i\in\{\One,\Two,\Both\}$; let $A = (A_i)_{i \in N}$ denote the approval profile.
For $k\in\{1,2\}$, let $N_k=\{i:k\in A_i\}$ be the set of agents that approve facility $F_k$, and $n_k = |N_k|$. 
Although every agent approves at least one facility, it is possible that $N_k = \varnothing$ for some facility $F_k$ (which would imply that $N_{3-k}=N$). 

A feasible {\em outcome} is an ordered pair $(z_1,z_2)\in V^2$ with $z_1\ne z_2$, where $F_k$ is placed at $z_k$.  
Each agent $i$ suffers an individual cost equal to the total distance from all facilities she approves, that is,
\[
  \cost_i(z_1,z_2\mid A_i)=\sum_{k\in A_i} d(x_i,z_k).
\]
The {\em social cost} of an outcome $(z_1,z_2)$ is the cost of all agents, i.e., 
\[
  \SC(z_1,z_2\mid A)=\sum_{i=1}^n \cost_i(z_1,z_2\mid A_i)
  =\sum_{i\in N_1}d(x_i,z_1) + \sum_{i\in N_2} d(x_i,z_2).
\]
The optimal social cost is
\[
  \OPT(A)=\min_{z_1\ne z_2}\SC(z_1,z_2\mid A).
\]
A deterministic mechanism $M$ takes as input an approval profile $A$ and outputs a feasible outcome $M(A)$. It is \emph{strategyproof} if no agent can strictly decrease her true cost by changing her reported approval set while all other reports are fixed, that is, 
$\cost_i( M(A) \mid A_i) \leq \cost_i (M(A') \mid A_i)$ for any profile $A'$ that differs from $A$ only at the approval set of agent $i$. 
When clear from context, we will drop $A$ from notation, and simply write $\SC(z_1,z_2)$ for the social cost of a feasible outcome $(z_1,z_2)$, $\SC(M)$ for the social cost of the outcome computed by mechanism $M$, and $\OPT$ for the optimal social cost. 

We now present two useful lower bounds on the optimal social cost. 
For a non-empty set $S$ of agents, let $\mu(S)$ denote the set of all {\em median} nodes, which minimize the total distance from the agents of $S$. Since the optimal outcome must be feasible, we have the following lower bound on the social cost. 

\begin{lemma}
\label{lem:opt-median}
For any approval profile $A$ and $v_k \in \mu(N_k)$, $\OPT(A)\ge \sum_{k \in \{1,2\}}\sum_{i \in N_k} d(x_i,v_k)$. 
\end{lemma}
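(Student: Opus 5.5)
The plan is to relax the feasibility constraint and then use the definition of a median separately for each facility. Let $(z_1^*,z_2^*)$ be an optimal outcome, so $z_1^*\neq z_2^*$ and
\[
\OPT(A)=\SC(z_1^*,z_2^*\mid A)=\sum_{i\in N_1}d(x_i,z_1^*)+\sum_{i\in N_2}d(x_i,z_2^*),
\]
using the decomposition of the social cost given in the model section. The key observation is that this expression is a sum of two terms, each depending on only one coordinate of the outcome.

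For each $k\in\{1,2\}$ with $N_k\neq\varnothing$, the node $v_k\in\mu(N_k)$ minimizes $\sum_{i\in N_k}d(x_i,v)$ over all $v\in V$. It therefore minimizes this sum in particular against $v=z_k^*$, which gives $\sum_{i\in N_k}d(x_i,z_k^*)\ge\sum_{i\in N_k}d(x_i,v_k)$. If $N_k=\varnothing$, both sides are empty sums equal to $0$, so the term contributes nothing. (In that case $\mu(N_k)$ is not defined for the empty set, and the corresponding summand in the statement is read as $0$.)

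Summing the two per-facility inequalities yields $\OPT(A)\ge\sum_{k}\sum_{i\in N_k}d(x_i,v_k)$. Dropping the constraint $z_1\neq z_2$ can only lower the minimum, so the inequality goes the right way. Note that the right-hand side is simply the unconstrained separable optimum, and it may be attained with $v_1=v_2$. There is no real obstacle here; the only point needing care is the convention for an empty $N_k$.
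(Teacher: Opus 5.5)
Your proof is correct and takes the paper's route: the paper justifies this lemma in one line, noting that $\OPT$ minimizes over feasible outcomes while the right-hand side is the separable optimum with the constraint $z_1\neq z_2$ dropped, which is exactly your relaxation argument. Your per-facility comparison against $z_k^*$ and your handling of an empty $N_k$ just spell out details the paper leaves implicit.
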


\noindent 
We will also make use of the next lower bound.

\begin{lemma}
\label{lem:opt-n-2-general}
For any approval profile $A$, $\OPT(A)\ge n_1+n_2 - 2 \ge n-2$.
\end{lemma}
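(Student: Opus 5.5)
The plan is to exploit the distinctness of agent locations together with the feasibility constraint $z_1\ne z_2$, bounding the social cost facility by facility. Fix an arbitrary feasible outcome $(z_1,z_2)$; we will show $\SC(z_1,z_2)\ge n_1+n_2-2$. Since $\OPT(A)$ is attained at some feasible outcome, this proves the first inequality.

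Write $\SC(z_1,z_2)=\sum_{i\in N_1}d(x_i,z_1)+\sum_{i\in N_2}d(x_i,z_2)$ and treat the two sums separately. For facility $F_k$, the agents of $N_k$ occupy $n_k$ distinct nodes. At most one of them can sit exactly at $z_k$, and every other agent contributes distance at least $1$. Hence
\[
\sum_{i\in N_k} d(x_i,z_k)\;\ge\; n_k-1 \quad\text{whenever } n_k\ge 1.
\]
If $n_k=0$, the sum is $0\ge n_k-1$ trivially. Adding the bounds for $k=1,2$ gives $\SC(z_1,z_2)\ge n_1+n_2-2$. For this bound we do not even need $z_1\ne z_2$; the distinctness of agent nodes is all that is used.

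For the second inequality, we use that every agent reports a non-empty approval set. Each agent therefore belongs to at least one of $N_1,N_2$, and counting with multiplicity gives $n_1+n_2=\sum_i |A_i|\ge n$. Consequently $n_1+n_2-2\ge n-2$.

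There is no real obstacle here. The only point needing care is the degenerate case $N_k=\varnothing$, and it is handled by the trivial bound above.
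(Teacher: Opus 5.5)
Your proof is correct and follows essentially the same argument as the paper: at most one agent of $N_k$ can be at distance zero from $F_k$, so each facility contributes at least $n_k-1$, and non-empty approval sets give $n_1+n_2\ge n$. You additionally make explicit the degenerate case $N_k=\varnothing$ and observe that $z_1\ne z_2$ is not needed; the paper leaves both points implicit.
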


\begin{proof}
For each $k\in\{1,2\}$, at most one agent in $N_k$ can have zero
contribution to the social cost from $F_k$. Since every non-zero
distance is at least $1$, the contribution of $F_k$ is at least
$n_k-1$. Hence, $\OPT\ge n_1+n_2-2\ge n-2$.
\end{proof}

\section{Line Graph: An Optimal $4/3$ Bound} 
\label{sec:line} 
In this section, we focus on the line graph and close the gap between $4/3$ and $17/4$ that was previously left open by \citet{Kanellopoulos2023sidma} for $n \geq 4$; note that \citeauthor{Kanellopoulos2023sidma} already showed a tight bound of $4/3$ for $n \leq 3$. In particular, we will show the following result. 

\begin{theorem}\label{thm:line:upper-bound-4/3}
    For the line graph and $n\geq 4$ agents, there is a deterministic strategyproof mechanism $M^{4/3}$ with approximation ratio at most $4/3$.  
\end{theorem}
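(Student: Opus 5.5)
The construction of $\Mfourthirds$ branches on the public value of $n$. For $n\ge7$ it is \EPM{}, and for $n\in\{4,5,6\}$ it is one of the small-case mechanisms $M_6,M_5,M_4$. Since $n$ is public and no agent can change it, strategyproofness and the approximation bound can be checked separately for each branch.

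\textbf{The main case $n\ge 7$.} I would split the line nodes into odd and even positions and let $F_1$ use only odd nodes and $F_2$ only even nodes. Each facility $F_k$ is placed at the node of its parity class that is closest to a canonical median of $N_k$, say the leftmost median, with a fixed tie-breaking rule. If $N_k=\varnothing$, $F_k$ goes to a fixed node of its class. The two classes are disjoint, so the outcome is always feasible. Strategyproofness then rests on a monotonicity argument:
\begin{itemize}[leftmargin=*]
\item An agent of type $\Both$ who reports a single facility only removes herself from one of $N_1,N_2$.
\item An agent of type $\One$ or $\Two$ can only add herself to the other facility's supporter set, or drop the facility she cares about.
\end{itemize}
In both cases the position of her approved facility $F_k$ depends only on $N_k$. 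Removing a supporter moves the median away from her, or leaves it unchanged, and the rounding to the nearest same-parity node is monotone. So she cannot gain on $F_k$, and she does not care about the other facility. The one point that needs care is showing that the rounding and tie-breaking preserve this monotonicity.

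For the approximation, the parity rounding moves each facility by at most one node from a median. This adds at most $n_k$ to the cost of $F_k$, which is far too much, so I would argue more carefully. If the median interval of $N_k$ contains at least two nodes, it contains a node of each parity, so $F_k$ lands on a true median at no extra cost. If the median is unique, shifting by one node changes the total distance by $|\#\text{left}-\#\text{right}|$, and with the median agent included this is at most $1$. The total loss is therefore at most $2$ additive units over the bound of Lemma~\ref{lem:opt-median}, so $\SC\le \OPT+2$. Lemma~\ref{lem:opt-n-2-general} gives $\OPT\ge n-2\ge5$ for $n\ge7$. To get the stated $\OPT\ge 6$, I would refine this: if $\OPT=5$ then $n_1+n_2=7$, which forces one of $N_1,N_2$ to be empty and the other to be all of $N$. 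In that case only one facility suffers a rounding loss, so the ratio is $(5+1)/5<4/3$. More generally, whenever a facility has at most one supporter its loss is $0$, which handles the degenerate profiles. With $\OPT\ge6$ and a loss of at most $2$, we get $(\OPT+2)/\OPT\le 4/3$.

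\textbf{The small cases $n\in\{4,5,6\}$.} Here \EPM{} can fail because $\OPT$ may be $2$–$5$. I would start from \EPM{} and list the profiles where its ratio exceeds $4/3$; up to symmetry and translation these are the profiles with small $\OPT$ and agents clustered on consecutive nodes. On each of them I would override the outcome with an optimal or near-optimal placement, for example by letting the parity assignment flip depending on the public positions, or by moving one facility onto the median at the cost of displacing the other. I expect this step to be the main obstacle. Every override creates new unilateral deviation edges, and each must be checked in both directions: no agent may profit by deviating into an overridden profile, and no agent at an overridden profile may profit by deviating out of it. The number of approval profiles is finite for fixed positions, at most $3^n\le729$, and the positions matter only through gaps, where gaps $\ge2$ behave like large gaps. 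So I would reduce the problem to finitely many configurations, design overrides that respect the same monotonicity principle, namely that an agent's own facility only moves toward her when she approves it, and verify the remaining cases by an exhaustive case analysis, possibly computer-assisted. Combined with the $n\ge7$ case, this proves the theorem.
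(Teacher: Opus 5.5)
The genuine gap is the case $n\in\{4,5,6\}$, which is where the content of the theorem lies, and for which you give no mechanism at all. \EPM{} really fails there: for $n=4$, $x=(1,2,3,4)$ and $A=(\One,\Both,\Both,\Two)$, it outputs $(1,2)$ with cost $6$ while $\OPT=4$. Since the statement is existential, ``override the bad profiles and verify exhaustively, possibly by computer'' does not prove it.

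The naive version of that plan also fails. Suppose that for $n=6$ you override only the exact split $(\One,\One,\One,\Two,\Two,\Two)$ by putting $F_1$ at $x_2$. Then, with $x_1=1$ and $x_2=2$, agent $1$ can report $\Both$, fall back to \EPM{}, and receive $F_1$ at node $1$.

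The missing idea is to make each override region strictly larger than the set of bad profiles. It must be shaped so that, across its boundary, the deviator's own facility moves only weakly away from her. The paper does this three times:
\begin{itemize}
\item $M_6$ fires whenever agents $1,2,3$ approve $F_a$ and agents $4,5,6$ report exactly $\{b\}$. It fixes $F_a$ at $x_2$ while $F_b$ stays at $\Ppm_b(N_b)$.
\item $M_5$ fires whenever $A_1=A_2=\One$ and $2\in A_4\cap A_5$ (or symmetrically). It keeps $F_1$ at $\Ppm_1(N_1)$ in every branch and only moves $F_2$ to $x_4$. This is harmless because the fallback location of $F_2$ on the other side of the boundary lies in $[x_3,x_4]$.
\item $M_4$ drops \EPM{} altogether when the public middle gap $g=x_3-x_2$ is at most $4$. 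It uses endpoint branches for supports $\{1\}$ or $\{4\}$, and otherwise the cheaper of $(x_2,x_3)$ and $(x_3,x_2)$. This is strategyproof because a singleton agent's contribution to $\SC(O_L)-\SC(O_R)$ already favors her preferred option.
\end{itemize}
Your proposed reduction to finitely many configurations is also wrong as stated: gaps of size at least $2$ do not behave like large gaps. With $x=(2,3,7,8)$ and $A=(\One,\Two,\Two,\Two)$, \EPM{} outputs $(1,6)$ with cost $7$ against $\OPT=5$. So a middle gap of $4$ still breaks the $4/3$ bound, which is exactly why $M_4$ uses the threshold $g\le4$.

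For $n\ge7$ you follow the paper's route: \EPM{}, the additive bound $\SC\le\OPT+2$ of Lemma~\ref{lem:additive}, then $\OPT\ge6$. However, the last step is justified incorrectly.
\begin{itemize}
\item $\OPT=5$ only forces $n_1+n_2=7$, i.e., every agent has a singleton type. It does not force an empty support (take $n_1=4$, $n_2=3$).
\item The loss of a facility with one supporter is not zero. If $N_k=\{i\}$ and $x_i\notin P_k$, it loses one unit.
\end{itemize}
The conclusion survives because $n_1+n_2=7$ has exactly one odd summand, so at most one facility suffers a parity loss. The paper instead invokes Lemma~\ref{lem:opt-SV}, which gives $\OPT\ge 23/4$ and hence $\OPT\ge6$ directly.

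Finally, rounding the \emph{leftmost} median with an arbitrary tie-break can put an even-support facility at $\ell-1\notin\mu(N_k)$, at an extra cost of $2$. Round to the median interval instead (equivalently, take a $P_k$-restricted median). Strategyproofness then follows at once from the exchange argument of Lemma~\ref{lem:cut-mono}, with no need to check monotonicity of the rounding.
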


\noindent 
Our mechanism works differently depending on the number of agents. Its main component is the {\sc Parity-Median} mechanism presented in Section~\ref{sec:line:parity} which appropriately places the two facilities at nodes of different parity and achieves the desired upper bound of $4/3$ when there are $n \geq 7$ agents. For instances with fewer agents, the mechanism overrides the {\sc Parity-Median} mechanism when certain special approval profiles are observed at input. These profiles are different depending on whether $n=6$ (Section~\ref{sec:line:n6}), $n=5$ (Section~\ref{sec:line:n5}), or $n=4$ (Section~\ref{sec:line:n4}). 

Before we present the mechanism and its analysis, we first provide a few additional preliminary definitions and remarks. Since the graph is a line, for any set of agents $S$, the set $\mu(S)$ of nodes that minimize the total distance to the agents of $S$ is a {\em median-interval}: if $|S|$ is odd, it contains the unique median location of $S$; if $|S|$ is even, $\mu(S)$ is the interval between the two median locations (that is, it contains the locations of the two median agents in $S$ and any empty nodes between them). For a node $z$ and an interval $I=[\ell,r]$, let
$d(z,I)=\min_{y\in I}d(z,y)$ be the distance of $z$ from $I$, defined as the distance of $z$ from its closest node in $I$; if $I =\varnothing$, let $d(z,I)=0$. 

Besides the two general lower bounds on the optimal social cost given as Lemma~\ref{lem:opt-median} and Lemma~\ref{lem:opt-n-2-general} in Section~\ref{sec:model}, we will also use the following, slightly stronger lower bound due to \citet{serafino-ventre} that is true for the line graph; see also \cite{Kanellopoulos2023sidma}. 

\begin{lemma}[\citep{serafino-ventre,Kanellopoulos2023sidma}]
\label{lem:opt-SV}
For any approval profile $A$, $\OPT(A)\ge \frac14 (n_1^2 + n_2^2-2)$.
\end{lemma}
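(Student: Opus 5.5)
The plan is to ignore the feasibility constraint $z_1\ne z_2$ entirely and bound each facility's contribution separately from below. We have
\[
\OPT(A)=\min_{z_1\ne z_2}\Big(\sum_{i\in N_1}d(x_i,z_1)+\sum_{i\in N_2}d(x_i,z_2)\Big)\ge \sum_{k\in\{1,2\}}\min_{z\in V}\sum_{i\in N_k}d(x_i,z).
\]
So it suffices to show the following: for any set $S$ of $s$ agents at distinct nodes of the line and any node $z$, $\sum_{i\in S}d(x_i,z)\ge \lfloor s^2/4\rfloor$.

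For this claim I would use the standard pairing argument. Sort the agents of $S$ by position as $y_1<y_2<\dots<y_s$. Because the positions are distinct integers, $y_{s+1-j}-y_j\ge s+1-2j$. Pair the $j$-th leftmost agent with the $j$-th rightmost one, for $j=1,\dots,\lfloor s/2\rfloor$. The triangle inequality gives $d(y_j,z)+d(y_{s+1-j},z)\ge y_{s+1-j}-y_j\ge s+1-2j$. Summing over the pairs, and dropping the middle agent when $s$ is odd, gives $\sum_{j=1}^{\lfloor s/2\rfloor}(s+1-2j)$. A routine computation shows this equals $\lfloor s^2/4\rfloor$: it is $s^2/4$ for even $s$ and $(s^2-1)/4$ for odd $s$.

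Finally, $\lfloor s^2/4\rfloor\ge (s^2-1)/4$ for every integer $s\ge 0$; for $s=0$ the bound is $0\ge -1/4$, which also covers $N_k=\varnothing$. Applying the claim to $S=N_1$ and $S=N_2$ yields $\OPT(A)\ge \frac{n_1^2-1}{4}+\frac{n_2^2-1}{4}=\frac14(n_1^2+n_2^2-2)$.

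No step is a real obstacle. The only point needing care is the observation that distinctness of the nodes forces $y_{s+1-j}-y_j\ge s+1-2j$, which is exactly where the line structure and the distinct-node assumption are used.
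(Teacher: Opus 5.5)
Your proof is correct: the separable relaxation, the pairing bound $y_{s+1-j}-y_j\ge s+1-2j$, the sum $\sum_{j=1}^{\lfloor s/2\rfloor}(s+1-2j)=\lfloor s^2/4\rfloor$, and the final weakening $\lfloor s^2/4\rfloor\ge (s^2-1)/4$ (including $s=0$) all check out.

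The paper does not prove this lemma. It imports it from \citet{serafino-ventre} and \citet{Kanellopoulos2023sidma}, so there is no in-paper argument to match. Your argument is self-contained and uses two ingredients the paper relies on elsewhere:
\begin{itemize}
\item the separable relaxation that drops $z_1\ne z_2$ (Lemma~\ref{lem:opt-median});
\item pairing extreme supporters on the line, which is the idea behind the span bound $x_k-x_i\ge k-i$ for triples in the proof of Theorem~\ref{thm:n5-four-thirds}.
\end{itemize}

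Your version also gives something extra. Before weakening, it establishes the sharper integer bound $\OPT(A)\ge\lfloor n_1^2/4\rfloor+\lfloor n_2^2/4\rfloor$. With this bound, the step in Theorem~\ref{thm:nge7} that passes from $\OPT\ge 23/4$ to $\OPT\ge 6$ via integrality becomes immediate. Over $n_1+n_2\ge 7$ the minimum is attained at $\{n_1,n_2\}=\{4,3\}$, where it equals $\lfloor 16/4\rfloor+\lfloor 9/4\rfloor=6$.
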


\subsection{The {\sc Parity-Median} Mechanism and Instances with at Least Seven Agents} 
\label{sec:line:parity} 
In this section, we present the main component of our $4/3$-approximation bound, the \EPM{} ($\Ppm$) mechanism. Let $P_1$ and $P_2$ be two opposite parity classes of the line; that is, $P_1$ consists of the odd-numbered nodes, and $P_2$ consists of the even-numbered nodes. For each $k \in \{1,2\}$, facility $F_k$ will always be placed on a node in $P_k$, so the locations are always distinct. 
In particular, the mechanism works as follows: If $N_k\ne\varnothing$, then facility $F_k$ is placed at the leftmost node of parity $P_k$ closest to $\mu(N_k)$. Otherwise, if $N_k=\varnothing$, then $F_k$ is placed at an arbitrary fixed node of parity $P_k$. See Algorithm~\ref{alg:parity-median}.

\begin{algorithm}[H]
\caption{\EPM{}}
\label{alg:parity-median}
\begin{algorithmic}[1]
\Require Reported approval profile $A=(A_1,\ldots,A_n)$; fixed opposite parity classes $P_1,P_2$; fixed default nodes $q_k\in P_k$ for $k \in \{1,2\}$
\Ensure A feasible outcome $(z_1,z_2)$
\For{$k\in\{1,2\}$}
    \State $N_k\gets\{i:k\in A_i\}$
    \If{$N_k=\varnothing$}
        \State $z_k\gets q_k$
    \Else
        \State Compute the median interval $\mu(N_k)$ of the locations of agents in $N_k$
        \State $z_k\gets$ the leftmost node of parity $P_k$ closest to $\mu(N_k)$
    \EndIf
\EndFor
\State \Return $(z_1,z_2)$
\end{algorithmic}
\end{algorithm}

We will denote by $\Ppm_k(S)$ the location chosen for facility $F_k$ when the set of agents that approve it is $S$. 
We remark that an interval
$[\ell,r]$ with $\ell<r$ contains a node of $P_k$, while a singleton
$\{\ell\}$ contains one if and only if $\ell\in P_k$. Consequently, if
$\mu(S)=[\ell,r]$ contains a node of $P_k$, then
$\Ppm_k(S)\in\{\ell,\ell+1\}$; otherwise,
$\ell=r\notin P_k$ and $\Ppm_k(S)=\ell-1$, except when $\ell=1$, where
$\Ppm_k(S)=\ell+1$. In particular, $\ell-1\le\Ppm_k(S)\le\ell+1$, and
$\Ppm_k(S)\in\mu(S)$ whenever $\ell<r$. Moreover, the node
$\Ppm_k(S)$ minimizes the total distance from the agents of $S$ over the
nodes of $P_k$; that is, $\Ppm_k$ is a median rule restricted to $P_k$, with
a fixed tie-breaking rule.

The following lemma establishes that the distance of an agent from a facility she truly approves cannot decrease if the agent misreports by deleting her approval for that facility. This will be useful to show that \EPM{} is strategyproof.

\begin{lemma}
\label{lem:parity-sp}
Let $S$ be the set of agents that approve facility $F_k$. For any $i\in S$, 
\[d\left(x_i, \Ppm_k(S)\right) \le d\left(x_i, \Ppm_k(S\setminus\{i\})\right).\]
\end{lemma}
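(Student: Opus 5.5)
We will prove the claim by tracking how the median interval moves when agent $i$ is deleted. Then we use the explicit description of $\Ppm_k$ given in the remark preceding the lemma: $\Ppm_k(S)$ is the node of $P_k$ minimizing total distance to $S$, with ties broken toward the left.

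First we dispose of the degenerate case. If $S=\{i\}$, then $\Ppm_k(S\setminus\{i\})=q_k$. Also $\Ppm_k(\{i\})$ is a node of $P_k$ closest to $x_i$, so $d(x_i,\Ppm_k(S))=\min_{y\in P_k}d(x_i,y)\le d(x_i,q_k)$.

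Now let $|S|\ge 2$ and, by symmetry of the argument (we only have to handle the tie-breaking carefully), suppose $x_i$ lies weakly left of the median interval $\mu(S)=[\ell,r]$. The case where $x_i$ lies inside or to the right is handled analogously. We use the standard monotonicity of medians on the line. If $x_i\le \ell$ (meaning $i$ is among the left half of $S$), removing $i$ shifts the median interval weakly right: $\mu(S\setminus\{i\})=[\ell',r']$ with $\ell'\ge \ell$ and $r'\ge r$. Symmetrically, if $x_i\ge r$, the interval shifts weakly left.

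The cleanest route is the convexity view. Write $f_S(y)=\sum_{j\in S}|x_j-y|$. Then $f_{S\setminus\{i\}}=f_S-|x_i-\cdot|$. Let $z=\Ppm_k(S)$ and $z'=\Ppm_k(S\setminus\{i\})$, and suppose for contradiction that $|x_i-z'|<|x_i-z|$. Since $z$ minimizes $f_S$ over $P_k$, we have $f_S(z)\le f_S(z')$, and therefore
\[
f_{S\setminus\{i\}}(z)=f_S(z)-|x_i-z|< f_S(z')-|x_i-z'|=f_{S\setminus\{i\}}(z').
\]
This contradicts the optimality of $z'$ for $S\setminus\{i\}$ over $P_k$. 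So strict inequality is impossible, which already gives the lemma without any case analysis or tie-breaking concerns.

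The only obstacle is verifying the fact used above: $\Ppm_k(S)$ indeed minimizes $f_S$ over $P_k$, and it does so even when $S\setminus\{i\}$ is empty. The remark before the lemma states the former. We will still justify it briefly: $f_S$ is convex piecewise linear and minimized on $\mu(S)$, so the closest node of $P_k$ to $\mu(S)$ minimizes $f_S$ over $P_k$. The empty case was handled separately above.
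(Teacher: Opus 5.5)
Your exchange argument is the paper's proof: the paper obtains this lemma as the case $D=P_k$ of Lemma~\ref{lem:cut-mono}. There it adds the optimality inequality for $S$ at $\Ppm_k(S)$ to the optimality inequality for $S\setminus\{i\}$ at $\Ppm_k(S\setminus\{i\})$, which is the same computation as your contradiction, and it handles $S=\{i\}$ separately exactly as you do. The median-interval monotonicity paragraph is never used and should be dropped, because as stated it is false when $i$ is the median agent of an odd-size $S$: removing her moves $\ell$ to the left.
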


\begin{proof}
As observed above, $\Ppm_k(S)$ is a $P_k$-restricted median of $S$ with fixed
tie-breaking. The claim is therefore the special case $D=P_k$ of
Lemma~\ref{lem:cut-mono}, which we prove in Section~\ref{sec:graph-large} for
median rules restricted to an arbitrary fixed set of nodes $D$; its proof is
self-contained and does not use any material from this section.
\end{proof}

We can now show that \EPM{} is strategyproof. 

\begin{corollary}
\label{cor:epm-sp}
The \EPM{} mechanism is strategyproof.
\end{corollary}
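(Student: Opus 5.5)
The plan is to use the separability of \EPM{}. The location $z_k$ depends only on the set $N_k$ of agents approving $F_k$, through $\Ppm_k(N_k)$, or through the default node $q_k$ when $N_k=\varnothing$. An agent's cost is likewise a sum of separate terms, one per facility. Fix an agent $i$ with true approval set $A_i$, and a misreport $A_i'\neq A_i$ with all other reports fixed. Write $N_k$ and $N_k'$ for the supporter sets under truthful reporting and under the misreport. For each $k$, the two sets differ at most in whether they contain $i$, since $N_k\setminus\{i\}=N_k'\setminus\{i\}$. I will show that every term $d(x_i,z_k)$ with $k\in A_i$ is no larger under the truthful report than under the misreport. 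Terms with $k\notin A_i$ do not enter $i$'s true cost, so they can be ignored.

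Take $k\in A_i$, so $i\in N_k$. There are two cases.

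\emph{Case 1: $k\in A_i'$.} Then $N_k'=N_k$, so $z_k$ does not change.

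\emph{Case 2: $k\notin A_i'$.} Then $N_k'=N_k\setminus\{i\}$. If this set is non-empty, Lemma~\ref{lem:parity-sp} applied with $S=N_k$ gives $d(x_i,\Ppm_k(N_k))\le d(x_i,\Ppm_k(N_k\setminus\{i\}))$. If instead $N_k\setminus\{i\}=\varnothing$, then $N_k=\{i\}$, and the median interval of $\{i\}$ is $\{x_i\}$. So $\Ppm_k(\{i\})$ is the node of $P_k$ closest to $x_i$ (leftmost in case of ties). Its distance from $x_i$ is therefore at most $d(x_i,q_k)$, because $q_k\in P_k$.

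The only step needing care is this degenerate case, where deleting the approval empties $N_k$ and the mechanism falls back to $q_k$. Lemma~\ref{lem:parity-sp} may not cover it, since it compares with $\Ppm_k(\varnothing)$; the direct argument above handles it separately.

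Summing the two facility terms over $k\in A_i$ then gives $\cost_i(\Ppm(A)\mid A_i)\le \cost_i(\Ppm(A')\mid A_i)$. Hence no agent can gain by misreporting, and \EPM{} is strategyproof.
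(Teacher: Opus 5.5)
Your proof is correct and follows the paper's argument: costs separate across facilities, a facility's location is unchanged when the agent keeps her approval for it, and Lemma~\ref{lem:parity-sp} applies when she deletes it. Your separate handling of $N_k=\{i\}$ is valid but not needed, because the paper proves Lemma~\ref{lem:parity-sp} via Lemma~\ref{lem:cut-mono}, which already covers this case through the convention that $g_D(\varnothing)$ is the default node.
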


\begin{proof}
The two parity classes are fixed, and thus no misreport can change the parity assignment of either facility or create a collision between them.  Moreover, the location of each facility depends only on the set of agents that approve it. Consider an agent and a facility she truly approves.  Deleting the approval for that facility cannot move it strictly closer to her by Lemma~\ref{lem:parity-sp}. Adding an approval for a facility she does not truly approve affects only a term outside her true cost. Since costs are additive over approved facilities, no deviation is profitable.
\end{proof}

The following lemma establishes an upper bound on the social cost of the outcome computed by \EPM{} compared to the optimum, and will be used repeatedly in the analysis of the mechanisms in the next sections.

\begin{lemma}
\label{lem:additive} 
For any $n\geq 4$, $\SC(\EPM{})\le  \OPT+2$.
\end{lemma}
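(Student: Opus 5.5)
The plan is to compare \EPM{} facility by facility with the separable median lower bound of Lemma~\ref{lem:opt-median}. Since the social cost splits as $\sum_{i\in N_1}d(x_i,z_1)+\sum_{i\in N_2}d(x_i,z_2)$, it suffices to show the following for each $k\in\{1,2\}$ with $N_k\neq\varnothing$:
\[
\sum_{i\in N_k} d\bigl(x_i,\Ppm_k(N_k)\bigr)\;\le\;\sum_{i\in N_k} d(x_i,v_k)+1, \qquad v_k\in\mu(N_k).
\]
If $N_k=\varnothing$, the $F_k$-term is $0$ on both sides. Summing over $k$ and invoking Lemma~\ref{lem:opt-median} then gives $\SC(\EPM{})\le \OPT+2$.

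To prove the per-facility inequality, I would use the structural remark made right after Algorithm~\ref{alg:parity-median}. Write $\mu(N_k)=[\ell,r]$ and split into two cases.

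\emph{Case $\ell<r$.} Then $\Ppm_k(N_k)\in\mu(N_k)$, so the facility sits at a true median and the loss is $0$.

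\emph{Case $\ell=r$.} Agents occupy distinct nodes, so an even-size set always has two distinct median agents and hence $\ell<r$. Thus this case forces $|N_k|=2t+1$ to be odd, with $\ell$ the location of the middle agent. If $\ell\in P_k$, the loss is again $0$. Otherwise $\Ppm_k(N_k)$ is $\ell-1$, or $\ell+1$ when $\ell=1$.

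It remains to bound the cost of moving one step from the unique median. Consider the step to $\ell-1$. The $t+1$ agents at nodes $\ge\ell$ each gain one unit of distance. The $t$ agents at nodes $<\ell$, which are all at nodes $\le\ell-1$, each lose exactly one unit. The net change is therefore $+1$. The step to $\ell+1$ when $\ell=1$ is symmetric, and in fact then $t=0$, since no agent lies left of node $1$. In every case the loss is at most $1$, which establishes the per-facility inequality.

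The only delicate point is the exact accounting in the one-step move: I need that every agent strictly left of $\ell$ is at distance at least $1$ from $\ell$, so that each genuinely loses a full unit. This holds on the line because locations are integers. The hypothesis $n\ge4$ does not appear to be needed for this argument. I would note that, and keep it only for consistency with how the lemma is later applied.
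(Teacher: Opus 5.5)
Your proof is correct and follows essentially the same route as the paper: bound each facility's cost against the separable median bound of Lemma~\ref{lem:opt-median}, observe that a non-singleton (equivalently, even-size) median interval contains a node of $P_k$ so there is no loss, and for an odd-size support show that stepping off the unique median costs exactly one unit, which you verify by counting $t+1$ versus $t$ agents where the paper pairs agents on opposite sides. Your remark that the hypothesis $n\ge4$ is never used is also accurate, since the paper's argument does not use it either.
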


\begin{proof}
If $|N_k|$ is even, then $\mu(N_k)$ contains both parity classes, and \EPM{} optimally places $F_k$ inside $\mu(N_k)$.  
On the other hand, if $|N_k|$ is odd, then $\mu(N_k)$ is a singleton. Moving one step to the prescribed parity, if necessary, increases the total distance by exactly one: non-median agents that approve $F_k$ can be paired on opposite sides of the median, and only the median pays the extra unit. Thus, by Lemma~\ref{lem:opt-median}, we have 
\begin{align*}
    \SC(\EPM{}) \leq \sum_{k \in \{1,2\}} \left( \sum_{i \in N_k} d(x_i,\mu(N_k)) + 1 \right) \leq \OPT + 2, 
\end{align*}
and the proof is complete.
\end{proof}

Using Lemma~\ref{lem:additive} and the lower bounds on the optimal social cost (Lemmas~\ref{lem:opt-SV} and~\ref{lem:opt-n-2-general}), we obtain the following result, which provides a first improvement for any $n \geq 4$, and settles the approximation for $n \geq 7$. 

\begin{theorem}
\label{thm:nge7}
The approximation ratio of {\normalfont \EPM{}} is at most $2$ for any $n \ge 4$, and at most $4/3$ for any $n \ge 7$.
\end{theorem}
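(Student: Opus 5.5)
The plan is to combine the additive guarantee of Lemma~\ref{lem:additive}, $\SC(\EPM{})\le\OPT+2$, with lower bounds on $\OPT$. The ratio is at most $1+2/\OPT$. Hence it suffices to show $\OPT\ge 2$ for the $2$-approximation and $\OPT\ge 6$ for the $4/3$-approximation. There is one degenerate caveat: if $\OPT$ is small, the additive bound must be refined. In particular, when $\OPT=0$ the ratio must be checked directly.

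\textbf{Ratio $2$ for $n\ge4$.} By Lemma~\ref{lem:opt-n-2-general}, $\OPT\ge n-2\ge 2$ for $n\ge 4$, so $\SC(\EPM{})\le \OPT+2\le 2\,\OPT$. No special cases arise, since $\OPT>0$ always holds here.

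\textbf{Ratio $4/3$ for $n\ge7$.} We need $\OPT\ge 6$, and I will get it from Lemma~\ref{lem:opt-SV}, $\OPT\ge\frac14(n_1^2+n_2^2-2)$, together with $n_1+n_2\ge n\ge 7$. By convexity, $n_1^2+n_2^2\ge (n_1+n_2)^2/2\ge 49/2$. This only gives $\OPT\ge \frac14(24.5-2)=5.625$, and integrality of $\OPT$ upgrades this to $\OPT\ge 6$. To be safe I will redo this with integers. Since $n_1+n_2\ge 7$ and both are integers, the minimum of $n_1^2+n_2^2$ is $3^2+4^2=25$, so $\OPT\ge 23/4$ and hence $\OPT\ge 6$ because the social cost is an integer. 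Then $\SC(\EPM{})\le\OPT+2\le \frac43\OPT$, since $2\le \OPT/3$ holds exactly when $\OPT\ge6$.

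The main obstacle is this small slack: the real bound $23/4$ falls just short of $6$, so the argument relies on integrality of distances on the line graph. I would state explicitly that all distances, and hence $\OPT$, are integers. If one prefers not to rely on Lemma~\ref{lem:opt-SV} as a black box, the fallback is a direct argument: with $n_k$ distinct agent positions, the median cost for $F_k$ is at least $\lfloor n_k^2/4\rfloor$, and summing gives at least $\lfloor 9/4\rfloor+\lfloor16/4\rfloor=6$ in the extremal split. This also confirms the bound via Lemma~\ref{lem:opt-median}.
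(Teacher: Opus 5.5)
Your proposal is correct and follows the paper's proof essentially verbatim: Lemma~\ref{lem:opt-n-2-general} gives $\OPT\ge 2$ for the factor $2$, and Lemma~\ref{lem:opt-SV} with the minimizing split $(4,3)$ gives $\OPT\ge 23/4$, hence $\OPT\ge 6$ by integrality, so $\OPT+2\le\frac43\OPT$. Your fallback via $\lfloor n_k^2/4\rfloor$ and Lemma~\ref{lem:opt-median} is also valid, though not needed.
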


\begin{proof}
For $n \geq 4$, by Lemma~\ref{lem:opt-n-2-general}, we have that $\OPT \geq 2$. Consequently, Lemma~\ref{lem:additive} implies that $\SC(\EPM{}\mid A)\le \OPT+2 \leq 2 \cdot \OPT$.

When $n \geq 7$, using Lemma~\ref{lem:opt-SV}, together with $n_1+n_2\geq n$, we obtain that $\OPT\geq 6$. In particular, observe that the minimum of $n_1^2+n_2^2$ occurs when $n_1$ and $n_2$ are as balanced as possible; so, when $n_1+n_2\geq n \geq 7$, this happens when $n_1=4$ and $n_2=3$ up to symmetry. Therefore, $n_1^2+n_2^2 \geq 4^2+3^2 = 25$, and Lemma~\ref{lem:opt-SV} gives us that $\OPT \geq 23/4 = 5.75$. Since the social cost is always an integer, this implies that it must be $\OPT \geq 6$. Given this, Lemma~\ref{lem:additive} implies that $\SC(\EPM{}\mid A)\le \OPT+2 \leq \frac43 \cdot \OPT$.
\end{proof}

\subsection{Instances with Six Agents}
\label{sec:line:n6}
By Lemma~\ref{lem:additive}, \EPM{} leads to a $4/3$-approximation whenever the optimal social cost is at least $6$, which is always true for instances with at least $7$ agents (see Theorem~\ref{thm:nge7}). For $6$ agents, the only instances where the optimal social cost is smaller than $6$ and \EPM{} might make non-optimal decisions are the ones consisting of a {\em contiguous singleton split}, up to relabeling: one facility is approved by the first three agents, and the other facility is approved by the remaining three agents. 
Here, we introduce mechanism $M_6$ which overrides \EPM{} for instances with this structure, where the mechanism anchors the left block at its middle agent and keeps the other facility under the parity-median rule. To preserve strategyproofness, this override is not done only for low-optimum instances with a contiguous singleton split structure, but for a slightly broader class of instances. See Algorithm~\ref{alg:m6}.

\begin{algorithm}[H]
\caption{The six-agent mechanism $M_6$}
\label{alg:m6}
\begin{algorithmic}[1]
\Require Reported approval profile $A=(A_1,\ldots,A_6)$
\Ensure A feasible outcome $(z_1,z_2)$
\State Compute $N_1$ and $N_2$
\ForAll{$(a,b)\in\{(1,2),(2,1)\}$}
    \If{$a\in A_i$ for $i\in\{1,2,3\}$ and $A_i=\{b\}$ for $i\in\{4,5,6\}$}
        \State $z_a\gets x_2$\label{line:m6-correct-a}
        \State $z_b\gets \Ppm_b(N_b)$\label{line:m6-correct-b}
        \State \Return $(z_1,z_2)$\label{line:m6-return}
    \EndIf
\EndFor
\State \Return \EPM{}$(A)$\label{line:m6-fallback}
\end{algorithmic}
\end{algorithm}

We first argue that the outcome computed by $M_6$ is well-defined, that is, the two facilities are placed at different nodes. 

\begin{lemma}
\label{lem:m6-well-defined}
Mechanism $M_6$ outputs a feasible outcome. 
\end{lemma}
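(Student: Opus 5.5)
The fallback branch on line~\ref{line:m6-fallback} returns the outcome of \EPM{}, which is always feasible because $F_1$ is placed in $P_1$ and $F_2$ in $P_2$, and these classes are disjoint. It therefore suffices to treat the override branch. Fix $(a,b)$ with $a\in A_i$ for $i\in\{1,2,3\}$ and $A_i=\{b\}$ for $i\in\{4,5,6\}$. The mechanism sets $z_a=x_2$ and $z_b=\Ppm_b(N_b)$, and we must show $\Ppm_b(N_b)\neq x_2$. I would also note that at most one pair $(a,b)$ can trigger. If both did, agents $4,5,6$ would approve only $b$ and also approve $a=b'$, a contradiction. So the returned outcome is unambiguous.

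The key step is to locate $\Ppm_b(N_b)$ to the right of $x_2$. We have $\{4,5,6\}\subseteq N_b$, and at most agents $1,2,3$ may additionally approve $b$. Since the agents are indexed left to right at distinct nodes, $x_1<x_2<\cdots<x_6$. I would split on $|N_b|\in\{3,4,5,6\}$ and compute the left endpoint $\ell$ of $\mu(N_b)$ in each case:
\begin{itemize}
\item $N_b=\{4,5,6\}$ gives $\ell=x_5$.
\item $|N_b|=4$ gives $\mu=[x_4,x_5]$.
\item $|N_b|=5$ gives $\ell=x_4$.
\item $N_b=\{1,\dots,6\}$ gives $\mu=[x_3,x_4]$.
\end{itemize}
In every case $\ell\ge x_3\ge x_2+1$.

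By the remark following Algorithm~\ref{alg:parity-median}, $\Ppm_b(N_b)\ge \ell-1$. When $\ell\ge x_4\ge x_2+2$, this gives $\Ppm_b(N_b)\ge x_2+1$, so the location differs from $x_2$.

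The main obstacle is the case $N_b=\{1,\dots,6\}$, where $\ell=x_3$ could equal $x_2+1$. Here, however, $\mu(N_b)=[x_3,x_4]$ has $\ell<r$, so the same remark gives $\Ppm_b(N_b)\in\mu(N_b)$. Hence $\Ppm_b(N_b)\ge x_3>x_2$, and the two facilities are again at distinct nodes, which completes the argument.
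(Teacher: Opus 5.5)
Your proposal is correct and follows essentially the same route as the paper: it enumerates the median interval of $N_b$ by $|N_b|\in\{3,4,5,6\}$ and concludes that $\Ppm_b(N_b)$ lies strictly to the right of $x_2$. Your separate handling of the possible one-step leftward rounding ($\ell\ge x_4\ge x_2+2$ in the first three cases, and $\ell<r$ forcing $\Ppm_b(N_b)\in\mu(N_b)$ in the last) simply spells out what the paper compresses into the single assertion $\Ppm_b(N_b)\ge x_3$.
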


\begin{proof}
The outcome is clearly feasible in the case where the output of $M_6$ coincides with that of \EPM{}. Consider now the special case where $a\in A_i$ for $i\in\{1,2,3\}$ and $A_i=\{b\}$ for $i\in\{4,5,6\}$. Here, $N_b$ contains agents $4, 5, 6$ and possibly some of the agents $1,2,3$. Hence, the median interval $\mu(N_b)$ of $N_b$ is $\{x_5\}$, $[x_4,x_5]$, $\{x_4\}$, or $[x_3,x_4]$, according to whether $|N_b|$ is $3$, $4$, $5$, or $6$. In any case $\Ppm_b(N_b)\ge x_3>x_2$, and thus the locations of the two facilities are distinct.
\end{proof}

We now argue that $M_6$ is strategyproof and then prove it achieves an approximation ratio of $4/3$. 

\begin{theorem}
\label{lem:m6-sp}
Mechanism $M_6$ is strategyproof.
\end{theorem}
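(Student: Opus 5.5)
The plan is a case analysis on whether the truthful profile $A$ and the deviation profile $A'$ (differing only in agent $i$'s report) fall into the override branch of $M_6$. Throughout, agents are indexed left to right, $x_1<\dots<x_6$. Write $S_{a,b}$ for the set of profiles in which every agent in $\{1,2,3\}$ approves $a$ and every agent in $\{4,5,6\}$ reports exactly $\{b\}$.

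Two structural facts will be used repeatedly.
\begin{enumerate}[label=(\roman*)]
\item A single deviation can never move a profile from $S_{a,b}$ to $S_{b,a}$. Doing so would require agents $4,5,6$ to switch from $\{b\}$ to $\{a\}$ and agents $1,2,3$ to start approving $b$, which takes more than one agent.
\item In every branch of $M_6$, facility $F_b$ is placed at $\Ppm_b(N_b)$. The only difference between the override branch and the fallback \EPM{} is where $F_a$ goes.
\end{enumerate}
Since costs are additive, it suffices to check each facility the deviator truly approves separately. Any facility she does not truly approve is irrelevant to her cost.

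\textbf{Both $A$ and $A'$ lie in $S_{a,b}$.} Then $z_a=x_2$ in both profiles. The deviator must belong to $\{1,2,3\}$ and can only toggle her approval of $b$. If she truly approves $b$ and drops it, Lemma~\ref{lem:parity-sp} shows she gains nothing. If she does not truly approve $b$, adding it is irrelevant to her cost.

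\textbf{Neither $A$ nor $A'$ lies in an override set.} Both outcomes come from \EPM{}, and Corollary~\ref{cor:epm-sp} applies.

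\textbf{$A\in S_{a,b}$ and $A'$ lies in no override set, so $M_6(A')=\EPM{}(A')$.}
\begin{itemize}
\item If $i\in\{4,5,6\}$, her true preference is $\{b\}$. By fact (ii), $z_b'$ either coincides with $z_b$ or equals $\Ppm_b(N_b\setminus\{i\})$, and Lemma~\ref{lem:parity-sp} gives no gain.
\item If $i\in\{1,2,3\}$, she must drop $a$ and report $\{b\}$. Her $b$-term is unchanged or irrelevant. The new location of $F_a$ is $\Ppm_a$ applied to the other two agents of $\{1,2,3\}$.
  \begin{itemize}
  \item For $i=2$, her original $a$-distance is $0$, so she cannot gain.
  \item For $i=1$, the median interval $[x_2,x_3]$ is non-degenerate, so $\Ppm_a\in[x_2,x_3]$. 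Hence its distance from $x_1$ is at least $d(x_1,x_2)$.
  \item For $i=3$, symmetrically $\Ppm_a\in[x_1,x_2]$, so its distance from $x_3$ is at least $d(x_3,x_2)$.
  \end{itemize}
\end{itemize}

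\textbf{$A$ lies in no override set and $A'\in S_{a,b}$.} This is the converse direction.
\begin{itemize}
\item If $i\in\{1,2,3\}$, then $A_i=\{b\}$ and she adds $a$. This is irrelevant to her cost, and her $b$-term is handled by fact (ii) together with Lemma~\ref{lem:parity-sp}.
\item If $i\in\{4,5,6\}$ truly approves $a$, she reports $\{b\}$ and moves $F_a$ from $\Ppm_a(N_a)$ to $x_2$. Her $b$-term is handled as above.
\end{itemize}

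The main step, which I expect to be the only real obstacle, is showing $x_2\le\Ppm_a(N_a)\le x_i$ in this last sub-case. Once this holds, the truthful location is at least as close to $x_i$ as $x_2$. Here $N_a\supseteq\{1,2,3,i\}$ and $|N_a|\in\{4,5,6\}$, and I will check the three sizes.
\begin{itemize}
\item $|N_a|=4$: $\mu(N_a)=[x_2,x_3]$, so $\Ppm_a\in[x_2,x_3]\subseteq[x_2,x_i]$.
\item $|N_a|=5$: the median is $x_3$. Then $\Ppm_a\in[x_3-1,x_3+1]$, and $x_2\le x_3-1$ and $x_3+1\le x_i$ because the agents occupy distinct integer nodes.
\item $|N_a|=6$: $\mu(N_a)=[x_3,x_4]$ is non-degenerate, so $\Ppm_a\in[x_3,x_4]$, and $x_4\le x_i$ since $i\ge4$.
\end{itemize}

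Finally, the case where $A$ or $A'$ lies in $S_{b,a}$ is the same argument with the roles of the facilities exchanged. Combined with fact (i), this covers all cases.
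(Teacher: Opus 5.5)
Your proposal is correct and follows essentially the same case analysis as the paper: override/fallback transitions, Lemma~\ref{lem:parity-sp} for $F_b$, and the median-interval argument for $F_a$. Your fact (i) makes explicit a point the paper leaves implicit. One simplification: in your last sub-case, the other two agents of $\{4,5,6\}$ must report $\{b\}$ in $A$ as well, so $N_a=\{1,2,3,i\}$ exactly. Only your $|N_a|=4$ check (giving $\mu(N_a)=[x_2,x_3]$) is needed, which is precisely the paper's one-line argument.
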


\begin{proof}
Due to symmetry, it suffices to consider the case where $(a,b)=(1,2)$. Then, the override happens when we have an instance with a special structure such that $1 \in A_1, A_2, A_3$, and $A_4 = A_5 = A_6 = \Two$. The outcome for such an instance is to place $F_1$ at $x_2$ and $F_2$ at $\Ppm_2(N_2)$.
First, suppose that such an instance is the truthful one, and consider the possible deviations:
\begin{itemize}
    \item Agents $\{1,2,3\}$. Recall that $1 \in A_1, A_2, A_3$. 
If such an agent $i$ changes only her approval of $F_2$ (adding if $2 \not\in A_i$, or deleting if $2 \in A_i$), there is no change in the location of $F_1$ which is still placed at $x_2$. If $i$ truly approves $F_2$, then deleting her approval of $F_2$ cannot move $F_2$ closer by Lemma~\ref{lem:parity-sp}.  
So, suppose instead that one of them drops approval for $F_1$ and reports $\Two$.
This leads to an instance without the special structure that is handled by \EPM{}. 
Agent $2$ gains nothing this way, as her $F_1$-cost in the special instance was already zero. 
If the deviator is agent $1$, then $\mu(N_1 \setminus\{1\})=[x_2,x_3]$; this interval contains nodes of both parities, and thus the new location of $F_1$ lies in $[x_2,x_3]$, no closer to $x_1$ than $x_2$ was.  
If the deviator is agent $3$, then, similarly, $F_1$ is placed in $\mu(N_1 \setminus\{3\}) = [x_1,x_2]$, no closer to $x_3$ than before. 
The $F_2$-term of $i$'s true cost is also unaffected. A deviator who truly approves $F_2$ must have reported $\Both$, so switching to $\Two$ leaves $N_2$ unchanged and thus $F_2$ is again placed at $\Ppm_2(N_2)$. Finally, a deviator who does not approve $F_2$ is indifferent to its location.

\item
Agents $\{4,5,6\}$. Recall that $A_4 = A_5 = A_6 = \Two$. 
If such an agent $i$ changes to reporting $\Both$, then the instance changes into one that is handled by \EPM{}, but since $N_2$ remains unchanged, $F_2$ is again placed at $\Ppm_2(N_2)$, and the cost of $i$ does not decrease. 
If instead $i$ changes to reporting $\One$, then $i$ is deleted from $N_2$, and Lemma~\ref{lem:parity-sp} prevents $F_2$ from moving closer to $i$. 
Therefore, in any case, no agent in the right block can benefit. 
\end{itemize}

Now consider an instance without the special structure (that is handled by \EPM{}) such that an agent deviates and leads to an instance with the special structure. We again consider the possible deviations:
\begin{itemize}
    \item Agents $\{1,2,3\}$. For a deviation of such an agent $i$ to lead to the special instance, it must be the case that $1 \not\in A_i$, and thus $A_i = \Two$. If $i$ changes to reporting $\Both$, then $N_2$ remains unchanged and $F_2$ is again placed at $\Ppm_2(N_2)$.
    If $i$ changes to reporting $\One$, then $i$ is deleted from $N_2$, and Lemma~\ref{lem:parity-sp} prevents her truly-approved facility $F_2$ from moving closer. So, there is no profitable deviation for $i$.

    \item Agents $\{4,5,6\}$. For a deviation of such an agent $i$ to lead to the special instance, it must be the case that $i$ misreports her approval set as $\Two$. If it is truly the case that $1 \in A_i$, then $N_1=\{1,2,3,i\}$ and, since $\mu(N_1) = [x_2,x_3]$, the \EPM{} location of $F_1$ lies in $[x_2,x_3]$. After the deviation, the override places $F_1$ at $x_2$, which is weakly farther from $i$.  If $i$ also truly approves $F_2$, then $N_2$ remains unchanged after the deviation, and the location of $F_2$ remains $\Ppm_2(N_2)$. So, overall, $i$ cannot decrease her cost. 
\end{itemize}

All remaining deviations are between instances that are handled by the \EPM{} mechanism, and are thus not profitable by Lemma~\ref{lem:parity-sp}. Therefore $M_6$ is strategyproof.
\end{proof}

\begin{theorem}
\label{thm:n6-four-thirds}
The approximation ratio of $M_6$ is at most $4/3$.
\end{theorem}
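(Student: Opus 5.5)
We split according to which branch of $M_6$ produces the output. Suppose first that the output is \EPM{}$(A)$. Lemma~\ref{lem:additive} gives $\SC\le\OPT+2$, which is at most $\frac43\OPT$ whenever $\OPT\ge 6$. So it is enough to treat fallback instances with $\OPT\le 5$.

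For these we use Lemma~\ref{lem:opt-SV} with $n_1+n_2\ge 6$. If $(n_1,n_2)$ is not $(3,3)$ up to symmetry, then $n_1^2+n_2^2\ge 4^2+2^2=20$ or $n_1+n_2\ge 7$. This gives $\OPT\ge \lceil 18/4\rceil=5$, or $\OPT\ge 6$ in the second case. We then refine the additive loss of Lemma~\ref{lem:additive}: the extra unit for $F_k$ is paid only when $n_k$ is odd and the singleton median has the wrong parity. In the split $(4,2)$ both counts are even, so \EPM{} is optimal. In the split $(5,1)$ the single supporter of one facility still costs exactly one unit extra. Here $\OPT\ge 6$ by a direct count of the five-agent median cost ($\ge 2+1+\dots$) plus one collision unit. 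If that count fails, we instead argue directly that the loss is at most one unit, and $\OPT+1\le\frac43\OPT$ holds for $\OPT\ge3$.

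The remaining fallback case is $n_1=n_2=3$ with every agent approving exactly one facility. Since each non-median supporter is at distance at least $1$, $\OPT\le 5$ forces a near-contiguous structure. The separable median cost is $\ge 2$ per facility, with equality iff the block of three is contiguous. If both blocks are interleaved rather than forming the contiguous split $\{1,2,3\}$ versus $\{4,5,6\}$, the median costs sum to at least $6$, or the optimum is already large enough. I expect this step to be the main obstacle: enumerating the $3$--$3$ configurations with $\OPT\le5$ and confirming that every one of them is either the contiguous split handled by the override or still satisfies $\SC\le\frac43\OPT$ under \EPM{}, for example because only one parity correction is needed.

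Now suppose the override fires, say with $(a,b)=(1,2)$, so $F_1=x_2$ and $F_2=\Ppm_2(N_2)$. Agents $1,2,3$ approve $F_1$, so $x_2$ is a true median of $N_1$ and the $F_1$ part is optimal. Lemma~\ref{lem:m6-well-defined} shows $F_2$ lies at or right of $x_3$, so the mechanism loses at most one unit in total: $\SC\le$ (separable median bound)$+1\le\OPT+1$. By Lemma~\ref{lem:opt-n-2-general}, $\OPT\ge n_1+n_2-2\ge 4$, and therefore $\OPT+1\le\frac43\OPT$ since $\OPT\ge3$. If $F_2$ happens to coincide with a median of $N_2$, the mechanism is exactly optimal. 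Combining both branches yields ratio at most $4/3$.
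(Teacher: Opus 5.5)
Your decomposition is the same as the paper's. In the override branch, $F_a$ sits at the true median $x_2$ of $N_a=\{1,2,3\}$, $F_b$ loses at most one parity unit, and $\OPT\ge4$. In the fallback you reduce to $n_1+n_2=6$, dispose of the even splits and of $(5,1)$, and note that the contiguous $3$--$3$ split is exactly what the override catches.

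\textbf{The gap.} The decisive $(3,3)$ step is asserted and then explicitly left open (``I expect this step to be the main obstacle''), with a hedge about an escape via ``only one parity correction''. Your per-facility bound of $2$ cannot settle this step, since it only yields $\OPT\ge4$. The missing ingredient is the index-span bound. For a triple $T=\{i<j<k\}$, its one-facility cost is $(x_j-x_i)+(x_k-x_j)=x_k-x_i\ge k-i$. Now take any partition of $\{1,\dots,6\}$ into two triples other than $\{1,2,3\},\{4,5,6\}$.
\begin{itemize}
\item If $1$ and $6$ lie in the same triple, the spans sum to at least $5+2$.
\item Otherwise, let $m$ be the largest index in the triple containing $1$, and $m'$ the smallest index in the triple containing $6$. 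Then $m\ge3$ with equality only for $\{1,2,3\}$, and $m'\le4$ with equality only for $\{4,5,6\}$. So non-contiguity forces $m\ge4$ and $m'\le3$, and the total span is $(m-1)+(6-m')\ge6$.
\end{itemize}
By Lemma~\ref{lem:opt-median}, every fallback $(3,3)$ instance therefore has $\OPT\ge6$, and Lemma~\ref{lem:additive} finishes. No enumeration or parity escape is needed. The paper relies on exactly this fact.

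\textbf{The $(5,1)$ case.} Your conclusion here is right, but the justification is not. There is no ``collision unit'' in general: for $N_k=\{1,\dots,5\}$ and $N_{3-k}=\{6\}$, both facilities can be placed at their separate medians. Your fallback claim that the loss is then at most one unit is false, because both counts are odd and two parity losses can occur. For instance, $x_i=i+1$ with $N_1=\{1,\dots,5\}$ and $N_2=\{6\}$ gives $\SC=8=\frac43\OPT$. Simply apply Lemma~\ref{lem:opt-SV}: $\OPT\ge(25+1-2)/4=6$. Equivalently, $(x_5-x_1)+(x_4-x_2)\ge6$ for the five supporters. The omitted split $(6,0)$ is likewise immediate, since both counts are even.

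\textbf{The override branch.} Citing Lemma~\ref{lem:m6-well-defined} is beside the point there. The one-unit bound comes from $N_a$ being exactly $\{1,2,3\}$ together with $\Ppm_b$ being within one unit of the unrestricted median cost of $N_b$.
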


\begin{proof}
We consider each of the two cases of $M_6$ separately. Suppose first that the input instance has the special structure that leads to $(z_a = x_2, z_b = \Ppm_b(N_b))$. Then, facility $F_a$ is placed at $x_2$, the median of the left block $\{1,2,3\}$. Since none of the three rightmost agents approves $F_a$, $F_a$ is placed at an optimal location. Facility $F_b$ is placed according to the \EPM{} mechanism, and thus at most one unit is lost relative to its one-facility optimum. Overall, we have that $\SC(M_6)\le \OPT+1$. By Lemma~\ref{lem:opt-n-2-general}, for $n\geq 6$, we also have that $\OPT\ge4$. Together, these imply that $\SC(M_6) \le(4/3)\OPT$.

Next, consider any other instance where $M_6$ outputs exactly the same outcome as the \EPM{} mechanism.
By Lemma~\ref{lem:opt-SV}, we have that if $n_1+n_2\geq 7$, then $\OPT\geq 6$ and then, by Lemma~\ref{lem:additive}, we get that the ratio is at most $4/3$.  This leaves us with singleton-only approval profiles, where $n_1+n_2=6$, which we consider below.

If $n_1$ and $n_2$ are even, then \EPM{} is in fact optimal as it places each facility at median nodes. If $(n_1, n_2)\in \{(1, 5), (5,1)\}$, then again by Lemma~\ref{lem:opt-SV} we get $\OPT\geq 6$ and, consequently, a ratio of at most $4/3$; the only possible approval profile, then, is $(n_1, n_2) = (3,3)$. 
If the two triples are not contiguous in the order of agents, then the sum of their one-facility median costs is at least $6$, and thus $\OPT\ge6$, leading to an approximation ratio of at most $4/3$. Hence the only possible split is contiguous: $\{1,2,3\}$ against $\{4,5,6\}$, up to swapping facility labels. But this is exactly the special contiguous singleton split structure, contradicting that the mechanism works exactly as the \EPM{} mechanism. 
\end{proof}

\subsection{Instances with Five Agents}
\label{sec:line:n5}
For five agents, the only possible obstruction to the $4/3$ ratio for \EPM{} occurs when there are four singleton-approval agents and one agent that approves both facilities; this means that $n_1+n_2=6$. Recall that, when $n_1+n_2\geq 7$, we have already established that \EPM{} is $4/3$-approximate. When $n_1+n_2=5$, we have that either $n_1$ or $n_2$ is even, and \EPM{} suffers at most one parity loss; thus, since $\OPT\geq 3$ (due to Lemma~\ref{lem:opt-n-2-general}), it is again $4/3$-approximate.

Mechanism $M_5$ handles a broader class of approval profiles carefully, and otherwise coincides with \EPM{}. In particular, $M_5$ handles differently two symmetric cases: 
\begin{itemize}
\item  If $A_1=A_2=\One$ and $2 \in A_4,A_5$, it outputs \(M_5(A)=\bigl(\Ppm_1(N_1),x_4\bigr)\); 
\item If $A_1=A_2=\Two$ and $1 \in A_4, A_5$, it outputs \(M_5(A)=\bigl(x_4,\Ppm_2(N_2)\bigr)\).
\end{itemize}
See Algorithm~\ref{alg:m5}.

\begin{algorithm}[H]
\caption{The five-agent mechanism $M_5$}
\label{alg:m5}
\begin{algorithmic}[1]
\Require Reported approval profile $A=(A_1,\ldots,A_5)$
\Ensure A feasible outcome $(z_1,z_2)$
\State Compute $N_1$ and $N_2$
\If{$A_1=A_2=\One$ and $2\in A_4$ and $2\in A_5$}
    \State \Return $(\Ppm_1(N_1),x_4)$\label{line:m5-first}
\EndIf
\If{$A_1=A_2=\Two$ and $1\in A_4$ and $1\in A_5$}
    \State \Return $(x_4,\Ppm_2(N_2))$\label{line:m5-second}
\EndIf
\State \Return  \EPM{}$(A)$\label{line:m5-fallback}
\end{algorithmic}
\end{algorithm}

We first argue about the soundness of $M_5$ in terms of computing a feasible outcome in which the two facilities are placed at different nodes of the line. 

\begin{lemma}
\label{lem:m5-well-defined} 
Mechanism \(M_5\) outputs a feasible outcome.
\end{lemma}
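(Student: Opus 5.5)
The plan is to treat the three branches of Algorithm~\ref{alg:m5} separately. The fallback branch is immediate: the output of \EPM{} places $F_1$ on $P_1$ and $F_2$ on $P_2$, so the two locations are distinct. By the symmetry between facility labels, it then suffices to handle the first override branch, where $A_1=A_2=\One$, $2\in A_4$ and $2\in A_5$. In this branch the output is $(\Ppm_1(N_1),x_4)$, and the goal is to show $\Ppm_1(N_1)\ne x_4$. In fact I will show $\Ppm_1(N_1)\le x_3<x_4$.

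First determine $N_1$. It contains agents $1$ and $2$, and possibly agent $3$, agent $4$ (if $A_4=\Both$) and agent $5$ (if $A_5=\Both$). Hence $2\le n_1\le 5$. Let $\ell$ denote the left endpoint of $\mu(N_1)$. Since at least two members of $N_1$ are the leftmost agents $1$ and $2$, the left median agent of $N_1$ is, case by case:
\begin{itemize}
\item $x_1$ when $n_1=2$;
\item $x_2$ when $n_1\in\{3,4\}$;
\item $x_3$ when $n_1=5$, which forces $N_1=\{1,\dots,5\}$.
\end{itemize}
So $\ell\le x_2$, unless $n_1=5$, where $\ell=x_3$.

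Next, use the remark after Algorithm~\ref{alg:parity-median} that $\ell-1\le \Ppm_1(N_1)\le \ell+1$. More precisely, if $\mu(N_1)$ is a nondegenerate interval, then $\Ppm_1(N_1)\in\mu(N_1)$. If instead it is a singleton $\{\ell\}$ not in $P_1$, the output is $\ell-1$, or $\ell+1$ only when $\ell=1$.

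In the cases $n_1\in\{2,4\}$, $\mu(N_1)$ is an interval whose right endpoint is the right median agent, namely $x_2$ or $x_3$. So the chosen node lies at most at $x_3<x_4$.

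In the cases $n_1\in\{3,5\}$, the singleton is $x_2$ or $x_3$. The chosen node is then at most $x_3$, except possibly in the boundary situation $\ell=1$. That situation can only occur if $\ell=x_2=1$, which is impossible because $x_1<x_2$ forces $x_2\ge 2$. Hence the node moves left or stays, and is at most $x_3<x_4$.

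The only delicate step is this boundary exception where the rule moves right instead of left. It needs the check that the median node is never node $1$ when it is not $x_1$, and the check that in the case $n_1=2$ the interval $[x_1,x_2]$ contains a $P_1$ node. The latter holds because $x_1<x_2$, so the interval has at least two nodes.
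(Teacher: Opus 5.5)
Your plan follows the paper's proof: reduce to the first override branch, locate the left endpoint $\ell$ of $\mu(N_1)$, and apply the parity-rounding rule. However, the step for $n_1=4$ is wrong as written. You claim that the right endpoint of $\mu(N_1)$ is $x_3$ there. The profile $(\One,\One,\Two,\Both,\Both)$ lies in the first override branch, yet it has $N_1=\{1,2,4,5\}$, so $\mu(N_1)=[x_2,x_4]$. In that case, $\Ppm_1(N_1)\in\mu(N_1)$ alone does not exclude $\Ppm_1(N_1)=x_4$, which is exactly the collision you must rule out.

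The fix is the leftmost tie-breaking, which you quoted but did not use in this case. When $\mu(N_1)$ contains a node of $P_1$, we have $\Ppm_1(N_1)\in\{\ell,\ell+1\}$. With $\ell=x_2$ this gives $\Ppm_1(N_1)\le x_2+1\le x_3<x_4$. In fact, $\Ppm_1(N_1)\le\ell+1$ together with $\ell\le x_2$ settles every case $n_1\le 4$ at once, and this is how the paper argues. Only $n_1=5$ needs separate treatment: there $\mu(N_1)=\{x_3\}$ with $x_3\ge 3$, so the rounding yields $x_3$ or $x_3-1$. The rest of your argument is fine, namely the cases $n_1\in\{2,3,5\}$, the subcases of $n_1=4$ with $3\in N_1$, the boundary exception $\ell=1$, and the fallback branch.
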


\begin{proof}
First, note that the two override branches are mutually exclusive, since the first requires
$A_1=A_2=\One$ and the second requires $A_1=A_2=\Two$. Due to their symmetry, it suffices to consider the first branch, with output
$(z_1,z_2)=\bigl(\Ppm_1(N_1),x_4\bigr)$ and show that $\Ppm_1(N_1)\ne x_4$. 
Let $\ell$ and $r$ be nodes such that $\mu(N_1)=[\ell,r]$, and recall how the leftmost node of parity $P_1$ closest to $[\ell,r]$ is determined: it is $\ell$ or $\ell+1$ when $[\ell,r]$ contains a node of $P_1$; otherwise it is $\ell-1$, except when $\ell=1$, where it is $\ell+1$. In every case, $\Ppm_1(N_1)\le\ell+1$.

In the branch we consider, the two leftmost locations of agents in $N_1$ are $x_1$ and $x_2$. 
If $|N_1|\le4$, then $\ell$ is the first or the second smallest location of agents in $N_1$, that is, $\ell\le x_2$, which implies that $\Ppm_1(N_1)\le x_2+1\le x_3<x_4$. 
If $|N_1|=5$, then $\mu(N_1)=\{x_3\}$ and, since $x_3$ is at least the third node of the line, the selected node is $x_3$ or $x_3-1$, again smaller than $x_4$.
\end{proof}

We now move to proving that $M_5$ is strategyproof and, then, we focus on its approximation ratio.

\begin{lemma}
\label{lem:m5-sp}
Mechanism \(M_5\) is strategyproof.
\end{lemma}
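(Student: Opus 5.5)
By the symmetry between the two override branches (swap the facility labels and the parity classes), I would only analyze the first branch. Let $O_1$ denote the set of profiles with $A_1=A_2=\One$ and $2\in A_4\cap A_5$, on which $M_5$ outputs $(\Ppm_1(N_1),x_4)$. Let $O_2$ denote the symmetric set. No unilateral deviation moves a profile from $O_1$ to $O_2$ or back, since that would require changing both $A_1$ and $A_2$. So, as in the proof for $M_6$, I would treat three kinds of deviation. Deviations between two profiles handled by \EPM{} are not profitable by Corollary~\ref{cor:epm-sp}. The remaining two kinds are deviations starting in $O_1$, and deviations from an \EPM{} profile into $O_1$.

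A key simplification is that on $O_1$ facility $F_1$ is still placed by the rule $\Ppm_1(N_1)$, exactly as in \EPM{}. Hence, whenever a deviation keeps or deletes an approval of $F_1$, Lemma~\ref{lem:parity-sp} controls the $F_1$-term, whether the resulting profile lies in $O_1$ or is handled by \EPM{}. Adding an approval of $F_1$ only matters to an agent who does not truly approve $F_1$. So the whole argument reduces to controlling the $F_2$-term, which is fixed at $x_4$ on $O_1$.

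\emph{Deviations from a truthful profile in $O_1$.} Agents $1$ and $2$ truly approve only $F_1$, so only the $F_1$-term matters to them, and it is handled above. Agent $3$ does not appear in the condition defining $O_1$, so her deviation stays in $O_1$. Then $F_2$ stays at $x_4$ and the $F_1$-term is handled by Lemma~\ref{lem:parity-sp}. For agents $4$ and $5$, any deviation that keeps approval of $F_2$ stays in $O_1$ and is handled the same way. Dropping $F_2$ (reporting $\One$) leads to \EPM{}, with $F_2$ placed at $\Ppm_2(N_2\setminus\{i\})$.
\begin{itemize}[leftmargin=*]
\item For agent $4$ this cannot help, since her $F_2$-cost at $x_4$ is already $0$.
\item For agent $5$, note that $N_2\setminus\{5\}$ is $\{4\}$ or $\{3,4\}$, because agents $1$ and $2$ report $\One$. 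Its median interval is $\{x_4\}$ or $[x_3,x_4]$. By the remark on $\Ppm_k$, the new location of $F_2$ is at most $x_4$, hence weakly farther from $x_5$.
\end{itemize}
In this deviation the set $N_1$ either is unchanged or gains agent $i$, and gaining $i$ is irrelevant to her unless she does not approve $F_1$, in which case she does not care.

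\emph{Deviations from an \EPM{} profile into $O_1$.} There are two subcases.
\begin{itemize}[leftmargin=*]
\item The deviator is agent $4$ or $5$ who truly does not approve $F_2$ (so her true set is $\One$) and adds approval of $F_2$. Her $F_1$-term is handled by the observations above, and she is indifferent to $F_2$.
\item The deviator is $i\in\{1,2\}$ whose true set contains $2$ and who reports $\One$, while the other of agents $1,2$ already reports $\One$. If $i$ truly approves $F_1$, then $N_1$ is unchanged, so $F_1$ does not move. The crucial check is the $F_2$-term. The truthful set $N_2$ is $\{i,4,5\}$ or $\{i,3,4,5\}$, with median interval $\{x_4\}$ or $[x_3,x_4]$. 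Hence the truthful \EPM{} location $\Ppm_2(N_2)$ is at most $x_4$, and $x_i<x_3$. So moving $F_2$ to $x_4$ is weakly worse for $i$.
\end{itemize}

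I expect the main obstacle to be these two boundary checks, namely agent $5$ leaving $O_1$ and agent $1$ or $2$ entering it. In both, one has to verify that the \EPM{} location of $F_2$ lies on the correct side of $x_4$, using the bound $\ell-1\le\Ppm_k(S)\le\ell+1$, together with the fact that $\Ppm_k(S)\in\mu(S)$ when $\mu(S)$ is a nondegenerate interval.
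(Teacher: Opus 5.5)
Your proposal is correct and follows essentially the same route as the paper. It uses the observation that $F_1$ is placed at $\Ppm_1(N_1)$ on both override and fallback profiles to reduce everything to the $F_2$-term, and it isolates the same nontrivial boundary checks: agent $5$ dropping $F_2$, and agent $1$ or $2$ entering the override by reporting $\One$. In that last check, state the lower bound explicitly, e.g.\ $\Ppm_2(N_2)\ge x_4-1\ge x_3>x_i$ when $\mu(N_2)=\{x_4\}$, so that $\Ppm_2(N_2)\in[x_3,x_4]$ as in the paper; ``at most $x_4$'' alone does not yield $d(x_i,\Ppm_2(N_2))\le d(x_i,x_4)$.
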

\begin{proof}
Let $B_1$ denote the first override branch ($A_1=A_2=\One$, $2\in A_4$, $2\in A_5$) with outcome $(\Ppm_1(N_1),x_4)$, and let $B_2$ denote the symmetric second branch with outcome $(x_4,\Ppm_2(N_2))$. Since $B_1$ requires agents $1$ and $2$ to report $\One$ while $B_2$ requires them to report $\Two$, no unilateral deviation can lead a profile in $B_1$ to a profile in $B_2$, or vice versa. In addition, since $\EPM{}$ is strategyproof, deviations between two fallback profiles handled exclusively by $\EPM{}$ are not profitable. By symmetry, it therefore suffices to consider deviations between two profiles in $B_1$, or between a profile in $B_1$ and one in the fallback branch.

Let $i$ be the deviating agent. 
The key observation is that for every profile in $B_1$ or the fallback branch, the mechanism places $F_1$ at $\Ppm_1(N_1)$.  
Hence, the $F_1$-term of $i$'s true cost behaves exactly as under \EPM{}. 
If $i$ truly approves $F_1$, her truthful report contains $1$. 
So, any deviation of $i$ either does not change $N_1$ and the location of $F_1$, 
or leads to $i$ being removed from $N_1$, in which case $F_1$ cannot move strictly closer to her (Lemma~\ref{lem:parity-sp}). 
If $i$ does not truly approve $F_1$, its location is irrelevant. 
So, it remains to show that no deviation improves the $F_2$-term of $i$'s true cost in the various cases.
\begin{itemize}
\item Both profiles involved in the deviation lie in $B_1$. 
Then $F_2$ is placed at $x_4$ for both profiles, and thus the $F_2$-term of $i$'s true cost is unchanged.

\item 
The truthful profile lies in $B_1$ and the deviation leads to the fallback branch.  
If $i$ is agent $1$ or agent $2$, her true type is $\One$, so $F_2$ does not affect her cost.  
Since agent $3$ cannot affect whether $B_1$ applies, the only other possibility is that agent $4$ or agent $5$ drops her approval of $F_2$ and reports $\One$.  Agent $4$'s truthful $F_2$-cost is $d(x_4,x_4)=0$ and thus cannot improve.  
For agent $5$, since $\mu(N_2 \setminus \{5\})$ is either $\{x_4\}$ or $[x_3,x_4]$, the new location of $F_2$ lies weakly to the left of $x_4$ and is no closer to $x_5$ than $x_4$ was.

\item 
The truthful profile lies in the fallback branch and the deviation creates $B_1$.  
If $i$ is agent $4$ or agent $5$, her truthful approval does not contain $2$, and thus $i$'s true type is $\One$, which implies that the location of $F_2$ does not affect her true cost. As in the previous case, agent $3$ cannot affect whether $B_1$ applies.  
If $i$ is agent $1$ or agent $2$, she deviates to $\One$, so her true type must contain $2$. 
Since $\mu(N_2)$ is either $\{x_4\}$ or $[x_3,x_4]$, we have that $\Ppm_2(N_2)\in[x_3,x_4]$.  
After the deviation, $F_2$ is placed at $x_4$, which is not closer to $i$'s location than $\Ppm_2(N_2)$. 
\end{itemize}
In all cases neither term of $i$'s true cost can be improved. Therefore $M_5$ is strategyproof.
\end{proof}

To bound the approximation ratio of $M_5$, it will be useful to argue about approval profiles where $n_1=n_2=3$. Note that, for five agents, this means that the two approval sets overlap exactly once. 

\begin{lemma}
\label{lem:low-span-triples}
Consider an approval profile over five agents with $n_1 = n_2 = 3$. If the sum of the index spans is at most five, then, up to swapping the two triples, the pair is one of 
\( (\{1,2,3\},\{3,4,5\}),(\{1,2,3\},\{2,4,5\}),\)
or
\(
  (\{1,2,4\},\{3,4,5\}).
\)
\end{lemma}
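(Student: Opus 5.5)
We argue purely combinatorially on agent indices. Write the two triples as $T=\{a_1<a_2<a_3\}$ and $T'=\{b_1<b_2<b_3\}$; their index spans are $a_3-a_1$ and $b_3-b_1$. Each span is at least $2$, so a total span of at most five means one span equals $2$ and the other is $2$ or $3$. Since $n_1=n_2=3$ with five agents, $|T\cap T'|=1$ and $T\cup T'=\{1,\dots,5\}$.

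First we locate agents $1$ and $5$. Agent $1$ lies in some triple, say $T$ after swapping the triples. Agent $5$ lies in $T$ or $T'$.

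If $5\in T$, the span of $T$ is $4$, so the total is at least $6$, a contradiction. Hence $1\in T$ and $5\in T'$. Agents $1$ and $5$ cannot both be the shared agent, because a triple containing both would have span $4$. So $1\notin T'$ and $5\notin T$.

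Now $T=\{1,a_2,a_3\}$ has span $a_3-1$, and $T'=\{b_1,b_2,5\}$ has span $5-b_1$. The condition becomes $a_3-b_1\le 1$, together with $a_3\ge 3$ and $b_1\le 3$. We enumerate the possibilities.

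\emph{Case $a_3=3$.} Then $T=\{1,2,3\}$. Agents $4,5$ lie in $T'$, and the shared agent is one of $1,2,3$. The constraint $b_1\ge 2$ rules out $1$, leaving $T'=\{2,4,5\}$ or $\{3,4,5\}$.

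\emph{Case $a_3=4$.} Then $b_1\ge 3$, so $T'\subseteq\{3,4,5\}$, i.e., $T'=\{3,4,5\}$. Thus $T=\{1,2,4\}$, the shared agent is $4$, and $3\notin T$.

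\emph{Case $a_3\ge 5$.} This is excluded, since $5\notin T$.

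These cases give exactly the three listed pairs, and each indeed has total span at most five (spans $2+2$, $2+3$, $3+2$). There is no real obstacle beyond the symmetry normalization of placing agent $1$ in $T$. The only care needed is justifying that $T\cup T'$ covers all five agents with exactly one overlap. This holds because every agent approves at least one facility and $n_1+n_2=6=n+1$.
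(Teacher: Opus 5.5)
Your proof is correct, but it is organized differently from the paper's.

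The paper classifies each triple by its internal shape. A span of $2$ forces three consecutive indices, and any gap larger than $2$ forces a span of at least $4$. Hence at most one triple is non-consecutive, and that triple has exactly one gap of size $2$. The three pairs are then read off from the condition that the triples cover all five indices and overlap once; the paper leaves this final step largely to the reader.

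You instead anchor on the extreme agents. Placing agents $1$ and $5$ in different triples, with neither in the intersection, turns the span condition into the single inequality $a_3-b_1\le 1$. A three-way split on $a_3$ then makes the enumeration fully explicit. This includes the use of the single-overlap condition, for example in ruling out $3\in T$ when $a_3=4$.

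What each route buys: yours is more mechanical and visibly exhaustive. The paper's gives a structural description of low-span triples (consecutive, or with a single gap of $2$) without any case analysis on positions.

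One small wording fix: ``Agents $1$ and $5$ cannot both be the shared agent'' should read ``neither agent $1$ nor agent $5$ can be the shared agent.'' That is what your justification actually establishes.
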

\begin{proof}
Note that any triple of distinct indices has span at least $2$, and this occurs only if the indices are consecutive numbers.  Furthermore, any triple where two successive entries differ by more than $2$ has a span of at least $4$. So, we can have at most one triple with entries not being consecutive numbers, and, in that case, the single gap between successive entries should be exactly $2$. Given that the two triples cover all five indices and overlap once, up to swapping, either both triples are consecutive, forcing $(\{1,2,3\}$,$\{3,4,5\})$; or exactly one has a single gap of $2$, which (again up to swapping) is $(\{1,2,3\}$, $\{2,4,5\})$ or $(\{1,2,4\}$, $\{3,4,5\})$.
\end{proof}

We are now ready to prove the upper bound of $4/3$ on the approximation ratio. 

\begin{theorem}
\label{thm:n5-four-thirds}
The approximation ratio of $M_5$ is at most $4/3$.
\end{theorem}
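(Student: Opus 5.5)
The proof splits according to which branch of $M_5$ produces the outcome.

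\emph{Fallback branch.} Here $M_5$ coincides with \EPM{}, so Lemma~\ref{lem:additive} gives $\SC\le\OPT+2$, and with only one odd side the loss is at most $1$. We argue by the value of $n_1+n_2$.
\begin{itemize}
\item If $n_1+n_2\ge 7$, then Lemma~\ref{lem:opt-SV} gives $\OPT\ge 6$, and we are done.
\item If $n_1+n_2=5$, one of $n_1,n_2$ is even, so $\SC\le\OPT+1$. Lemma~\ref{lem:opt-n-2-general} gives $\OPT\ge 3$, hence $\SC\le\frac43\OPT$.
\item If $n_1+n_2=6$, the profile has exactly one agent approving both facilities, and $(n_1,n_2)\in\{(2,4),(4,2),(3,3),(1,5),(5,1)\}$.
  \begin{itemize}
  \item When both counts are even, \EPM{} is optimal.
  \item For $(1,5)$ or $(5,1)$, Lemma~\ref{lem:opt-SV} gives $\OPT\ge 6$.
  \item This leaves $(3,3)$, where $\SC\le\OPT+2$ and it suffices to show $\OPT\ge 6$.
  \end{itemize}
\end{itemize}

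For the $(3,3)$ case, Lemma~\ref{lem:opt-median} bounds $\OPT$ below by the sum of the two one-facility median costs. A triple at locations $x_a<x_b<x_c$ has median cost $x_c-x_a$, which is at least its index span. So if the sum of index spans is at least $6$, we are done. Otherwise Lemma~\ref{lem:low-span-triples} restricts us, up to swapping labels, to three configurations:
\begin{itemize}
\item $(\{1,2,3\},\{3,4,5\})$;
\item $(\{1,2,3\},\{2,4,5\})$;
\item $(\{1,2,4\},\{3,4,5\})$.
\end{itemize}
I will check that each falls into an override branch $B_1$ or $B_2$ under either facility labeling. In all three, agents $1$ and $2$ approve only the left triple's facility, while agents $4$ and $5$ approve the other one. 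This is exactly the trigger of $B_1$ (or of $B_2$ when labels are swapped), contradicting that we are in the fallback branch. The real work is verifying membership in each triple.

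\emph{Override branch.} By symmetry take $B_1$, with outcome $(\Ppm_1(N_1),x_4)$. We have $A_1=A_2=\One$ and $4,5\in N_2$; agent $3$ is free. I bound the cost by the separable lower bound of Lemma~\ref{lem:opt-median}, enumerating the possibilities for $N_2$.
\begin{itemize}
\item $N_2\subseteq\{3,4,5\}$ contains $\{4,5\}$, so $N_2$ is $\{4,5\}$ or $\{3,4,5\}$. In both cases $x_4$ is a median, so $F_2$ incurs zero loss.
\item $F_1$ is placed by the parity rule, so it loses at most $1$, and loses nothing when $n_1$ is even.
\item Hence $\SC\le\OPT+1$, and equality requires $n_1$ odd.
\end{itemize}
The possible sets $N_1$ are $\{1,2\}$ plus a subset of $\{3,4,5\}$ (agents $4$ and $5$ may report $\Both$). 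For odd $n_1$ the candidates are $\{1,2,3\}$, $\{1,2,4\}$, $\{1,2,5\}$, $\{1,2,3,4,5\}$, and so on. The target is $\OPT\ge 3$, which would give $\SC\le\OPT+1\le\frac43\OPT$. Lemma~\ref{lem:opt-n-2-general} gives $\OPT\ge n_1+n_2-2\ge 3$ once $n_1+n_2\ge 5$. Since every agent approves something and agents $1$ and $2$ are not in $N_2$, we always have $n_1+n_2\ge 5$, so this already suffices.

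The main obstacle is the fallback $(3,3)$ case: confirming that every low-span configuration really triggers an override, including after label swapping and the orientation of the triples. A second point to double-check is that the median cost is at least the index span, which holds because the locations are distinct integers.
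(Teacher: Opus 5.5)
There is a genuine gap, at exactly the step you flagged as the main obstacle: not all three low-span configurations trigger an override. Your overall structure matches the paper's, and your treatment of the override branch and of every fallback case other than $(n_1,n_2)=(3,3)$ is correct. The failure is in $(\{1,2,3\},\{2,4,5\})$. There agent $2$ belongs to both triples, so $A_2=\{1,2\}$. Both $B_1$ and $B_2$ require $A_1=A_2$ to be a singleton, so no override applies under either facility labeling, and this profile really is handled by \EPM{}. Your claimed contradiction does not arise.

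Your claim does hold for $(\{1,2,3\},\{3,4,5\})$ and for $(\{1,2,4\},\{3,4,5\})$, where $A_1=A_2=\{a\}$ and $b\in A_4\cap A_5$. The paper handles the latter with a parity argument instead, which is redundant since that profile never reaches the fallback. The failing configuration is the mirror image of $(\{1,2,4\},\{3,4,5\})$, and $M_5$ is not left-right symmetric, which is why one works and the other does not.

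For $(\{1,2,3\},\{2,4,5\})$ you need the paper's separate argument. Take $N_1=\{1,2,3\}$ and $N_2=\{2,4,5\}$ (the swapped labeling is identical). Lemma~\ref{lem:opt-median} gives $\OPT\ge (x_3-x_1)+(x_5-x_2)\ge 2+3=5$.
\begin{itemize}
\item If this sum is at least $6$, then $\SC\le\OPT+2\le\frac43\OPT$.
\item If it equals $5$, all four consecutive gaps are $1$, so the medians $x_2$ and $x_4$ have the same parity. Since $P_1$ and $P_2$ are opposite parity classes, one facility's median already lies in its own class. Hence at most one parity loss occurs, and $\SC\le\OPT+1\le\frac65\OPT$.
\end{itemize}
With this case added, your proof is complete.
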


\begin{proof}
We will argue about the two possible branches of $M_5$ separately.

\medskip
\noindent
{\bf No override branch of $M_5$ applies.}
Then $M_5$ coincides with \EPM{}. 
If the outcome suffers at most one parity loss, then, since \(\OPT\ge3\) (due to Lemma~\ref{lem:opt-n-2-general} for $n=5$), 
\( \SC(M_5)\le \OPT+1\le \frac{4}{3}\cdot \OPT \). So, it remains to consider the case where both facilities suffer a parity loss, which implies that both approval sets are non-empty and of odd cardinality. We will show that, in this case, $\OPT\ge6$, and Lemma~\ref{lem:additive} will then give us that $\SC(M_5)\le\OPT+2\le\frac43\cdot\OPT$. Observe that this is clearly true if there exists a $k \in \{1,2\}$ such that $|N_k| = 5$; indeed, since $N_1,N_2 \neq \varnothing$, we have that $n_1^2+n_2^2\ge 5^2+1^2=26$, and Lemma~\ref{lem:opt-SV} implies $\OPT\ge6$. Therefore, since $n_1+n_2\geq 5$, the only possible remaining case is when $|N_1|=|N_2|=3$. 

For a triple \(T=\{i<j<k\}\), its one-facility optimal social cost is at least $(x_k - x_j) + (x_j - x_i) = x_k-x_i\ge k-i=\operatorname{span}(T)$. Hence, if \(\operatorname{span}(N_1)+\operatorname{span}(N_2)\ge6\), then \(\OPT\ge6\). So, we can now focus on the case where the  total index span is at most \(5\).  By Lemma~\ref{lem:low-span-triples}, up to swapping
facilities, only three pairs remain:
\(
(\{1,2,3\},\{3,4,5\})\), \((
\{1,2,3\},\{2,4,5\})\), and \((
\{1,2,4\},\{3,4,5\}).
\)
We consider each of them:
\begin{itemize}
\item \((\{1,2,3\},\{2,4,5\})\). 
Since $\mu(N_1) = x_2$ and $\mu(N_2) = x_4$, we have that $\OPT \geq (x_3-x_1)+(x_5-x_2)$. 
If $(x_3-x_1)+(x_5-x_2) = 5$, then all four consecutive gaps are \(1\), the two medians \(x_2\) and \(x_4\) have the
same parity, and thus it cannot be the case that both facilities suffer parity losses. Therefore, $\OPT \geq 6$. 
A similar argument applies to the pair
\((\{1,2,4\},\{3,4,5\})\).

\item 
$(N_a=\{1,2,3\}, N_b=\{3,4,5\})$ for $\{a,b\}=\{1,2\}$. Since $A_1=A_2 = \{a\}$ and $b \in A_4, A_5$, an override branch of $M_5$ applies, either the first in case $a=1$ or the second in case $a=2$, contradicting the assumption that no override branch applies. 
\end{itemize}

\medskip
\noindent
{\bf An override branch of $M_5$ applies.}
Due to symmetry, it suffices to consider the first override branch, where the mechanism outputs \((z_1, z_2)  = (\Ppm_1(N_1), x_4).\)
Here, $N_2$ consists of agents $4$, $5$, and possibly of agent $3$ as well, that is, 
\(N_2\in\bigl\{\{4,5\},\{3,4,5\}\bigr\}.
\)
If $N_2=\{4,5\}$, the median interval is $[x_4,x_5]$; if $N_2=\{3,4,5\}$, the median is $x_4$. 
Thus, $F_2$ is always placed at a median agent (agent $4$) and its contribution to the social cost is at most as much as its contribution to the optimal social cost. Facility $F_1$ is placed according to $\EPM{}$, and thus
its contribution is at most one more than its contribution to the optimal social cost.
Overall, we have that $\SC(M_5)\le \OPT+1 \le \frac43 \cdot \OPT$, where the last inequality follows from Lemma~\ref{lem:opt-n-2-general} for $n=5$ which gives us that $\OPT \geq 3$.
\end{proof}

\subsection{Instances with Four Agents}
\label{sec:line:n4}
Contrary to the mechanisms in the previous sections, where deviations from \EPM{} depended on the approval sets of the agents, for instances with $n=4$ agents, such deviations depend only on the gap between agents $2$ and $3$. The hard instances are those with a small middle gap:
when the two central agents are close, both facilities are pulled towards the same central node, and a single parity correction is not enough. We therefore switch, on the basis of this public middle gap, between a tailored small-gap rule and \EPM{}.

Let $x_1<x_2<x_3<x_4$ be the four public locations and set $g=x_3-x_2$. 
The four-agent mechanism $M_4$ branches on $g$: 
if $g\le4$ it runs a small-gap rule $M_{\le4}$, and if $g\ge5$ it runs $\EPM{}$. 
Rule $M_{\le4}$ appropriately applies the following three  branches, where $O_L=(x_2,x_3)$ places
$\Fone$ at $x_2$ and $\Ftwo$ at $x_3$, and $O_R=(x_3,x_2)$ swaps them.
\begin{enumerate}[label=(\roman*),leftmargin=3em]
\item If $N_k=\{1\}$ for some $k\in\{1,2\}$, output $z_k=x_1$ and $z_{3-k}=x_3$.
\item If $N_k=\{4\}$ for some $k\in\{1,2\}$, output $z_k=x_4$ and $z_{3-k}=x_2$.
\item Otherwise, output the smaller-cost outcome between $O_L$ and $O_R$, breaking ties in favor of $O_L$.
\end{enumerate}
Clearly, at most one of these endpoint branches can apply: if $N_k=\{1\}$, then agents $2,3,4$ all approve $F_{3-k}$, and thus $F_{3-k}$ cannot have singleton support $\{4\}$. Consequently, the computed outcome is feasible. See Algorithm~\ref{alg:m4} for a full description.

\begin{algorithm}[H]
\caption{The four-agent mechanism $M_4$}
\label{alg:m4}
\begin{algorithmic}[1]
\Require Reported approval profile $A=(A_1,\ldots,A_4)$
\Ensure A feasible outcome $(z_1,z_2)$
\State Compute $N_1$ and $N_2$
\State $g\gets x_3-x_2$
\If{$g\le4$}\label{line:m4-gap} 
    \For{each $k\in\{1,2\}$}
        \If{$N_k=\{1\}$} 
            \State  $z_k \gets x_1,\ z_{3-k}\gets x_3$ 
            \State \Return $(z_1,z_2)$ \label{line:m4-left-endpoint}
        \EndIf
        \If{$N_k=\{4\}$} 
            \State  $z_k \gets x_4,\ z_{3-k} \gets x_2$
            \State \Return $(z_1,z_2)$ \label{line:m4-right-endpoint}
        \EndIf
    \EndFor
    \State $O_L\gets(x_2,x_3)$, $O_R\gets(x_3,x_2)$
    \If{$\SC(O_L\mid A)\le \SC(O_R\mid A)$} \State \Return $O_L$\label{line:m4-middle-left}
    \Else \State \Return $O_R$\label{line:m4-middle-right}
    \EndIf
\Else\label{line:m4-large-gap}
    \State \Return $\EPM{}(A)$\label{line:m4-large-return}
\EndIf
\end{algorithmic}
\end{algorithm}

\begin{theorem}
\label{thm:four-small-sp}
Mechanism $M_{4}$ is strategyproof.
\end{theorem}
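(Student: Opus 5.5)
We first split on the public gap $g=x_3-x_2$. Since no agent can change $g$ by misreporting, the branch of $M_4$ is the same for the truthful profile and for every unilateral deviation. When $g\ge5$, $M_4$ coincides with \EPM{}, which is strategyproof by Corollary~\ref{cor:epm-sp}. So the whole proof reduces to showing that the small-gap rule $M_{\le4}$ is strategyproof for every fixed configuration $x_1<x_2<x_3<x_4$. (We do not expect to use $g\le4$ for strategyproofness; it should matter only for the approximation.)

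\textbf{Step 1: deviations staying inside branch (iii).} We write
\[
\Delta(A)=\SC(O_L\mid A)-\SC(O_R\mid A)=\sum_i \delta_i(A_i).
\]
An agent reporting $\Both$ has the same cost under $O_L$ and $O_R$, so $\delta_i(\Both)=0$. For an agent reporting $\One$, $\delta_i$ equals $d(x_i,x_2)-d(x_i,x_3)$, which is $-g$ for $i\in\{1,2\}$ and $+g$ for $i\in\{3,4\}$. Reporting $\Two$ gives the opposite signs. Hence each term $\delta_i(A_i)$ has the same sign as agent $i$'s own preference between $O_L$ and $O_R$ under report $A_i$. A $\Both$-agent is indifferent between the two outcomes, so she cannot gain. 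A singleton agent can only move $\Delta$ against her preferred outcome, whether she switches to $\Both$ or to the opposite singleton. Combined with the fixed tie-break, this monotonicity shows that no deviation within branch (iii) is profitable.

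\textbf{Step 2: deviations entering or leaving branches (i) and (ii).} This is the main obstacle. By the left/right and $F_1/F_2$ symmetries, it suffices to treat branch (i) with $N_k=\{1\}$, whose outcome is $z_k=x_1$ and $z_{3-k}=x_3$. Only two kinds of agents can create or destroy this condition.
\begin{itemize}
\item Agent $1$ can do so by changing whether she approves $k$.
\item An agent $j\in\{2,3,4\}$ can do so by adding or dropping $k$, which requires her to report $\{3-k\}$ in the branch-(i) profile.
\end{itemize}
We plan to check each transition by comparing true costs directly.
\begin{itemize}
\item If agent $1$ truly approves $k$ and branch (i) applies, her $F_k$-cost is already $0$. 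A deviation that drops $k$ leaves her only with facility $3-k$, which is then placed at $x_2$ or $x_3$ (or at $x_2$ in branch (ii)). Since $x_1<x_2<x_3$, we must verify case by case that her total true cost does not drop below $d(x_1,x_3)$ plus zero. The critical subcase is a true type $\Both$, where the alternative outcomes cost her at least $(x_2-x_1)+(x_3-x_1)$.
\item If agent $1$ does not truly approve $k$ and deviates to add it, she only moves $F_{3-k}$ to $x_3$. We check that in her truthful outcome $F_{3-k}$ was at $x_2$ or $x_3$, and never farther than $x_3$.
\item For an agent $j\in\{2,3,4\}$, her true type is $\{3-k\}$ in the branch-(i) profile, or contains $k$ if she is the one breaking the condition. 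We compare the position of $F_{3-k}$ ($x_3$ in branch (i), $x_2$ or $x_3$ elsewhere) and the position of $F_k$ ($x_1$ in branch (i), versus $x_2$, $x_3$, or $x_4$ otherwise).
\end{itemize}
The delicate point is agents $2$ and $3$ near the center, for example agent $2$ of type $\{3-k\}$ preferring $x_2$ to $x_3$. Here we expect to show that, before the deviation, branch (iii) would have chosen the outcome with $z_{3-k}=x_3$. The reason is that, once agent $2$ reports $\Both$ (the only report that keeps her in $N_{3-k}$ while removing agent $1$'s singleton status), the remaining reports force $\Delta$ in that direction.

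\textbf{Step 3: transitions between branches (i) and (ii).} A single agent cannot switch the profile from $N_k=\{1\}$ to $N_{k'}=\{4\}$: the former forces agents $2,3,4$ to approve $3-k$, and one change cannot produce the latter. So only transitions through branch (iii) need to be checked, and these are covered by Step 2. Collecting all cases gives strategyproofness of $M_{\le4}$ and hence of $M_4$.
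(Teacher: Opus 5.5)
Your proposal is correct and follows the paper's proof essentially step for step. The public gap reduces the claim to $M_{\le4}$; middle--middle deviations are handled by the signed per-agent contributions to $\SC(O_L)-\SC(O_R)$; endpoint--endpoint jumps are impossible in one step; and the endpoint--middle check is a direct cost comparison whose only nontrivial case is agent $2$, where (for $k=1$) the paper shows $\SC(O_L)-\SC(O_R)\le -2g<0$.

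When you write that case out, include agent $2$'s deviation to $\{k\}$ as well as to $\Both$. There is no deletion-monotonicity lemma for $M_{\le4}$, but the same computation covers it. Also note that this strict inequality, together with the other cases holding for either middle outcome, is what makes your left/right and $F_1/F_2$ symmetry reductions valid despite the tie-break favoring $O_L$.
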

\begin{proof}
Observe that the branch depends only on the public locations and never on the reported approvals; hence, it creates no strategic opportunity by itself. Furthermore, by Corollary~\ref{cor:epm-sp}, \EPM{} is strategyproof, so it suffices to consider $M_{\leq 4}$. 

First observe that endpoint--endpoint deviations are harmless. If both profiles remain in the same endpoint branch, say $N_k=\{1\}$ or $N_k=\{4\}$, then the outcome is unchanged. A unilateral deviation cannot move directly from a left endpoint branch to a right endpoint branch, since this would require turning some singleton support $\{1\}$ into some singleton support $\{4\}$, or vice versa, which cannot be done by changing one report. 
It therefore suffices to rule out profitable deviations across two kinds of profile pairs: 
(1) profiles where both facilities are placed at the two middle agents, and 
(2) profiles such that in one of them a facility is placed at the leftmost agent (say $N_k=\{1\}$; the case $N_k=\{4\}$ is symmetric), while in the other profile both facilities are placed at the middle agents. 

\medskip
\noindent 
{\bf Middle--middle profiles.}
In this case, the mechanism selects the smaller-cost outcome between $O_L=(x_2,x_3)$ and $O_R=(x_3,x_2)$, with ties broken in favor of $O_L$. Consider the contribution of a single agent $i$ to $\SC(O_L\mid A)-\SC(O_R\mid A)$.
If $i$ approves both facilities, she pays $d(x_i,x_2)+d(x_i,x_3)$ under either outcome, so she is indifferent and contributes $0$. 
If $i$ approves only $\Fone$, she contributes $d(x_i,x_2)-d(x_i,x_3)$, whereas, if she approves only $\Ftwo$, she contributes its negation. 
A negative contribution favors $O_L$, while a positive contribution favors $O_R$. Thus, for a singleton agent, the truthful report already contributes towards the outcome among $O_L,O_R$ that she prefers: changing to the other singleton type reverses this contribution, while changing to $\Both$ removes it. Hence, no middle--middle deviation is profitable.

\medskip
\noindent 
{\bf Endpoint--middle profiles.}
By symmetry, it suffices to consider the left endpoint branch for facility $\Fone$. Thus let $A$ be a profile with $N_1=\{1\}$, so that
$M_{\le4}(A)=(x_1,x_3)$. Then agents $2,3,4$ all approve only $\Ftwo$. Let $A'$ be a profile obtained from $A$ by one deviation that is in the middle branch. We check the two strategyproofness inequalities for the pair $\{A,A'\}$. Throughout, any middle outcome uses the facility-node set $\{x_2,x_3\}$, while the endpoint outcome uses $\{x_1,x_3\}$.

\begin{itemize}
\item \emph{The truthful profile is $A$.}
If agent $1$ approves only $\Fone$, her cost in the endpoint outcome $(x_1,x_3)$ is already $0$. If agent $1$ approves both facilities, the endpoint outcome $(x_1,x_3)$ gives her cost $d(x_1,x_3)$, whereas either middle outcome using $\{x_2,x_3\}$ gives her cost $d(x_1,x_2)+d(x_1,x_3)$, which is larger. Agents $3$ and $4$ approve only $\Ftwo$, which is placed at $x_3$ in the endpoint outcome; since they are located weakly to the right of $x_3$, no middle outcome places $\Ftwo$ closer to them.

It remains to consider agent $2$, who also approves only $\Ftwo$. A deviation by agent $2$ that leaves the endpoint branch must add $\Fone$, so she reports either $\Both$ or $\One$. In the resulting middle-branch profile, agents $3$ and $4$, who approve only $\Ftwo$, each contribute $-g$ to $\SC(O_L)-\SC(O_R)$. Agent $1$ contributes either $-g$ or $0$, and agent $2$ contributes either $0$ if she reports $\Both$, or $-g$ if she reports $\One$. Hence $\SC(O_L)-\SC(O_R)\le -2g<0$, and the mechanism selects $O_L=(x_2,x_3)$. Since $F_2$ is again placed at $x_3$, agent
$2$'s cost remains unchanged.

\item \emph{The truthful profile is $A'$.}
The branch $N_1=\{1\}$ can be created only if the deviator either adds $\Fone$ (agent $1$) or drops $\Fone$ (one of agents $2,3,4$).
Let $i$ be the deviator agent. We consider all possible cases. 
\begin{itemize}
    \item $i$ is agent $1$. Then her true preferences do not contain $\Fone$, so it is $\Two$. In the endpoint outcome, $\Ftwo$ is at $x_3$,
which is no closer to $x_1$ than its middle-branch location (which is either $x_2$ or $x_3$).

    \item $i$ is agent $3$ or agent $4$. 
    We have that $x_i \geq x_3 \geq x_2 \geq x_1$, and deviating from $A'$ to $A$ changes the outcomes from $\{x_2,x_3\}$ to $(x_1,x_3)$. 
    If $i$ truly approves only $\Fone$, then the cost of $i$ cannot decrease since $d(x_i,x_3) \leq d(x_i,x_2) \leq d(x_i,x_1)$.
    Similarly, if $i$ truly approves both facilities, then $d(x_i,x_2)+d(x_i,x_3) \le d(x_i,x_1)+d(x_i,x_3)$.
    In any case, $i$ cannot gain. 

    \item $i$ is agent $2$. 
    Then she approves $\Fone$, and $N_1 = \{1,2\}$; otherwise dropping agent $2$'s approval of $\Fone$ would not yield $N_1=\{1\}$. 
    Hence, in $A'$, agents $3$ and $4$ contribute $-g$ each to $\SC(O_L)-\SC(O_R)$, agent $1$ contributes either $-g$ or $0$, and agent $2$ contributes either $-g$ if she reports $\One$, or $0$ if she reports $\Both$. Since the difference is negative, $O_L=(x_2,x_3)$ is selected.
    So, the contribution of $\Fone$ to agent $2$'s cost in the truthful profile $A'$ is $0$. In addition, $F_2$ is placed at $x_3$ for both $A$ and $A'$. Therefore, agent $2$ has no incentive to deviate. 
\end{itemize}
\end{itemize}
The proof is now complete. 
\end{proof}

We finally show the upper bound of $4/3$ on the approximation ratio. 

\begin{theorem}
    \label{lem:four-ratio}
The approximation ratio of $M_{4}$ is at most $4/3$.
\end{theorem}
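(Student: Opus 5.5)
We split on the public gap $g=x_3-x_2$, mirroring the two branches of $M_4$.

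\textbf{Large gap, $g\ge5$.} Here $M_4$ runs \EPM{}, so Lemma~\ref{lem:additive} gives $\SC\le\OPT+2$, and it suffices to show $\OPT\ge6$. Every approval profile has some facility $F_k$ with $n_k\ge2$ whose supporters include an agent from $\{1,2\}$ and one from $\{3,4\}$. Indeed, if neither facility's support straddled the middle gap, then each $N_k$ would lie within $\{1,2\}$ or within $\{3,4\}$, which cannot cover all four agents. The one-facility cost of such an $F_k$ is at least $g\ge5$. If this is not already $6$, we refine the count. If $n_k\ge3$, the cost is at least $g+1\ge6$, since the third supporter is at distance at least $1$ from the interval. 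If $n_k=2$ with supporters $\{i,j\}$, $i\le2<3\le j$, the other two agents approve $F_{3-k}$. If they also straddle the gap, the cost is at least $2g\ge10$. Otherwise they are $\{1,2\}$ or $\{3,4\}$, so $\{i,j\}$ is disjoint from them and must be, e.g., $\{2,3\}$ with $N_{3-k}\supseteq\{1,4\}$, which again straddles. The only leftover case is $N_{3-k}$ equal to $\{1,2\}$ or $\{3,4\}$ and $N_k=\{3,4\}$ or $\{1,2\}$ respectively, which does not straddle, a contradiction. Hence $\OPT\ge6$ in all cases. I would write this as a short case check. Note also that a parity loss only occurs for odd $n_k$, which helps if the count is tight.

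\textbf{Small gap, $g\le4$: endpoint branches.} By symmetry, take $N_1=\{1\}$, so agents $2,3,4$ all approve $F_2$ and the output is $(x_1,x_3)$. Agent $1$ may or may not approve $F_2$. Facility $F_1$ contributes $0$. Facility $F_2$ is at $x_3$, which is a median of $N_2$ (either $\{2,3,4\}$ with median $x_3$, or $\{1,2,3,4\}$ with median interval $[x_2,x_3]$). The only way the optimum could do better is when the optimal $F_2$ location coincides with $x_1$, which never helps. So the outcome is optimal by Lemma~\ref{lem:opt-median}. The case $N_k=\{4\}$ is symmetric.

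\textbf{Small gap, $g\le4$: middle branch.} The output is the better of $O_L,O_R$, and the task is to compare it with $\OPT$. Any facility with $n_k\ge2$ has its median interval intersecting $[x_2,x_3]$ or lying at an endpoint agent. Since neither support is a singleton $\{1\}$ or $\{4\}$, supports are singletons only at $\{2\}$ or $\{3\}$, or have size at least $2$. I would compute, for each approval pattern up to symmetry, $\min(\SC(O_L),\SC(O_R))-\OPT$. The losses arise when both medians want the same middle node, for example when both $N_1,N_2$ have median $x_2$. The additive loss is then at most about $g$, while $\OPT$ grows with the outer gaps and with $g$. The cleanest route is to write $\SC(O)$ and $\OPT$ explicitly in terms of $a=x_2-x_1$, $g$, and $b=x_4-x_3$, and verify $3\,\SC\le4\,\OPT$ per pattern. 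Since $O_L$ and $O_R$ are both available, the averaging bound $\min\le\frac12(\SC(O_L)+\SC(O_R))$ should kill most cases.

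\textbf{Main obstacle.} I expect the main obstacle to be the middle-branch case analysis. The optimum may place a facility off $\{x_2,x_3\}$, for instance at an empty node between them or at $x_1$ or $x_4$, and we must lower-bound $\OPT$ for each support pattern, e.g.\ $N_1=\{1,2\}$, $N_2=\{2,3,4\}$. I would first handle it with Lemma~\ref{lem:opt-median} plus the feasibility constraint: when both medians are the same single node, $\OPT$ pays at least one extra unit. Combining this with the averaging bound should give the ratio $4/3$ in each case.
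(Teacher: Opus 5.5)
\textbf{Main gap: the middle branch of $M_{\le4}$ is not proved.} This branch is where the theorem is actually decided, and you only announce a per-pattern computation. The averaging heuristic you propose also fails. For $N_1=\{2\}$, $N_2=\{1,3,4\}$ with $x_2-x_1=x_4-x_3=1$ and $g=4$, one has $\SC(O_L)=\OPT=6$ but $\SC(O_R)=14$, so the average is $10$.

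The missing idea is that every support $S\notin\{\{1\},\{4\}\}$ has $x_2$ or $x_3$ as a median, and reaching the middle branch guarantees exactly this condition on the supports. Concretely, $x_2$ is a median for $\{2\},\{1,2\},\{1,2,3\},\{1,2,4\}$, and $x_3$ for the mirror sets $\{3\},\{3,4\},\{2,3,4\},\{1,3,4\}$. Both are medians for every other non-empty set. So one of $O_L,O_R$ puts each facility at a median of its supporters, attains the bound of Lemma~\ref{lem:opt-median}, and is optimal, unless both supports lie in the same one of the two lists. This also settles your worry about optimal placements off $\{x_2,x_3\}$.

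Because the supports cover all four agents, the exception is only $\{N_1,N_2\}=\{\{1,2,3\},\{1,2,4\}\}$ or its mirror. There your feasibility remark gives exactly $\OPT=2(x_2-x_1)+2g+(x_4-x_3)+1\ge 2g+4$. Both $O_L$ and $O_R$ cost $\OPT+g-1$, so the ratio is at most $1+\frac{g-1}{2g+4}\le\frac54$ for $g\le4$.

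\textbf{Second problem: the large-gap case.} Your claim that some support straddles the middle gap is false. The profile $N_1=\{1,2\}$, $N_2=\{3,4\}$ covers all agents without straddling, and there $\OPT=(x_2-x_1)+(x_4-x_3)$ can be $2$. So ``$\OPT\ge6$ in all cases'' is wrong, and the ``leftover case'' you dismiss as a contradiction is precisely this profile.

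The repair is the parity remark you make only in passing: both supports have even size, so \EPM{} places each facility inside its median interval and is optimal. Also, the complement of a straddling pair always straddles, so your ``otherwise'' sub-case for $n_k=2$ is vacuous.

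The paper avoids this by starting from the parity loss. A loss needs an odd support, i.e.\ a singleton or a triple, and either one forces $\OPT\ge g+1\ge6$, after which Lemma~\ref{lem:additive} finishes. Your endpoint-branch argument is correct and matches the paper's.
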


\begin{proof}
We first analyze the approximation ratio when $M_{\leq 4}$ is executed. 
The mechanism is optimal for instances where one facility is placed at either the leftmost or the rightmost agent: 
if $N_k=\{1\}$, then $F_k$ is placed optimally at $x_1$, and the other facility is approved by agents $2,3,4$ (median $x_3$) or by
all four agents (median interval $[x_2,x_3]$), and thus its location $x_3$ is optimal; the case $N_k=\{4\}$ is symmetric.

Now, consider the case where the facilities are placed at $x_2$ and $x_3$. For every non-empty approver set $S \neq \{1\},\{4\}$, at least one of $x_2,x_3$ is median-optimal: 
$x_2$ is a median for $\{2\},\{1,2\},\{1,2,3\},\{1,2,4\}$, 
$x_3$ is a median for $\{3\},\{3,4\},\{1,3,4\},\{2,3,4\}$, 
and the median interval contains both $x_2$ and $x_3$ for every other $S \neq \{1\},\{4\}$. 
Hence, unless the same middle node is the unique median for both facilities, one of $O_L,O_R$ is socially optimal, and thus $M_{\le4}$ is optimal.

If $x_2$ is the unique median for the approver sets of both facilities, then $N_k = \{1,2,3\}$ and $N_{3-k} = \{1,2,4\}$. 
The combined one-facility cost at $x_2$ is $2\cdot d(x_1, x_2) + 2\cdot d(x_2, x_3) + d(x_3, x_4) \geq 2g+3$, and the requirement for distinct locations forces one facility to be placed one step away from the common unique median $x_2$, leading to \( \OPT\geq 2g+4\). The mechanism instead moves one of the facilities to \(x_3=x_2+g\), leading to $\SC(M_{\le4})=\OPT+g-1$, and the approximation ratio is at most  $1+\frac{g-1}{2g+4}\le 5/4$ as $g\leq 4$. The case where $x_3$ is unique median for the approver sets of both facilities is symmetric. Therefore, $M_{\le4}$ is $5/4$-approximate for $g\le4$.

It remains to analyze the branch in which \(M_4\) runs \(\EPM{}\), which happens only when \(g\ge5\). If neither facility suffers a parity loss, then $\EPM{}$ attains the optimum cost. It remains to consider profiles in which at least one facility suffers a parity loss. Such a loss can occur only for an odd-cardinality approver set, that is, a singleton or a triple. If a facility has triple support, its one-facility optimal cost is $d(x_1, x_3) \geq g+1$, or $ d(x_1, x_4) \geq g+2$, or $d(x_2, x_4)\geq g+1$. Since $g \geq 5$, all of these are at least $6$. If instead a facility has singleton support $N_k=\{i\}$, then the remaining three agents must all approve $F_{3-k}$, which therefore has a one-facility optimal cost of at least $g+1\ge6$. In any case, $\OPT\geq 6$ and  Lemma~\ref{lem:additive} implies $\SC(\EPM{})\le\OPT+2\le\frac43\cdot\OPT$.
\end{proof}

\section{General Graphs}
\label{sec:general}
We now switch to general connected graphs, for which we show that the best possible approximation ratio of deterministic strategyproof mechanisms is between $3/2$ and $2$. We first establish the upper bound (Section~\ref{sec:general:upper}), which is based on a mechanism that generalizes the {\sc Parity-Median} mechanism that we considered for the case of the line graph in the previous section. Our lower bound of $3/2$ is shown for the claw graph $K_{1,3}$ and shows a separation between the line graph and more complex graphs (Section~\ref{sec:general:lower}). 

\subsection{Upper Bound}
\label{sec:general:upper}
Here, we show the following upper bound for unrestricted graphs. 

\begin{theorem}
\label{thm:graph-upper}
For every connected graph $G$, there is a deterministic strategyproof mechanism with approximation ratio at most $2$.
\end{theorem}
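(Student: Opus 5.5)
The plan is a case split on the public number of agents $n$, with a separate strategyproof mechanism for $n\le 3$ and for $n\ge 4$.

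For $n\le 3$ I would use a serial-lexicographic rule, \SL{}. Fix a dictator, say agent $1$. Place the facility (or facilities) she approves at her node $x_1$, choosing a fixed facility if she approves both. Place the other facility at a fixed neighbour of $x_1$, or at a restricted median of the agents that approve it among the nodes other than $x_1$. I would check strategyproofness and a ratio of $2$ by a direct, short case analysis over the few possible approval profiles, using $\OPT\ge n_1+n_2-2$ (Lemma~\ref{lem:opt-n-2-general}) and Lemma~\ref{lem:opt-median}. I expect this step to be routine but tedious.

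For $n\ge 4$ I would use \OICM{} as described in the introduction. Let $L$ be a maximal independent set of the subgraph induced by the agent nodes, extended through empty nodes to a maximal independent set of $G$, and let $R=V\setminus L$. Both sides are non-empty and dominating, since $G$ is connected, $L$ is independent, and $L$ is maximal. Facility $F_1$ goes to a $L$-restricted median of $N_1$ and $F_2$ to a $R$-restricted median of $N_2$, each with a fixed tie-breaking order and a fixed default node when the supporter set is empty. Strategyproofness then follows exactly as in Corollary~\ref{cor:epm-sp}. The domains are fixed and disjoint, each facility depends only on its own supporter set, and adding a non-approved facility is irrelevant to the agent. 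What remains is Lemma~\ref{lem:cut-mono}: for any fixed $D$, removing a supporter $i$ never moves the $D$-restricted median strictly closer to $x_i$. I would prove it by the standard exchange argument. Let $z=\mathrm{med}_D(S)$ and $z'=\mathrm{med}_D(S\setminus\{i\})$, and suppose $d(x_i,z')<d(x_i,z)$. Optimality of $z'$ for $S\setminus\{i\}$ (weak, respecting the tie order) plus the strictly smaller term $d(x_i,z')$ makes $z'$ strictly better than $z$ for $S$, a contradiction. Ties are handled by the fixed order.

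For the approximation, let $v_k\in\mu(N_k)$ and $C_k=\sum_{i\in N_k}d(x_i,v_k)$, so that $\OPT\ge C_1+C_2$. Since the domain $D_k$ is dominating, some neighbour of $v_k$ lies in $D_k$. The restricted median therefore costs at most $C_k+n_k$, and costs exactly $C_k$ if $v_k\in D_k$. Call $\delta_k$ this loss. Whenever $v_k$ has a supporter at distance at least $2$, or $v_k$ is not an agent node, the $n_k$ supporters force $C_k\ge n_k\ge\delta_k$. So $\delta_k\le C_k$ except in the tight situation where $C_k=n_k-1$, $v_k=x_j$ for some $j\in N_k$, and every other supporter is adjacent to $v_k$. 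In the non-tight cases $\SC\le 2(C_1+C_2)\le 2\OPT$ immediately. In the tight situation with $n_k\ge 2$ I would re-route more carefully. A neighbour $u\in D_k$ of $v_k$ that is itself occupied by a supporter saves one unit. If no such neighbour exists, then all supporters other than $j$ lie outside $D_k$, which I would exploit using the independence of $L$ and the maximality of $L$ inside the agent subgraph (every agent node outside $L$ has an agent neighbour in $L$). Combining this with $\OPT\ge n-2\ge 2$, the aim is to show that the total excess is at most $\OPT$.

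The main obstacle is the case where some facility has at most one supporter, say $N_k=\{j\}$ with $x_j\notin D_k$. Then $C_k=0$ but $\delta_k=1$, so the loss cannot be charged to $C_k$. Here I would charge the unit to the other facility. Since $n\ge 4$, $N_{3-k}$ contains at least three agents other than $j$, so $C_{3-k}\ge 2$. I would then use maximality inside the agent subgraph: if $k=1$ and $x_j\notin L$, some agent neighbour of $x_j$ lies in $L$, and symmetrically for $k=2$. The goal is to show that either $F_{3-k}$ incurs no loss, or $\OPT\ge C_1+C_2+1$ because the two unrestricted medians collide at $x_j$. Verifying that this ``missing unit'' is always available, in every configuration of the tight case, is the step I expect to need the most careful casework.
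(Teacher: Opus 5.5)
\paragraph{Verdict.} Your mechanism for $n\ge4$ and its strategyproofness argument match the paper's (the \OICM{} construction with Lemma~\ref{lem:cut-mono}). The gap is in the approximation analysis, exactly at the step you flag, and the route you propose for that step does not work.

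\paragraph{The dichotomy fails.} Consider the case where $\Ftwo$, which is restricted to $R$, has a single supporter $j$ with $x_j\in L$. You aim to show that either $\Fone$ loses nothing, or the medians collide so that $\OPT\ge C_1+C_2+1$. This is false. Take the path on nodes $1,\dots,5$ with agents at $1,2,3,4$, $I=\{1,3\}$, $L=\{1,3,5\}$ and $R=\{2,4\}$. The agent at node $3$ reports $\{2\}$ and the other three report $\{1\}$.
\begin{itemize}
\item $N_1$ has the unique median node $2\in R$, so $C_1=3$, while $f_L(N_1)=4$.
\item $\Ftwo$ lands at node $2$ or $4$ and also loses a unit.
\item $\OPT=3=C_1+C_2$, attained by $(2,3)$, so there is no collision.
\end{itemize}
This is also a non-tight case in your classification ($C_1=n_1$), where you only have $\delta_1\le C_1$. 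Together with $\delta_2=1$, that yields only $\SC\le2\OPT+1$.

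\paragraph{What is missing.} You need a strict bound for the large-support facility: $f_L(N_1)\le 2f(N_1)-1$ whenever $N_1\in\{X,X\setminus\{x\}\}$ and $|N_1|\ge3$. It must also cover $f(N_1)=|N_1|$, where moving to a neighbour of the median gives exactly $2f(N_1)$. The paper proves this (Lemma~\ref{lem:almost-complete-support}) by cases on $I\cap N_1$.
\begin{itemize}
\item If $I\cap N_1$ is empty, then $I=\{x\}$ and $x$ is adjacent to all of $N_1$.
\item If some $a\in I\cap N_1$ is adjacent to the median, move the facility to $a$.
\item Otherwise, counting forces $f(N_1)=|N_1|$ and $I=\{w,x\}$ with $x$ adjacent to the median. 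Then $x$ or $w$ costs at most $2|N_1|-1$.
\end{itemize}

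\paragraph{The case $|N_2|\ge2$ is also unfinished.} Suppose $\Fone$ is in your tight situation with all its supporters in $R$. Then it can lose $C_1+1$, and $\Ftwo$ must save a unit even when it is not tight. The fact you need is that $f_R(S)\le 2f(S)-1$ for every $S$ with $|S|\ge2$. If a median $v$ of $S$ lies in $L$, the first step from $v$ towards any other supporter lands in $R$, because $L$ is independent. That gives $f(S)+|S|-2\le 2f(S)-1$. Combined with $f_L(N_1)\le 2f(N_1)+1$, this settles $|N_2|\ge2$ directly, including $N_1=\{j\}$. So your maximality remark about $x_j$ is beside the point there.

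\paragraph{The $n\le3$ rule is not pinned down.} Under the natural reading, agent $1$ approves both facilities and the second facility goes to a report-independent neighbour $u$ of $x_1$. On the path $u,x_1,x_2,x_3$, let agents $2,3$ approve only the facility at $u$. Then $\SC=6$ while $\OPT=2$. The paper instead lets agents $2$ and $3$ successively choose among agent $1$'s cost-minimizing outcomes. Strategyproofness is then immediate. The ratio follows because agents $1$ and $2$ together pay no more than under an optimal outcome, plus triangle inequalities for agent $3$.
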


\noindent 
The mechanism is defined differently for $n\le3$ (Section~\ref{sec:graph-small}) and for $n\ge4$ (Section~\ref{sec:graph-large}). 

\subsubsection{Instances with $n\le3$ Agents}
\label{sec:graph-small}

For instances with $n\leq 3$ agents, we consider the \SL{} mechanism, which returns an outcome $\bw = (w_1,w_2)$ whose cost vector $\bigl(\cost_1(\bw \mid A_1),\cost_2(\bw\mid A_2),\cost_3(\bw\mid A_3)\bigr)$ is lexicographically minimum according to the reported approval profile $A = (A_1,A_2,A_3)$. 

\begin{theorem}
\label{lem:serial-lex}
For any connected graph and $n\le3$, the {\normalfont \SL{}} mechanism is strategyproof and its approximation ratio is at most $2$.
\end{theorem}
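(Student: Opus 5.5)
Strategyproofness follows from the standard serial-dictatorship argument, and the ratio bound follows by giving agent $1$ zero cost and comparing the remaining agents against the optimum. Throughout, the lexicographic minimum is taken over all feasible outcomes, with a fixed tie-breaking order on outcomes so that the mechanism is deterministic.

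\emph{Strategyproofness.} Fix $i$ and the reports of the others, and let $F_{<i}$ be the set of feasible outcomes that lexicographically minimize the cost vector of agents $1,\dots,i-1$. These costs depend only on $A_1,\dots,A_{i-1}$, so $F_{<i}$ does not depend on $i$'s report. Whatever $i$ reports, the output lies in $F_{<i}$. Under her truthful report, the output minimizes $\cost_i(\cdot\mid A_i)$ over $F_{<i}$; agent $i+1$ and the tie-breaking only choose among the minimizers for $i$. Hence no misreport can give $i$ strictly smaller true cost.

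\emph{Approximation.} The lexicographic minimum gives agent $1$ cost $0$: she can have both approved facilities at distance $0$ only if she approves one facility. If $A_1=\Both$, her minimum cost is $1$, obtained by putting one facility at $x_1$ and the other at a neighbour. So $\cost_1(\bw)\le \cost_1(\bo)$ for every optimal $\bo$.

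For $n\le 2$, agent $2$ then minimizes her cost among these options. I will argue directly that the result is optimal, or that its extra cost is at most $\OPT$ (note that $\OPT\ge 1$ whenever some agent approves both facilities). The cases are few, and a simpler option is to handle $n\le 2$ as a special case of the $n=3$ argument.

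For $n=3$ the key step is the inequality
\[
\SC(\bw)\le \OPT + \cost_3(\bw) - \cost_3(\bo) + (\text{agent }2\text{ slack}),
\]
and I will then bound each agent's cost under $\bw$ by distances that $\OPT$ must also pay. Concretely, I will compare $\bw$ with the outcome $\bo'$ that keeps agent $1$'s choice and otherwise follows $\bo$. Choosing $\bo'$ this way is the central trick: fix agent $1$'s facility at $x_1$ (with the other facility at a neighbour if needed), and use the triangle inequality $d(x_j,x_1)\le \cost_j(\bo)+\cost_1(\bo)$-type bounds. Agent $2$ gets cost at most her cost in $\bo'$, and each distance shift is charged to a term of $\OPT$ via paths through the optimal facility locations.

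\emph{Main obstacle.} The hardest step is bounding agent $3$, who has no lexicographic priority. The plan is to show $\cost_3(\bw)\le \cost_3(\bo)+\SC(\bo)$-style bounds: each facility position under $\bw$ is at a location $x_1$ or $x_2$ (or adjacent to one), at distance at most the cost paid in $\bo$ by agents $1,2$ for that facility. Adding these bounds gives $\SC(\bw)\le 2\,\OPT$.

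The delicate subcase is when the designated node coincides with or is adjacent to the other facility. Feasibility then forces a one-step displacement, which must be absorbed using $\OPT\ge 1$ when some agent approves both facilities, or by observing that an agent with a singleton approval actually gets cost $0$. I would handle this by enumerating the approval types of agents $1$ and $2$: singleton versus $\Both$, and same versus different facilities.
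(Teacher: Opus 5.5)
Your strategyproofness argument is the paper's serial-dictatorship argument and is fine. Your plan for the ratio also follows the paper: agents $1,2$ do no worse than in $\bo$, and agent $3$ is charged via the triangle inequality through locations tied to $x_1,x_2$. But the $n=3$ bound is only a plan, and the unit-displacement step is handled by a fix that does not work.

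\textbf{The bound for agents $1,2$.} What you need is $\cost_1(\bw)+\cost_2(\bw)\le\cost_1(\bo)+\cost_2(\bo)$, i.e., your ``agent $2$ slack'' must be shown nonpositive. Lexicographic minimality alone does not give this, since a strict gain for agent $1$ could a priori be paid for by agent $2$. It does hold here because of agent $1$'s explicit options.
\begin{itemize}
\item If $A_1=\{k\}$, then $w_k=x_1$ and any other facility agent $2$ approves sits at $x_2$. So $\cost_2(\bw)=d(x_1,x_2)\le d(x_1,o_k)+d(x_2,o_k)$ when $k\in A_2$, and $\cost_2(\bw)=0$ otherwise.
\item If $A_1=\Both$, then $\{w_1,w_2\}=\{x_1,y\}$ with $y$ a neighbour of $x_1$ on a shortest path to $x_2$. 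Hence $\cost_1(\bw)+\cost_2(\bw)$ equals $d(x_1,x_2)$ or $2d(x_1,x_2)$ according to $|A_2|$, and the triangle inequality bounds this by $\cost_1(\bo)+\cost_2(\bo)$.
\end{itemize}
Your auxiliary outcome $\bo'$ is not a good vehicle for this. It can be infeasible (e.g., $A_1=\One$ and $o_2=x_1$), and it is vacuous when $A_1=\Both$.

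\textbf{The unit displacement (the genuine gap).} Suppose $A_1=\Both$ and agent $3$ approves the facility at $y$. The triangle inequality then gives only $\cost_3(\bw)\le\cost_3(\bo)+\cost_1(\bo)+1$. Adding the inequality above yields $\SC(\bw)\le 2\OPT+1-\cost_2(\bo)-\cost_3(\bo)$. Because $\cost_1(\bo)$ is already counted twice, $\OPT\ge1$ cannot absorb the extra unit. What you need is $\cost_2(\bo)+\cost_3(\bo)\ge1$.

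This fails exactly when agents $2$ and $3$ approve different singletons and $\bo$ serves both at their own nodes, so that case needs a separate argument. The paper handles it directly: $\bw$ puts agent $2$'s facility at $y$ and agent $3$'s at $x_1$, so $\SC(\bw)=1+(d(x_1,x_2)-1)+d(x_1,x_3)=\OPT$.

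Equivalently, note that the facility at $y$ is always one that agent $2$ approves. So one of three things holds:
\begin{itemize}
\item agent $2$ approves both facilities, and $\cost_2(\bo)\ge1$;
\item agent $3$ also approves the facility at $y$, and $\cost_2(\bo)+\cost_3(\bo)\ge d(x_2,x_3)\ge1$;
\item agent $3$'s facility sits at $x_1$, and no unit is lost.
\end{itemize}

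Your fallback that ``a singleton agent gets cost $0$'' does not cover this case. It is the right observation only when $A_1=\{k\}$ and agent $2$ does not approve $F_{3-k}$, so that $w_{3-k}=x_3$.

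Finally, the easy case $n\le2$, where $\bw$ is in fact optimal, still has to be checked.
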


\begin{proof}
To show strategyproofness, fix any agent $i \in \{1,2,3\}$. Given the approval preferences reported by the agents $1,\ldots,i-1$ preceding $i$ in the priority order, the mechanism computes a set $X_i$ of outcomes that are lexicographically optimal for these agents. Crucially, $X_i$ is independent of agent $i$'s reported preferences, the mechanism selects the  outcomes in $X_i$ minimizing agent $i$'s cost, and later agents only break ties among outcomes with the same reported cost for agent $i$. Hence, agent $i$ cannot strictly improve her true cost by misreporting. 

We next show the upper bound of $2$ on the approximation ratio. We first argue that when there are $n\leq 2$ agents, the mechanism outputs an optimal outcome. This is clearly true when $n=1$ since the mechanism minimizes the cost of this single agent by definition. When there are $n=2$ agents, observe that the two facilities are placed at nodes in the median-interval of the agents that approve them over the shortest path connecting $x_1$ and $x_2$. In particular, the possible locations of the two facilities are $x_1$, a direct neighbor node of $x_1$ (along the shortest path), or $x_2$. Exactly which facilities are placed at these nodes depends on the preferences of the agents, but it is always the case that the choice is a median node. 

It remains to consider instances with $n=3$ agents. 
Let $\bw =(w_1,w_2)$ be the outcome of the mechanism, and denote by $\bo=(o_1,o_2)$ an optimal outcome.
Since $\bw$ is optimal for the first two agents, we have that 
\begin{align}
\label{eq:graph-serial-12}
  \cost_1(\bw)+\cost_2(\bw)\le \cost_1(\bo)+\cost_2(\bo).
\end{align}
We consider a few cases depending on the preferences of the agents.

\medskip
\noindent 
{\bf Case $|A_1|=1$}. 
Without loss of generality, suppose that $A_1=\{1\}$, and thus $w_1=x_1$. 
If agent $3$ approves $F_1$, then, by the triangle inequality, we have
\begin{align*}
    d(x_3,w_1) = d(x_3,x_1) \leq d(x_3,o_1) + d(x_1,o_1) = d(x_3,o_1) + \cost_1(\bo). 
\end{align*}
The location of $F_2$ depends on the preference of agent $2$. If agent $2$ approves $F_2$, then $w_2 = x_2$, and thus, if agent $3$ approves $F_2$, by the triangle inequality again, we have
\begin{align*}
d(x_3,w_2)=d(x_3,x_2) \le d(x_3,o_2)+d(x_2,o_2) \leq d(x_3,o_2)+ \cost_2(\bo). 
\end{align*}
On the other hand, if agent $2$ does not approve $F_2$, then $A_2=\{1\}$, which implies that $F_2$ is irrelevant to agents $1$ and
$2$. If agent $3$ approves $F_2$, then, since $x_3\ne x_1$, $w_2 = x_3$, and therefore 
$$d(x_3,w_2) = 0.$$
Putting everything together, in any case, we have that
\begin{align*}
\cost_3(\bw) \le \cost_3(\bo)+ \cost_1(\bo)+\cost_2(\bo), 
\end{align*}
and thus, using also \eqref{eq:graph-serial-12}, we have
\begin{align*}
    \SC(\bw) 
    &= \cost_1(\bw)+\cost_2(\bw) + \cost_3(\bw) \\
    &\leq 2\cdot \cost_1(\bo)+2\cdot \cost_2(\bo) + \cost_3(\bo) \leq 2 \cdot \OPT.
\end{align*}

\medskip
\noindent 
{\bf Case $|A_1|=2$}. 
Then, one facility is placed at $x_1$ and the other at a direct neighbor of $x_1$. Without loss of generality, let $w_1=x_1$ and $d(w_1,w_2)=1$. 
Hence, using the triangle inequality, we have
\begin{align*}
    \cost_3(\bw) 
    &= \mathbf{1}_{\{1 \in A_3\}} \cdot d(x_3,w_1) +  \mathbf{1}_{\{2 \in A_3\}} \cdot d(x_3,w_2) \\
    &\leq  \mathbf{1}_{\{1 \in A_3\}} \cdot \bigg( d(x_3,o_1) + d(w_1,o_1) \bigg) +  \mathbf{1}_{\{2 \in A_3\}} \cdot \bigg( d(x_3,o_2) + d(w_1,w_2) + d(w_1,o_2) \bigg) \\
    &\leq \cost_3(\bo) + \cost_1(\bo) + 1.
\end{align*}
Now consider the following two cases:
\begin{itemize}
\item 
$\cost_2(\bo) + \cost_3(\bo) \geq 1$.
Then,
\begin{align*}
    \cost_3(\bw) \leq \cost_1(\bo) + \cost_2(\bo) + 2 \cdot \cost_3(\bo).
\end{align*}
Therefore, using also \eqref{eq:graph-serial-12}, we have
\begin{align*}
    \SC(\bw) 
    &= \cost_1(\bw) + \cost_2(\bw) + \cost_3(\bw) \\
    &\leq 2\cdot \cost_1(\bo)+2\cdot \cost_2(\bo) + 2\cdot \cost_3(\bo) = 2 \cdot \OPT.
\end{align*}

\item 
$\cost_2(\bo) + \cost_3(\bo) = 0$.
Then it has to be the case that agent $2$ and agent $3$ approve different facilities, and the optimal outcome places their approved facilities at their nodes. Since $|A_1|=2$, we therefore have that $\OPT = d(x_1,x_2) + d(x_1,x_3)$. 
In contrast, to minimize the cost of agent $2$, the mechanism places the facility that agent $3$ approves at $x_1$ and the other facility (that agent $2$ approves) at a neighbor of $x_1$ along the shortest path between $x_1$ and $x_2$. Consequently, 
\begin{align*}
    \SC(\bw) = 1 + d(x_1,x_2)-1 + d(x_1,x_3) = \OPT.
\end{align*}
\end{itemize}
In any case, the mechanism is at most $2$-approximate and the proof is complete. 
\end{proof}

\subsubsection{Instances with $n\ge4$ Agents}
\label{sec:graph-large}
We now switch to the case $n \geq 4$, and present mechanism \OICM{} that is based on the same design principle as {\sc Parity-Median} from Section~\ref{sec:line}.

The mechanism first partitions the set of nodes $V$ into two disjoint sets $L$ and $R$.
Let $X=\{x_1,\ldots,x_n\}$ denote the set of occupied nodes, and identify each agent with her occupied node, so that the support sets $N_1$ and $N_2$ may be viewed as subsets of $X$. The mechanism selects a maximal independent set $I$ of $G[X]$ according to an arbitrary fixed deterministic rule. It then extends $I$, again using a fixed deterministic rule by adding only empty nodes, to a maximal independent set $L$ of $G$. Finally, it sets $R=V\setminus L$.

Once $L$ and $R$ have been determined, the mechanism permanently restricts $F_1$ to $L$ and $F_2$ to $R$. Each facility with non-empty support is placed at a restricted median of its support within its assigned domain, using a fixed tie-breaking rule. A facility with empty support is placed at a fixed default node in its assigned domain. See Algorithm~\ref{alg:occupied-independent-cut-median}.

\begin{algorithm}[H]
\caption{\OICM{}}
\label{alg:occupied-independent-cut-median}
\begin{algorithmic}[1]
\Require Reported approval profile $(A_1,\ldots,A_n)$
\Ensure A feasible outcome $(z_1,z_2)$
\State Construct a maximal independent set $I$ of $G[X]$
\State Extend $I$ to a maximal independent set $L$ of $G$
\State $R \gets V \setminus L$ 
\State $D_1\gets L$ and $D_2\gets R$
\For{$k\in\{1,2\}$}
    \State $N_k\gets\{i:k\in A_i\}$
    \If{$N_k=\varnothing$}
        \State $z_k\gets q_k$, for a fixed default node $q_k\in D_k$
    \Else
        \State $z_k\gets\arg\min_{v\in D_k}\sum_{i\in N_k}\dist(x_i,v)$, with fixed tie-breaking
    \EndIf
\EndFor
\State \Return $(z_1,z_2)$
\end{algorithmic}
\end{algorithm}

The first useful property of the mechanism is that the two domains are non-empty (implying that the outcome is feasible) and both form dominating sets. 

\begin{lemma}
\label{lem:domains-dominating}
The sets $L$ and $R$ defined by {\normalfont \OICM{}} are non-empty dominating sets.
\end{lemma}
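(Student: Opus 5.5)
The plan is to use two standard facts about maximal independent sets in a connected graph, together with the fact that this section assumes $n\ge4$, so $X$ is non-empty and $G$ has at least two nodes.

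\emph{$L$ is a non-empty dominating set.} Since $X\neq\varnothing$, the graph $G[X]$ has at least one node. Hence any maximal independent set $I$ of $G[X]$ is non-empty, and so is $L\supseteq I$. Next, $L$ is a maximal independent set of $G$. If some node $v\notin L$ had no neighbor in $L$, then $L\cup\{v\}$ would still be independent, which contradicts maximality. So every node outside $L$ has a neighbor in $L$, and $L$ is dominating.

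\emph{$R$ is non-empty.} Since $G$ is connected and has at least two nodes, every node $u$ has at least one neighbor. If $u\in L$, all neighbors of $u$ lie outside $L$ by independence, so they lie in $R$. In particular $R\neq\varnothing$, because $L$ is non-empty and its nodes have neighbors.

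\emph{$R$ is dominating.} Take any $u\notin R$, that is, $u\in L$. As just argued, $u$ has a neighbor, and that neighbor is in $R$. Nodes already in $R$ are trivially dominated. So $R$ is dominating.

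The only point that needs care is the non-degeneracy: we need a node with a neighbor, which uses connectivity and $|V|\ge2$. This follows from $n\ge4$ and the agents occupying distinct nodes. The construction detail that $I$ is built inside $G[X]$ and then extended using only empty nodes plays no role here; it matters only later, in the approximation analysis. Finally, because $L$ and $R$ are disjoint and non-empty, the placements $z_1\in L$ and $z_2\in R$ are always distinct, so the outcome is feasible.
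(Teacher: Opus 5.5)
Your proof is correct and follows essentially the same route as the paper. Maximality of $L$ gives domination by $L$, and independence of $L$ together with connectivity and $|V|\ge 2$ gives every node of $L$ a neighbor in $R$, which yields both $R\neq\varnothing$ and domination by $R$; the only cosmetic difference is that the paper gets $R\neq\varnothing$ by noting that no independent set of a connected graph on at least two nodes can equal $V$.
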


\begin{proof}
The set $L$ is non-empty because the independent set $I$ of $G[X]$ is non-empty. In addition, since $G$ is connected and has at least two nodes, no independent set can equal all of $V$, and thus $R = V\setminus L$ is non-empty as well.
The maximality of $L$ implies that every node of $R$ has a neighbor in $L$. Conversely, every node of $L$ has a neighbor since $G$ is connected, and this neighbor must be in $R$ since $L$ is an independent set. 
\end{proof}

The second property of the mechanism is restricted-median monotonicity, which underlies its strategyproofness, as well as the strategyproofness of the \EPM{} mechanism for the line.

\begin{lemma}
\label{lem:cut-mono}
Let $D\subseteq V$ be a fixed and non-empty set of nodes. For every non-empty set of agents $S$, let
\[
    g_D(S)\in\arg\min_{v\in D}\sum_{i\in S}\dist(x_i,v)
\]
with fixed tie-breaking. Then, for every $i\in S$,
\[
    \dist(x_i,g_D(S))
    \le
    \dist\bigl(x_i,g_D(S\setminus\{i\})\bigr),
\]
where $g_D(\varnothing)$ denotes the fixed default node of $D$.
\end{lemma}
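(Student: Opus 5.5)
The plan is a standard exchange argument comparing the two restricted minimizers. Write $f_T(v)=\sum_{j\in T}\dist(x_j,v)$, and set $u=g_D(S)$ and $u'=g_D(S\setminus\{i\})$. Since $f_S = f_{S\setminus\{i\}} + \dist(x_i,\cdot)$, we have
\[
f_S(u)=f_{S\setminus\{i\}}(u)+\dist(x_i,u),\qquad f_S(u')=f_{S\setminus\{i\}}(u')+\dist(x_i,u').
\]
Optimality of $u$ for $S$ over $D$ gives $f_S(u)\le f_S(u')$. Optimality of $u'$ for $S\setminus\{i\}$ over $D$ gives $f_{S\setminus\{i\}}(u')\le f_{S\setminus\{i\}}(u)$. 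Adding the two inequalities and cancelling the $f_{S\setminus\{i\}}$ terms yields $\dist(x_i,u)\le \dist(x_i,u')$, which is exactly the claim. Both points lie in $D$, so each minimality statement is legitimately applicable to the other point.

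The one case this does not cover is $S=\{i\}$. Then $S\setminus\{i\}=\varnothing$, and $u'$ is the fixed default node of $D$ rather than a minimizer. I would handle it directly. Here $u$ minimizes $\dist(x_i,v)$ over $v\in D$, so $\dist(x_i,u)\le \dist(x_i,q)$ for every $q\in D$, in particular for the default node. This case is just the first inequality applied with an arbitrary $u'\in D$.

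Tie-breaking plays no role, since the argument uses only that each output is some minimizer over $D$. I expect no real obstacle. The only points needing care are the empty-set case and noting that the argument does not depend on the graph structure, so it applies equally to $D=P_k$ on the line and to $D\in\{L,R\}$ in \OICM{}.
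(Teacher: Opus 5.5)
Your proof is correct and matches the paper's argument: you add the two restricted-optimality inequalities for $g_D(S)$ and $g_D(S\setminus\{i\})$ so that the $S\setminus\{i\}$ terms cancel. You also handle $S=\{i\}$ separately, as the paper does, by noting that $g_D(\{i\})$ is a closest node of $D$ to $x_i$, so the default node cannot be strictly closer.
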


\begin{proof}
Suppose first that $S\setminus\{i\}\ne\varnothing$.
The optimality of $g_D(S)$ for $S$ implies that
\[
    \sum_{j\in S}\dist(x_j,g_D(S))\le\sum_{j\in S}\dist(x_j,g_D(S\setminus\{i\})).
\]
Similarly, the optimality of $g_D(S\setminus\{i\})$ for $S\setminus\{i\}$ implies that
\[
    \sum_{j\in S\setminus\{i\}}\dist(x_j,g_D(S\setminus\{i\}))
    \le
    \sum_{j\in S\setminus\{i\}}\dist(x_j,g_D(S)).
\]
By adding these two inequalities, we obtain $\dist(x_i,g_D(S))\le\dist(x_i,g_D(S\setminus\{i\}))$.
Finally, if $S=\{i\}$, then $g_D(S)$ is one of the closest nodes of $D$ to $x_i$, and thus no default node can be strictly closer.
\end{proof}

We can now show that the mechanism is strategyproof. 

\begin{lemma}
\label{lem:oicm-sp}
The {\normalfont \OICM{}} mechanism is strategyproof.
\end{lemma}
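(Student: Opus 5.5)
The plan is to mirror the proof of Corollary~\ref{cor:epm-sp}, with Lemma~\ref{lem:cut-mono} in the role that Lemma~\ref{lem:parity-sp} played on the line.

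The first step is to observe that the domains $D_1=L$ and $D_2=R$ do not depend on the reported approvals. The set $X$ of occupied nodes is public. The maximal independent set $I$ of $G[X]$ and its extension $L$ are computed by fixed deterministic rules from $G$ and $X$ alone. Hence no unilateral misreport can change $L$ or $R$. By Lemma~\ref{lem:domains-dominating} both sets are non-empty, and they are disjoint, so the outcome is always feasible and the default nodes $q_k$ are well defined. Moreover, the location $z_k$ equals $g_{D_k}(N_k)$ in the notation of Lemma~\ref{lem:cut-mono}, with $g_{D_k}(\varnothing)=q_k$. So $z_k$ depends only on the support set $N_k$.

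Next, fix an agent $i$ with true approval set $A_i$ and any misreport $A_i'$. Because costs are additive over approved facilities, it suffices to show that for each $k\in A_i$, the term $\dist(x_i,z_k)$ does not strictly decrease. Fix such a $k$. Since $k\in A_i$, we have $i\in N_k$ under truthful reporting. There are two cases for the report $A_i'$.
\begin{itemize}
\item If $k\in A_i'$, then $N_k$ is unchanged, so $z_k$ is unchanged.
\item If $k\notin A_i'$, then $N_k$ becomes $N_k\setminus\{i\}$, and Lemma~\ref{lem:cut-mono} with $D=D_k$ and $S=N_k$ gives $\dist(x_i,g_{D_k}(N_k))\le\dist(x_i,g_{D_k}(N_k\setminus\{i\}))$. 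This covers the case $N_k=\{i\}$ via the default node.
\end{itemize}
Facilities with $k\notin A_i$ do not enter her true cost, so adding or removing approval for them is irrelevant. Summing over $k\in A_i$ shows that her true cost weakly increases under any deviation.

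The only potential obstacle is making sure the partition truly cannot be influenced by reports. This holds because the construction uses only $X$, never $A$, so the argument should be essentially routine.
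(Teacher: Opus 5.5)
Your proposal is correct and matches the paper's proof essentially step for step. The domains depend only on public data. Withdrawing approval for a truly approved facility is handled by Lemma~\ref{lem:cut-mono}, including the default-node case. Adding approval only affects terms outside the true cost, so additivity concludes. Your version simply spells out more explicitly that each $z_k$ depends only on $N_k$ and that the outcome is always feasible.
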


\begin{proof}
The domains $L$ and $R$ are computed based only on public information. If an agent withdraws approval for a facility that she truly approves, Lemma~\ref{lem:cut-mono} shows that this facility cannot move strictly closer to her. On the other hand, if she adds approval for a facility that she does not truly approve, only a term outside her true cost is affected. Since costs are additive over approved facilities, no deviation is profitable.
\end{proof}

Next, we focus on bounding the approximation ratio. To simplify our notation in the analysis, for any set of agents $S \subseteq N$ and any set of nodes $U \subseteq V$, let 
\[
    f_U(S)=\min_{v\in U}\sum_{i\in S}\dist(x_i,v)
\]
be the minimum total distance of the agents in $S$ from any node of $U$. For completeness, define $f_U(\varnothing)=0$. We will also write $f(S)$ as shorthand for $f_V(S)$. Clearly, when $S \in \{N_1,N_2\}$, $f_U(S)$ gives us the total contribution of the agents in $S$ to either the social cost of the mechanism when $U \in \{L,R\}$, or to the optimal social cost when $U=V$. Note that, due to the constraint that facilities occupy distinct nodes, we have $\OPT\geq f(N_1)+f(N_2)$. In addition, since agents occupy different nodes of the graph, every candidate node has distance zero from at most one agent of $S$, implying that $f(S) \geq |S|-1$; this will come in handy in the proofs of the following structural lemmas. 

We next establish three bounds comparing the cost of serving a set of agents from one of the restricted domains with its optimal unrestricted one-facility cost. We begin with a general estimate that applies to any dominating domain (and can thus be applied to both $L$ and $R$).

\begin{lemma}
\label{lem:dominating-rounding}
Let $D\subseteq V$ be a dominating set. For every set of agents $S$,
\[
    f_D(S)\le 2\cdot f(S)+1.
\]
\end{lemma}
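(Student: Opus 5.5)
The plan is to compare against an unrestricted one-facility optimum and round it into $D$. If $S=\varnothing$, both sides are $0$ and there is nothing to prove. Otherwise, let $v\in V$ be a node attaining $f(S)$, i.e., $\sum_{i\in S}\dist(x_i,v)=f(S)$. If $v\in D$, then $f_D(S)\le f(S)$ and we are done. Otherwise, since $D$ is dominating, $v$ has a neighbor $u\in D$. By the triangle inequality, $\dist(x_i,u)\le \dist(x_i,v)+1$ for every $i\in S$, so
\[
f_D(S)\le \sum_{i\in S}\dist(x_i,u)\le f(S)+|S|.
\]

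It then remains to bound $|S|$ in terms of $f(S)$. We use the observation stated just before the lemma: because agents occupy distinct nodes, at most one agent of $S$ is at distance zero from $v$, so $f(S)\ge |S|-1$, i.e., $|S|\le f(S)+1$. Substituting gives $f_D(S)\le 2f(S)+1$, as claimed.

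The argument is short, and the only point needing care is the additive $+1$ per agent when moving from $v$ to its neighbor $u$. This is exactly absorbed by the bound $|S|\le f(S)+1$, which relies on distinctness of agent locations. A slight refinement is available: when $x_i=v$ for some agent $i$, the inequality $\dist(x_i,u)\le \dist(x_i,v)+1$ still holds with equality, so no case split is needed. I do not expect any genuine obstacle; the only thing to double-check is the empty-set and $v\in D$ conventions, which are covered above.
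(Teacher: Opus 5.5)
Your proof is correct and follows the paper's argument: take an unrestricted median $v$, move to a neighbor in $D$ if $v\notin D$ at an additive cost of at most $|S|$, and absorb this via $f(S)\ge |S|-1$. One trivial slip: for $S=\varnothing$ the right-hand side is $2f(\varnothing)+1=1$, not $0$, but the inequality holds all the same.
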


\begin{proof}
The statement is immediate for $S=\varnothing$ since, in this case, $f(S)=0$ and $f_D(S) \leq 1$. 
So, suppose that $S\ne\varnothing$, and let $v$ be an unrestricted median of $S$. If $v\in D$, then $f_D(S)=f(S)$. Otherwise, if $v \not\in D$, the fact that $D$ is a dominating set implies that $v$ has a neighbor $y\in D$.
Therefore, since $f_D(S)$ is the minimum total distance of the agents in $S$ from any node in $D$, 
\[
    f_D(S)
    \le \sum_{i\in S}d(x_i,y)
    \le f(S)+|S|.
\]
Since $f(S)\ge |S|-1$, we obtain
\[
    f_D(S)\le 2\cdot f(S)+1,
\]
and the proof is complete.
\end{proof}

The next two lemmas provide the improved  bound $f_D(S)\leq 2\cdot f(S)-1$ under conditions that, as we show in Theorem~\ref{thm:oicm-two}, cannot fail for both domains simultaneously; hence, whenever Lemma~\ref{lem:dominating-rounding} is tight for one of 
$L$ and $R$, the other saves the unit that is lost. For the domain $R$, the fact that its complement $L$ is an independent set allows us to obtain this sharper bound. In particular, when there are at least two supporters, we can obtain a bound strictly better than $2$.

\begin{lemma}
\label{lem:independent-complement}
For the set of nodes $R$ as computed by {\normalfont \OICM{}}, and any set of agents $S$,
\begin{align*}
    f_R(S) \le 
    \begin{cases}
        1, & \text{if } |S| \leq 1; \\
        2\cdot f(S)-1,  & \text{if } |S| \geq 2. \\
    \end{cases}
\end{align*}
\end{lemma}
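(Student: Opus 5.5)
The plan is a direct case analysis on $|S|$ and on where an unrestricted median of $S$ lies. We use two facts: $R$ is dominating (Lemma~\ref{lem:domains-dominating}), and $L=V\setminus R$ is an independent set, so \emph{every} neighbor of a node of $L$ lies in $R$. We also use the bound $f(S)\ge |S|-1$ noted before Lemma~\ref{lem:dominating-rounding}.

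\emph{Case $|S|\le 1$.} If $S=\varnothing$, then $f_R(S)=0$ by definition. If $S=\{i\}$, either $x_i\in R$ and $f_R(S)=0$, or $x_i\in L$. In the latter case, since $R$ is dominating, $x_i$ has a neighbor in $R$, giving $f_R(S)\le 1$.

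\emph{Case $|S|\ge 2$.} Let $v$ be an unrestricted median of $S$, so $\sum_{i\in S}d(x_i,v)=f(S)$.
\begin{itemize}
\item If $v\in R$, then $f_R(S)=f(S)$. Since $f(S)\ge |S|-1\ge 1$, this is at most $2f(S)-1$.
\item If $v\in L$, this is the key step: we do not merely pick an arbitrary neighbor, we choose it to save one unit. Because agents occupy distinct nodes and $|S|\ge 2$, some agent $j\in S$ has $x_j\ne v$. Let $y$ be the neighbor of $v$ on a shortest path from $v$ to $x_j$. Since $L$ is independent, $y\in R$. Moving from $v$ to $y$ decreases agent $j$'s distance by exactly $1$ and increases each of the other $|S|-1$ distances by at most $1$. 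Hence
\[
f_R(S)\le \sum_{i\in S}d(x_i,y)\le f(S)+(|S|-1)-1 = f(S)+|S|-2 \le 2f(S)-1,
\]
where the last step uses $|S|-1\le f(S)$.
\end{itemize}

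The only subtle point is this choice of $y$ toward a supporter not located at $v$. It is exactly where independence of $L$ is used, and it is what improves the additive $+1$ of Lemma~\ref{lem:dominating-rounding} to $-1$. Once $y$ is chosen this way, the rest is routine.
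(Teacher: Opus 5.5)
Your proof is correct and follows essentially the same route as the paper's. You use the same case split on $|S|$ and on whether the unrestricted median $v$ lies in $R$ or $L$, and the same key step: take $y$ as the first node on a shortest path from $v$ to a supporter not located at $v$, so $y\in R$ by independence of $L$, and then finish with $f(S)\ge |S|-1$.
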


\begin{proof}
The case $|S| \leq 1$ is immediate: $f_R(\varnothing)=0$, while a single agent in $S$ either lies in $R$ (and thus $f_R(S)=0$) or has a neighbor in $R$ (and thus $f_R(S) \leq 1$).

Now consider the case $|S|\ge2$. Let $v$ be an unrestricted median of $S$ so that $\sum_{i \in S} d(i,v) = f(S)$. 
If $v\in R$, then $f_R(S)=f(S)$. Since agents occupy distinct nodes and $|S|\geq 2$, we have that $f(S) \geq 1$, and thus $f_R(S) \le2\cdot f(S)-1$. So, we can assume that $v\in L$. 
Choose an agent $j \in S$ such that $j \neq v$. 
Let $y$ be the first node on a shortest path from $v$ to $j$. 
Since $L$ is an independent set, it must be that $y\in R$. 
Hence, since $f_R(S)$ is the minimum total distance of all agents in $S$ from any node in $R$,
\begin{align*}
    f_R(S) \leq \sum_{i \in S} d(x_i,y).
\end{align*}
Moving from $v$ to $y$ decreases the distance of $j$ by one and increases the distance of every other agent in $S$ by at most one. Hence, using also the fact that $f(S) \geq |S|-1$, we have
\[
    \sum_{i \in S} d(x_i,y) \le f(S)-1+(|S|-1)=f(S)+|S|-2\le2\cdot f(S)-1
\]
and the proof is complete.
\end{proof}

The remaining difficult case arises when the facility restricted to
$R$ is approved by at most one agent. Then, the facility restricted to $L$ is
either approved by all agents, or by all but one of them. The construction of
$I$ inside the occupied subgraph provides the additional structure
needed to obtain the following stronger bound.

\begin{lemma}
\label{lem:almost-complete-support}
Let $I$, $X$ and $L$ be as defined by {\normalfont \OICM{}}, with $I\subseteq L$. Suppose that
\[
    S=X
    \qquad\text{or}\qquad
    S=X\setminus\{x\}
\]
for some $x\in X$. If $|S|\ge3$, then
\[
    f_L(S)\le 2\cdot f(S)-1.
\]
\end{lemma}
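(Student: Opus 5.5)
We follow the scheme of Lemmas~\ref{lem:dominating-rounding} and~\ref{lem:independent-complement}. Let $v$ be an unrestricted median of $S$, so that $\sum_{i\in S}\dist(x_i,v)=f(S)$, and write $s=|S|\ge3$. Since agents occupy distinct nodes, $f(S)\ge s-1\ge2$. If $v\in L$, then $f_L(S)=f(S)\le 2f(S)-1$ and we are done. So assume $v\in R$. Since $L$ is dominating (Lemma~\ref{lem:domains-dominating}), the proof of Lemma~\ref{lem:dominating-rounding} gives $f_L(S)\le f(S)+s$. This is at most $2f(S)-1$ whenever $f(S)\ge s+1$. Only two cases therefore remain, $f(S)=s-1$ and $f(S)=s$. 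In each we must exhibit a node $y\in L$ with $\sum_{i\in S}\dist(x_i,y)\le 2f(S)-1$. The general tool is to take $y\in L$ to be a neighbour of $v$ on a shortest path from $v$ to some agent of $S$. Moving to such a $y$ saves one unit for that agent and costs at most one unit for each other agent, giving a total of at most $f(S)+s-2\le 2f(S)-1$. The only other tool is to put $y$ directly on an agent's node and bound all distances through $v$.

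\emph{Case $f(S)=s-1$.} Then $v$ is occupied by an agent of $S$, and every other agent of $S$ is adjacent to $v$. Since $v\in X\setminus L$, we have $v\notin I$, and maximality of $I$ in $G[X]$ gives an occupied neighbour $u\in I\subseteq L$ of $v$. If $u\in S$, we take $y=u$: agent $v$ pays $1$, agent $u$ pays $0$, and every other agent is within distance $2$ of $u$ via $v$. The total is at most $1+2(s-2)=2s-3=2f(S)-1$. If $u\notin S$, then $S=X\setminus\{x\}$ and $u=x$. Either some agent $w\in S\setminus\{v\}$ lies in $I$, and then $y=w$ gives the same bound, or $I\cap S=\varnothing$. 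In the latter case $I=\{x\}$, and maximality forces $x$ to be adjacent to all of $S$. Then $y=x$ costs exactly $s\le 2s-3$, using $s\ge3$.

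\emph{Case $f(S)=s$.} There are two configurations. In the first, $v$ is unoccupied by $S$ and all agents of $S$ are adjacent to $v$. Any agent $w\in S\cap L$ then gives cost at most $2(s-1)\le 2s-1$. If $S\cap L=\varnothing$, then as before $I=\{x\}$ with $x$ adjacent to every agent of $S$, and $y=x$ costs $s$. In the second configuration, $v$ is occupied by an agent of $S$, one agent $z\in S$ is at distance $2$ from $v$, and all other agents are adjacent to $v$. Again $v\notin I$, so $v$ has a neighbour $u\in I$. If $u\in S$, then $u$ lies on a shortest path from $v$ to $u$, and the general tool yields $2f(S)-2$. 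Otherwise $u=x$ is the excluded node, and we examine $S\cap I$. A neighbour of $v$ in $S\cap I$ again allows the general tool. Otherwise $S\cap I\subseteq\{z\}$.

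I expect this last subcase to be the main obstacle: $v$ is an occupied node of $R$ whose only $I$-neighbour is the excluded agent $x$, and the unique agent at distance $2$ may or may not lie in $I$. Here I would first try the node $y\in L$ adjacent to $v$ on a shortest $v$--$z$ path, which exists if that path's middle node lies in $L$. Failing that, I would use maximality of $I$ to show that every agent of $S$ adjacent to $v$ is also adjacent to $x$ or to $z$. Placing the facility at $y=x$ then costs at most $1+2(s-2)+\dist(x,z)$, and I would try to show $\dist(x,z)\le 2$, possibly by choosing a different median $v$ of $S$ if necessary, so that the total is at most $2s-1=2f(S)-1$.
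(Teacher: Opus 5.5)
\textbf{There is a genuine gap: the last subcase is never proved.} Your cases $f(S)=s-1$, the first configuration of $f(S)=s$, and the subcases where $v$ has a neighbour in $S\cap I$ are all correct. They essentially reproduce the paper's Cases 1 and 2; the paper organizes by $I\cap S$ rather than by the value of $f(S)$. The subcase you flag as ``the main obstacle'' is exactly the paper's Case 3, which is the heart of the lemma, and you leave it open.

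\textbf{Your planned route fails.} Neither proving $d(x,z)\le 2$ nor re-choosing the median can close it. Take six occupied nodes $x,v,m_1,m_2,m_3,z$ with edges $xv$, $xm_2$, $xm_3$, $vm_1$, $vm_2$, $vm_3$, $m_1z$. Let $S=X\setminus\{x\}$ and $I=L=\{x,z\}$, which is a maximal independent set of $G[X]=G$. Then:
\begin{itemize}
\item $v$ is the \emph{unique} median of $S$, with $f(S)=5=s$;
\item $z$ is the agent at distance $2$, and $x$ is the only $I$-neighbour of $v$;
\item the middle node $m_1$ of the unique shortest $v$--$z$ path lies in $R$;
\item $d(x,z)=3$.
\end{itemize}
So your estimate $1+2(s-2)+d(x,z)=10$ exceeds $2f(S)-1=9$, and there is no other median to switch to. The lemma still holds here, since placing at $x$ actually costs $8$. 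Your count is simply too crude: it charges every agent adjacent to $v$ two units, even those adjacent to $x$.

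\textbf{The missing idea is to use maximality of $I$ by branching, not to bound $d(x,z)$.}
\begin{itemize}
\item If $S\cap I=\varnothing$, then $I=\{x\}$, every agent of $S$ is adjacent to $x$, and placing at $x$ costs $s$.
\item If $S\cap I=\{z\}$, then $I=\{x,z\}$, so every agent of $S\setminus\{v,z\}$ is adjacent to $x$ or to $z$. Split on whether at least one of them is adjacent to $x$.
\begin{itemize}
\item If some $y\in S\setminus\{v,z\}$ is adjacent to $x$, place the facility at $x$. The cost is at most $1+1+3+2(s-3)=2s-1=2f(S)-1$. Agent $y$ pays $1$ instead of $2$, and that saved unit absorbs the crude bound $d(x,z)\le d(x,v)+d(v,z)=3$.
\item Otherwise every agent of $S\setminus\{v,z\}$ is adjacent to $z$. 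Place the facility at $z\in I\subseteq L$; the cost is at most $d(v,z)+(s-2)=s\le 2f(S)-1$.
\end{itemize}
\end{itemize}
In the example above the first branch applies with $y=m_2$.
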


\begin{proof}
Let $v$ be an unrestricted median of $S$, and recall that $f(S)\ge|S|-1\ge2$.
If $v\in L$, then we immediately have that $f_L(S)=f(S)\le 2\cdot f(S)-1$.
Hence, assume that $v\notin L$; since $I\subseteq L$, we also have $v\notin I$. As $L$ is a
dominating set, $v$ has a neighbor $y \in L$, and therefore, since $f_L(S)$ is the minimum total distance of the agents in $S$ from any node in $L$, 
\[
    f_L(S) \le \sum_{i \in S} d(i,y) \le \sum_{i \in S} \bigg( d(i,v) + 1 \bigg) = f(S)+|S|.
\]
Consequently, if $f(S)\ge|S|+1$, then $f_L(S)\le 2\cdot f(S)-1$. For the remainder of the proof, we may therefore assume that $f(S) \le |S|$. We consider the following exhaustive cases, and for each of them show the desired bound. 

\medskip
\noindent
{\bf Case 1: $I\cap S=\varnothing$.}
Since $I$ is non-empty, $I\subseteq X$, and $I\cap S=\varnothing$, it cannot be
that $S=X$; hence $S=X\setminus\{x\}$ for some $x\in X$. Moreover,
$I\subseteq X\setminus S=\{x\}$, and thus $I=\{x\}$. Since $I$ is a maximal independent set of
$G[X]$, node $x$ must be adjacent to every agent in $S$, and therefore, as
$x\in I\subseteq L$,
\[
    f_L(S)\le\sum_{i\in S}d(x_i,x)=|S|\le2|S|-3\le2\cdot f(S)-1,
\]
where the second to last inequality follows by the fact that $|S|\ge3$, and the last inequality follows since $f(S)\ge|S|-1$.

\medskip
\noindent
{\bf Case 2: $I\cap S \ne \varnothing$ and $d(a,v)=1$ for some $a \in I \cap S$.}
Consider placing the facility at $a$. Relative to placing it at $v$, the distance of the agent at $a$ decreases by one, while the distance of every other agent in $S$ increases by at most one. Hence, since $f(S) \ge |S|-1$,
\[
    f_L(S)\le f(S)-1+(|S|-1) \le2\cdot f(S)-1.
\]

\medskip
\noindent
{\bf Case 3: $I\cap S\ne\varnothing$ and $d(a,v)\ge2$ for every $a\in I\cap S$.}
We first make the following structural claim. 

\medskip
\noindent
{\bf Claim A.}
{\it At most one agent is at distance at least $2$ from $v$. Moreover, if such an agent $w$ exists, then $v\in S$, $f(S)=|S|$, $d(w,v)=2$, and every node of $S\setminus\{v,w\}$ is adjacent to $v$.}

\smallskip
\noindent
{\it Proof of Claim A.}
Let $T=\{u\in S:d(u,v)\ge2\}$ be all the agents in $S$ that are at distance at least $2$ from $v$. 
Since agents occupy distinct nodes, at most one agent is located at $v$. Each agent $i \not\in T \cup \{v\}$ contributes at least one to $f(S)$, while every agent in $T$ contributes at least $2$. Therefore,
\[
    f(S)\ge|S|-|T|-\mathbf{1}_{\{v\in S\}}+2|T|=|S|+|T|-\mathbf{1}_{\{v\in S\}}.
\]
Since $f(S)\le|S|$, it follows that $|T|\le\mathbf{1}_{\{v\in S\}}\le1$. If $T\ne\varnothing$, then necessarily $|T|=1$ and $v\in S$, and equality must hold throughout. Hence, $f(S)=|S|$ and $d(w,v)=2$ for the unique node $w\in T$, while every agent in $S\setminus\{v,w\}$ is at distance exactly $1$ from $v$.
\hfill$\lhd$

\medskip
Since $I\cap S\ne\varnothing$ and every node of $I\cap S$ is at distance at least $2$ from $v$, Claim~A implies that $|I\cap S|=1$. In particular, $I \cap S = \{w\}$, $v\in S$, $f(S)=|S|$, $d(w,v)=2$, and every agent in $S\setminus\{v,w\}$ is adjacent to $v$.

\medskip
\noindent
{\bf Claim B.}
{\it There exists a node $x\in X$ such that $S=X\setminus\{x\}$, $x\in I$, and $d(x,v)=1$. Consequently $I=\{w,x\}$, and every node of $S\setminus\{v,w\}$ is adjacent to at least one of $w$ and $x$.}

\smallskip
\noindent
{\it Proof of Claim B.}
Since $v\in S\subseteq X$ and $v\notin I$, the maximality of $I$ in $G[X]$ implies that $v$ has a neighbor in $I$. Since $I\cap S=\{w\}$ and $d(w,v)=2$, this neighbor cannot be $w$. Therefore, $S=X\setminus\{x\}$ for some $x\in X$, and the neighbor of $v$ in $I$ must be this node $x$, so that $x\in I$ and $d(x,v)=1$. Since $I\cap S=\{w\}$ and $X\setminus S=\{x\}$, we obtain $I=\{w,x\}$. Finally, due to the fact that $S\setminus\{v,w\} = X\setminus\{x,v,w\} \subseteq X\setminus I$ and the maximality of $I=\{w,x\}$ in $G[X]$, every node of $S\setminus\{v,w\}$ is adjacent to at least one of $w$ and $x$.
\hfill$\lhd$

\medskip
By Claim~B, every node of $S\setminus\{v,w\}$ is adjacent to at least one of $w$ and $x$.
Suppose first that some node $y \in S\setminus\{v,w\}$ is adjacent to $x$, and consider placing the facility at $x$. We have $d(v,x)=1$, $d(y,x)=1$, and, by the triangle inequality, $d(w,x)\le d(w,v)+d(v,x)=3$. Each of the remaining agents $i \in S\setminus\{v,w,y\}$ is adjacent to $v$, and thus $d(i,x) \leq d(i,v) + d(v,x) = 2$. 
Putting all of these together, and since $f(S) = |S|$, 
\[
    f_L(S)\le \sum_{i \in S} d(x_i,x) \le 1+1+3+2(|S|-3)=2|S|-1=2\cdot f(S)-1.
\]
Finally, suppose that no node of $S\setminus\{v,w\}$ is adjacent to $x$. By Claim~B, every such node must then be adjacent to $w$, and therefore
\[
    f_L(S)\le \sum_{i \in S} d(x_i,w) \le d(v,w)+\sum_{i\in S\setminus\{v,w\}}d(i,w)
    \le 2+(|S|-2)=|S| < 2\cdot f(S)-1,
\]
where the last inequality follows by the fact that $f(S) = |S| > 1$. 
This completes the proof.
\end{proof}

We can now combine the three bounds proved in the lemmas above.
When there are at least two agents approving $F_2$, the additive loss for the dominating domain $L$ (Lemma~\ref{lem:dominating-rounding}) is canceled by the sharper estimate for $R$ (Lemma~\ref{lem:independent-complement}). On the other hand, when there is at most one agent approving $F_2$, the bound in Lemma~\ref{lem:almost-complete-support} supplies the corresponding improvement directly for $L$. These lead to the desired bound of $2$ on the approximation ratio of \OICM{}.

\begin{theorem}
\label{thm:oicm-two}
For every connected graph and $n\ge4$, the approximation ratio of the {\normalfont \OICM{}} mechanism is at most $2$.
\end{theorem}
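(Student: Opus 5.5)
The plan is to bound the mechanism's social cost as $f_L(N_1)+f_R(N_2)$ and compare it with $\OPT\ge f(N_1)+f(N_2)$. The comparison is term-by-term, with the additive units handled globally. By Lemma~\ref{lem:cut-mono}'s setup, each facility sits at a minimizer over its domain, so $\SC(\OICM{})=f_L(N_1)+f_R(N_2)$. It therefore suffices to show $f_L(N_1)+f_R(N_2)\le 2f(N_1)+2f(N_2)$. I will split into cases according to $n_2=|N_2|$.

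\textbf{Case $n_2\ge2$.} Lemma~\ref{lem:dominating-rounding} applied to the dominating set $L$ (Lemma~\ref{lem:domains-dominating}) gives $f_L(N_1)\le 2f(N_1)+1$. Lemma~\ref{lem:independent-complement} gives $f_R(N_2)\le 2f(N_2)-1$. Summing yields $\SC\le 2(f(N_1)+f(N_2))\le 2\cdot\OPT$. The subcase $N_1=\varnothing$ is trivially covered, since then $f_L(N_1)=0$.

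\textbf{Case $n_2\le1$.} Since every agent approves some facility, $N_1\supseteq X\setminus N_2$. Hence $N_1=X$ or $N_1=X\setminus\{x\}$, and $|N_1|\ge n-1\ge3$ because $n\ge4$. Lemma~\ref{lem:almost-complete-support} then gives $f_L(N_1)\le 2f(N_1)-1$. The first part of Lemma~\ref{lem:independent-complement} gives $f_R(N_2)\le1$. Summing yields $\SC\le 2f(N_1)\le 2\cdot\OPT$, using $f(N_2)\ge0$.

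\textbf{Main obstacle.} The delicate point is that $\OPT\ge f(N_1)+f(N_2)$ is exactly the bound needed, so no slack is available. The additive units must cancel exactly, and the case split above is designed so that they do. One check remains: the identification of agents with nodes, so that $N_1$ matches the hypothesis of Lemma~\ref{lem:almost-complete-support}. I would also double-check the edge case $N_2=\varnothing$, where $N_1=X$ and $f_R(\varnothing)=0$; the default placement of $F_2$ contributes nothing to the social cost in that case.
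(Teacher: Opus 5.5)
Your proof is correct and is essentially the paper's own argument. You use the same split on $|N_2|$: when $|N_2|\ge 2$, the $+1$ from Lemma~\ref{lem:dominating-rounding} for $L$ cancels against the $-1$ from Lemma~\ref{lem:independent-complement} for $R$; when $|N_2|\le 1$, you apply Lemma~\ref{lem:almost-complete-support} (with $|N_1|\ge n-1\ge 3$) together with $f_R(N_2)\le 1$, and your checks of the edge cases $N_1=\varnothing$ and $N_2=\varnothing$ agree with the convention $f_U(\varnothing)=0$.
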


\begin{proof}
Let $\bw=(w_1,w_2)$ be the outcome of \OICM{}. Since $\Fone$ is restricted to $L$ and $\Ftwo$ is restricted to $R$,
\[
    \SC(\bw) = f_L(N_1)+f_R(N_2).
\]
By the separable lower bound established above,
\[
    \OPT\ge f(N_1)+f(N_2).
\]
We distinguish two cases.

\medskip
\noindent
{\bf Case $|N_2|\ge2$.} 
By Lemma~\ref{lem:dominating-rounding} for $L$ and Lemma~\ref{lem:independent-complement} for $R$ with $|N_2|\geq 2$, we obtain 
\begin{align*}
    \SC(\bw)
    &=f_L(N_1)+f_R(N_2)\\
    &\le2f(N_1)+1+2f(N_2)-1\\
    &=2\bigl(f(N_1)+f(N_2)\bigr)\\
    &\le 2\cdot \OPT.
\end{align*}

\medskip
\noindent 
{\bf Case $|N_2|\le1$.}
Since every agent approves at least one
facility, when $|N_2|=0$ we have $N_1=N$, while when $|N_2|=1$ and for the unique agent $i \in N_2$, we have either $N_1=N$ if $i$ approves both facilities or $N_1=N\setminus\{i\}$ when $i$ approves only $F_2$.

In other words, by identifying agents with their occupied nodes,
\[
    N_1=X
    \qquad\text{or}\qquad
    N_1=X\setminus\{x_i\}.
\]
Since $n\ge4$, we have $|N_1|\ge3$, and
Lemma~\ref{lem:almost-complete-support} gives us that
\[
    f_L(N_1)\le 2\cdot f(N_1)-1.
\]
Moreover, by Lemma~\ref{lem:independent-complement}, we have that $f_R(N_2) \le 1$.
Putting these together, we obtain
\begin{align*}
    \SC(\bw) \le2\cdot f(N_1)-1+1 \le 2\cdot \OPT.
\end{align*}
This completes the proof.
\end{proof}

\subsection{Lower Bound}
\label{sec:general:lower}
We now prove the following lower bound for graphs beyond the line. 

\begin{theorem}\label{thm:lower:3/2:general-graphs}
For connected graphs, the approximation ratio of any deterministic strategyproof mechanism is at least $3/2$.
\end{theorem}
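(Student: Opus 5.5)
The plan is to work on the claw $K_{1,3}$ with center $c$ and leaves $\ell_1,\ell_2,\ell_3$, and to place $n=4$ agents, one on every node. Since all locations are public and the graph is tiny, every profile is a vector in $\{\One,\Two,\Both\}^4$. For each profile, the feasible outcomes are the $12$ ordered pairs of distinct nodes, and social costs are immediate to tabulate. Suppose a deterministic strategyproof mechanism $M$ has ratio strictly below $3/2$. I will derive a contradiction by a propagation argument, as the introduction announces.

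\textbf{Step 1: anchors.} I will identify anchor profiles in which only one outcome has cost below $\frac32\OPT$. The natural candidates are profiles where the center agent approves both facilities and the leaf agents approve single facilities. For instance, take $A_c=\Both$ with leaf $\ell_1$ reporting $\One$ and leaves $\ell_2,\ell_3$ reporting $\Two$. Then $\OPT=3$, attained by $(\ell_1,c)$, and other outcomes such as $(c,\ell_2)$ cost $5$. A different split gives the mirror anchor, $(c,\ell_2)$ for example. I will also use profiles where a single leaf approves a facility alone and all other agents approve the other facility, which should force $(\ell_i,c)$ or $(c,\ell_i)$ as the only admissible outcomes. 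For each candidate, I will list all $12$ outcomes with their costs and keep only those strictly under $\frac32\OPT$. Where more than one survives, the mechanism's choice is left as a small case split rather than forced.

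\textbf{Step 2: propagation.} From each anchor, I will follow single-agent deviations. A typical move is a leaf switching from $\One$ to $\Both$, or the center dropping one of her approvals. At each step I apply the two strategyproofness inequalities between the profile pair, which cut down the admissible outcomes at the new profile. The goal is to reach a terminal profile that is symmetric between two leaves. One example is $A_c=\Both$ with $\ell_1,\ell_2$ reporting $\Both$ and $\ell_3$ reporting $\One$. The symmetry, or two different chains arriving at the same terminal, should yield incompatible constraints. One chain forces $F_2$ at $c$ or at $\ell_1$, while the other forbids this, since otherwise a specific agent would gain by reverting to the anchor report. I expect to exploit the fact that on the claw, any facility not at $c$ is at distance $2$ from two of the leaves. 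A leaf supporter therefore has a strong incentive to pull a facility onto the center, and this breaks the line-style balancing that allowed $4/3$.

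\textbf{Main obstacle.} The hard part is choosing the anchors and chains so that every branch closes. Ratios just below $3/2$ typically admit two or three outcomes at intermediate profiles, so the case tree can grow. I will first try to fix the terminal profile and work backward: for each outcome at the terminal profile, I exhibit a neighbouring anchor-forced profile from which some agent gains. If the $n=4$ instance leaves a gap, I will add agents at distance $2$, subdividing the edges, to scale costs so that the only remaining near-$3/2$ outcomes become exactly $3/2$ or worse.
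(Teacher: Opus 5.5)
Your architecture matches the paper's: the claw with one agent per node, anchor profiles with a unique admissible outcome, strategyproofness propagation, and a terminal that is the paper's $Q$ up to relabeling leaves. But what you have written is a plan, not a proof. The whole content of this theorem is the finite case analysis, and none of it is carried out.

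The one concrete anchor you give is also wrong. Write profiles in the order $(t_c,t_{\ell_1},t_{\ell_2},t_{\ell_3})$. For $(\Both,\One,\Two,\Two)$ you do have $\OPT=3$ at $(\ell_1,c)$. However, $(c,\ell_2)$ costs $1+1+0+2=4<9/2$, and so do $(c,\ell_3)$, $(\ell_1,\ell_2)$ and $(\ell_1,\ell_3)$, so five outcomes are admissible. In fact no profile with $\Both$ at the center and singletons at all three leaves is an anchor. The anchors that work with a doubly-approving center have one further $\Both$ leaf and two leaves sharing the same singleton, e.g.\ $(\Both,\One,\One,\Both)$ with $\OPT=4$ and every other outcome costing at least $6$. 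Mixed profiles such as $(\Both,\One,\Two,\Two)$ must first be pinned down by strategyproofness, here from $(\Both,\Both,\Two,\Two)$ and $(\Two,\One,\Two,\Two)$.

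The missing idea is the all-doubleton profile $P_0=(\Both,\Both,\Both,\Both)$. There $\OPT=8$ and every outcome costs at most $10<12$, so approximation says nothing about $M(P_0)$, yet its value is indispensable. Since $M$ need not be symmetric, the symmetry of your terminal profile yields no incompatible constraints by itself. The only legitimate use of symmetry is to normalize $M(P_0)$ to either $(\ell_1,\ell_2)$ or $(c,\ell_1)$, and the argument must branch on this.

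\begin{itemize}
\item Anchors alone reduce $Q=(\Both,\One,\Both,\Both)$ to $(c,\ell_1),(c,\ell_2),(c,\ell_3),(\ell_1,c)$.
\item If $M(P_0)=(\ell_1,\ell_2)$, deviations of the $\ell_1$ agent between $P_0$ and $Q$ eliminate all four at once.
\item If $M(P_0)=(c,\ell_1)$, you need a chain of eight forced profiles, three of which use $M(P_0)$ directly. This chain establishes $M(\Both,\One,\Two,\Both)=(c,\ell_2)$ and $M(\Both,\One,\Both,\Two)=(c,\ell_3)$, and only then does $Q$ collapse.
\end{itemize}

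The terminal must also be chosen relative to $M(P_0)$. When $M(P_0)=(c,\ell_1)$, your terminal $(\Both,\Both,\Both,\One)$ is forced to $(c,\ell_1)$. This outcome is compatible with every neighbouring profile whose outcome the propagation determines, so backward elimination there cannot close. The contradiction appears at the profile where the leaf hosting $\Ftwo$ in $M(P_0)$ reports $\One$. Finally, your fallback of subdividing edges is unnecessary, and it would in any case require a fresh analysis.
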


We consider instances on the claw graph $K_{1,3}$, which consists of a central node $0$ and leaves $1,2,3$; we have that $\dist(0,j)=1$ for each leaf $j$ and $\dist(i,j)=2$ for two distinct leaves $i$ and $j$. There is one agent at each node. We will write $(t_0,t_1,t_2,t_3)$ to represent the approval profile of the four agents, where $t_v\in\{\One,\Two,\Both\}$. For convenience, we will denote by $B=\Both$ the doubleton approval preference.  

Assume towards a contradiction that there is a strategyproof mechanism $M$ with approximation ratio strictly below $3/2$. 
We will say that a feasible outcome $\mathbf{z}=(z_1,z_2)$ is \emph{admissible} for a profile $A$ if $\SC(\mathbf{z}\mid A) < (3/2)\cdot\OPT(A)$. Hence, since $M$ is strictly better than $3/2$-approximate, $M(A)$ must be admissible for every profile $A$.
We start with the all-doubleton profile
\[
P_0=(B,B,B,B).
\]
Up to permuting the leaves of the graph and swapping the names of the two facilities, there are two cases to consider: 
\begin{itemize}
\item {\bf (case LL)} both facilities are placed on leaves;
\item {\bf (case CL)} one facility is placed at the center and one on a leaf. 
\end{itemize}
Therefore, without loss of generality, it suffices to assume that $M(P_0)=(1,2)$ for {\bf (case LL)}, and $M(P_0)=(0,1)$ for {\bf (case CL)}.

We first record a set of \emph{anchor} profiles with a unique admissible outcome.

\begin{lemma}\label{lem:anchors}
Each of the following profiles has a unique admissible outcome: 
\renewcommand{\arraystretch}{1.2}
\begin{center}
\begin{tabular}{c|c}
{\normalfont profile} & {\normalfont admissible outcome} \\\hline
 $(\One,\One,\Two,\One)$  & $(0,2)$ \\
 $(\Two,\One,\Two,\Two)$  & $(1,0)$ \\
 $(\Two,\Two,\Two,\One)$  & $(3,0)$ \\
 $(B,\One,\One,B)$  & $(0,3)$ \\
 $(B,\One,B,\One)$ & $(0,2)$ \\
 $(B,\Two,\Two,B)$  & $(3,0)$ \\
 $(B,\Two,B,\Two)$ & $(2,0)$ \\
 $(B,B,\One,\One)$  & $(0,1)$ \\
 $(B,B,\Two,\Two)$  & $(1,0)$ 
\end{tabular}
\end{center}
\end{lemma}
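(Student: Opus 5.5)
The plan is a direct finite verification on $K_{1,3}$, organised so that each profile needs only a few lines. For a profile, the social cost of $(z_1,z_2)$ splits as $c_1(z_1)+c_2(z_2)$, where $c_k(v)=\sum_{i\in N_k}d(x_i,v)$. Every $c_k$ is easy to compute from the claw metric. If $N_k$ consists of the agents at nodes $U$, then $c_k(0)=|U\setminus\{0\}|$. For a leaf $j$, each agent at another leaf contributes $2$, the agent at $0$ (if present) contributes $1$, and the agent at $j$ contributes $0$. So for each profile I will write down the two vectors $(c_k(0),c_k(1),c_k(2),c_k(3))$ and take the minimum of $c_1(z_1)+c_2(z_2)$ over the twelve ordered pairs with $z_1\neq z_2$. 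This gives $\OPT$. I will then check that exactly one pair has cost strictly below $\frac32\OPT$; because costs are integers, this means cost at most $\lceil \frac32\OPT\rceil-1$.

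Step one is to exploit symmetry and halve the work. Swapping facility names maps $(\One,\One,\Two,\One)$ to $(\Two,\Two,\One,\Two)$, which is a leaf relabelling of $(\Two,\One,\Two,\Two)$ (leaves $1\leftrightarrow 2$), and the claimed outcomes $(0,2)$ and $(1,0)$ correspond under this map. Similarly, $(B,\One,\One,B)\leftrightarrow(B,\Two,\Two,B)$ and $(B,B,\One,\One)\leftrightarrow(B,B,\Two,\Two)$ are facility swaps. $(B,\One,B,\One)$ and $(B,\Two,B,\Two)$ are related by a facility swap composed with the leaf permutation $1\leftrightarrow 2$. So it suffices to verify $(\One,\One,\Two,\One)$, $(\Two,\Two,\Two,\One)$, $(B,\One,\One,B)$, $(B,\One,B,\One)$ and $(B,B,\One,\One)$, and then transport the conclusions.

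Step two is the computation for each representative. For example, for $(\One,\One,\Two,\One)$ we have $N_1=\{0,1,3\}$, so $c_1=(2,3,5,3)$, and $N_2=\{2\}$, so $c_2=(1,2,0,2)$. Then $\OPT=2$ at $(0,2)$, the threshold is $3$, and every other pair costs at least $3$ (e.g. $(1,2)$ costs $3$, $(0,1)$ costs $4$). So $(0,2)$ is the unique admissible outcome. For $(B,B,\One,\One)$ we have $N_1=V$, so $c_1=(3,5,5,5)$, and $N_2=\{0,1\}$, so $c_2=(1,1,3,3)$. Then $\OPT=4$ at $(0,1)$, with threshold $6$. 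The competitor $(1,0)$ costs $6$, which is not strictly below $6$, and everything else is larger, so uniqueness holds. The remaining three representatives are handled the same way. In each case I will identify the minimiser and then argue that the runner-up, typically obtained by swapping the centre and a leaf, or by moving the less-supported facility to another leaf, already reaches $\frac32\OPT$.

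The main obstacle is the tightness of these runner-up comparisons. In several profiles the second-best outcome costs exactly $\frac32\OPT$, so the strict inequality in the definition of admissibility is essential and the arithmetic must be exact. I will therefore list, for each representative, the cheapest few competitors explicitly rather than appeal to a general bound, and double-check the symmetry maps so that the transported outcomes match the table.
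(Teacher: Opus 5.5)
Your proposal is correct and follows essentially the paper's route. The paper simply records that the three singleton profiles have $\OPT=2$ with every other outcome costing at least $3$, and that the six remaining profiles have $\OPT=4$ with every other outcome costing at least $6$; your cost vectors $c_1,c_2$ verify exactly this, and your numbers for $(\One,\One,\Two,\One)$ and $(B,B,\One,\One)$ are right.

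There is one slip in your symmetry step. $(B,\Two,B,\Two)$ is the pure facility swap of $(B,\One,B,\One)$, whereas composing with the leaf swap $1\leftrightarrow 2$ sends $(B,\One,B,\One)$ to $(B,B,\Two,\Two)$. This is harmless, since the three profiles with $N_1=V$ are leaf relabellings of one another. Still, fix it so that the transported outcome $(2,0)$ matches the table.
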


\begin{proof}
For the first three singleton-only profiles, the optimal social cost is $2$, while every other outcome has social cost at least $3$.
For the remaining profiles, the optimal social cost is $4$, while every other outcome has social cost at least $6$.  
Since the approximation ratio is strictly below $3/2$, only the presented optimal outcomes are admissible.
\end{proof}

Our proof argues about the set of admissible outcomes of the  \emph{terminal} profile 
\[
Q=(B,\One,B,B).
\]
We begin by eliminating $Q$'s  admissible outcomes that are incompatible, due to strategyproofness, with the anchor profiles in Lemma~\ref{lem:anchors}.

\begin{lemma}\label{lem:Q}
For the terminal profile $Q=(B,\One,B,B)$, the only outcomes that may be chosen by $M$ are
$(0,1), (0,2), (0,3), (1,0).$
\end{lemma}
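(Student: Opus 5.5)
The plan is a direct two-stage argument: first use the approximation requirement to shrink the set of candidate outcomes for $Q=(B,\One,B,B)$, then use strategyproofness against two anchor profiles of Lemma~\ref{lem:anchors} that differ from $Q$ in a single report.

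\textbf{Stage 1: admissible outcomes of $Q$.} In $Q$ we have $N_1=\{0,1,2,3\}$ and $N_2=\{0,2,3\}$.
\begin{itemize}
\item The $F_1$-cost is $3$ if $z_1=0$ and $5$ if $z_1$ is a leaf.
\item The $F_2$-cost is $2$ at $0$, $3$ at leaf $2$ or $3$, and $5$ at leaf $1$.
\end{itemize}
Hence $\OPT(Q)=6$, attained by $(0,2)$ and $(0,3)$. Admissibility means $\SC<9$, i.e., $\SC\le 8$. Listing the twelve outcomes, the admissible ones are
\[
(0,1),(0,2),(0,3),(1,0),(2,0),(3,0),(1,2),(1,3),(2,3),(3,2).
\]
The excluded outcomes are $(2,1)$ and $(3,1)$, each of cost $10$. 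So six outcomes remain to be eliminated: $(2,0),(3,0),(1,2),(1,3),(2,3),(3,2)$.

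\textbf{Stage 2: strategyproofness against anchors.}
\begin{itemize}
\item Changing agent $2$'s report (the agent at leaf $2$) from $B$ to $\One$ turns $Q$ into the anchor $(B,\One,\One,B)$. By Lemma~\ref{lem:anchors}, this anchor is mapped to $(0,3)$.
\item Changing agent $3$'s report from $B$ to $\One$ turns $Q$ into the anchor $(B,\One,B,\One)$, which is mapped to $(0,2)$.
\end{itemize}
Each of these pairs gives two strategyproofness inequalities.

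\emph{(a) Truth is $Q$.} Agent $2$, with true type $B$, must not gain by reporting $\One$. Her cost at $(0,3)$ is $d(2,0)+d(2,3)=3$, so $\cost_2(M(Q))\le 3$. Symmetrically, agent $3$ compares against $(0,2)$, which gives $\cost_3(M(Q))\le 3$.

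\emph{(b) Truth is the anchor.} Now agent $2$ has true type $\One$ and must not gain by reporting $B$, which would produce $M(Q)$. Her $F_1$-cost at the anchor outcome is $d(2,0)=1$, so $d(2,z_1)\ge 1$, i.e., $z_1\ne 2$. Symmetrically, $z_1\ne 3$.

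\textbf{Elimination.} Condition (b) removes $(2,0),(3,0),(2,3),(3,2)$. Condition (a) removes the remaining two:
\begin{itemize}
\item $(1,2)$ gives agent $3$ cost $d(3,1)+d(3,2)=4>3$;
\item $(1,3)$ gives agent $2$ cost $d(2,1)+d(2,3)=4>3$.
\end{itemize}
A quick check confirms that the four survivors, $(0,1),(0,2),(0,3),(1,0)$, satisfy all four constraints. For example, $(0,1)$ and $(1,0)$ each give agents $2$ and $3$ cost exactly $3$. This matches the claimed list.

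I expect no real obstacle here. The only care needed is in the bookkeeping: using the correct direction of each strategyproofness inequality (true type $B$ versus true type $\One$), and making sure the social-cost table for $Q$ is complete.
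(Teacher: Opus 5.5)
Your proof is correct and follows the paper's own route. You use $\OPT(Q)=6$ to discard only $(2,1)$ and $(3,1)$. You then play the two anchors $(B,\One,\One,B)\mapsto(0,3)$ and $(B,\One,B,\One)\mapsto(0,2)$ against $Q$, with the leaf-$2$ and leaf-$3$ agents deviating in both directions; packaging this as the constraints $\cost_2,\cost_3\le 3$ and $z_1\notin\{2,3\}$ is just tidier bookkeeping of the paper's case-by-case eliminations.
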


\begin{proof}
First, observe that $\OPT(Q)=6$. Since the social cost of outcomes $(2,1)$ and $(3,1)$ is $10 > 9=(3/2)\cdot\OPT(Q)$, the admissible outcomes for $Q$ are 
\[
(0,1),(0,2),(0,3),(1,0),(1,2),(1,3),(2,0),(2,3),(3,0),(3,2).
\]
We now reduce this set of admissible outcomes using the anchor profile $(B,\One,\One,B)$ whose unique admissible outcome is $(0,3)$, and the anchor profile $(B,\One,B,\One)$ whose unique admissible outcome is $(0,2)$. 
Profile $Q=(B,\One,B,B)$ differs from $(B,\One,\One,B)$ only at the leaf-$2$ agent $i$, and from $(B,\One,B,\One)$ only at the leaf-$3$ agent $j$. Consider the following cases: 
\begin{itemize}
\item 
$M(Q)=(1,3)$. 
If $Q$ is the truthful profile, $i$ has true preference $B$ and cost $4$. 
By misreporting her preference as $\One$, the profile changes to $(B,\One,\One,B)$ and the outcome becomes $(0,3)$, for which $i$ has cost $3$. Hence, $i$ has incentive to deviate, contradicting that $M$ is strategyproof.  

\item 
$M(Q)=(2,0)$ or $M(Q)=(2,3)$.
If $(B,\One,\One,B)$ is the truthful profile, $i$ has true preference $\One$ and cost $1$ for the outcome $(0,3)$.
By misreporting her preference as $B$, the profile changes to $Q$ with outcome $(2,0)$ or $(2,3)$, for both of which $i$ has cost $0$.
This again contradicts that $M$ is strategyproof.  

\item $M(Q)=(1,2)$. 
If $Q$ is the truthful profile, $j$ has true preference $B$ and cost $4$.
By misreporting her preference as $\One$, the profile changes to $(B,\One,B,\One)$ and the outcome becomes $(0,2)$, for which $j$ has cost $3$, a contradiction.

\item  
$M(Q) = (3,0)$ or $M(Q)=(3,2)$.
If  $(B,\One,B,\One)$ is the truthful profile, $j$ has true preference $\One$ and cost $1$ for the outcome $(0,2)$.  
By misreporting her preference as $B$, the profile changes to $Q$ with outcome $(3,0)$ or $(3,2)$, for both of which $j$ has cost $0$, again a contradiction.
\end{itemize}
Hence, only the admissible outcomes $(0,1),(0,2),(0,3),(1,0)$ remain.
\end{proof}

We are now ready to handle {\bf (case LL)}; recall that, in this case, we assume that $M(P_0) = (1,2)$.

\begin{lemma}\label{lem:caseA}
If $M(P_0)=(1,2)$, the approximation ratio of $M$ is at least $3/2$.
\end{lemma}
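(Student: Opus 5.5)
The plan is to connect $P_0=(B,B,B,B)$ directly to the terminal profile $Q=(B,\One,B,B)$ and use Lemma~\ref{lem:Q}, which limits $M(Q)$ to $(0,1),(0,2),(0,3),(1,0)$. The key observation is that $Q$ differs from $P_0$ only in the report of the agent at leaf $1$: call her agent $\ell$, who reports $B$ in $P_0$ and $\One$ in $Q$. No longer chain of deviations should be needed in this case, since the anchors are already folded into Lemma~\ref{lem:Q}.

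First, I would treat $Q$ as the truthful profile, so that agent $\ell$ has true type $\One$. By misreporting $B$ she moves the profile to $P_0$, where $M(P_0)=(1,2)$ puts $\Fone$ on her own node, giving her true cost $0$. By strategyproofness, her cost under $M(Q)$ must therefore also be $0$, i.e., $\Fone$ must be placed at node $1$. Among the four outcomes allowed by Lemma~\ref{lem:Q}, the only one with $z_1=1$ is $(1,0)$. Hence $M(Q)=(1,0)$.

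Second, I would treat $P_0$ as the truthful profile, so that agent $\ell$ has true type $B$. Under $M(P_0)=(1,2)$ her cost is $d(1,1)+d(1,2)=0+2=2$. By misreporting $\One$ she moves the profile to $Q$, whose outcome is $(1,0)$ by the first step, and her true cost becomes $d(1,1)+d(1,0)=0+1=1<2$. This is a profitable manipulation, contradicting strategyproofness. Therefore no strategyproof mechanism with ratio strictly below $3/2$ can have $M(P_0)=(1,2)$, which proves the lemma.

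I expect essentially no obstacle here, because the real work was done in Lemma~\ref{lem:Q} by eliminating the other admissible outcomes of $Q$. The only point to verify carefully is that Lemma~\ref{lem:Q} was proved without any assumption on $M(P_0)$; it uses only the anchors of Lemma~\ref{lem:anchors}, so it applies in case LL as well.
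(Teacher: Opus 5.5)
Your proposal is correct and follows the same route as the paper: using Lemma~\ref{lem:Q} and the single leaf-$1$ deviation between $P_0$ and $Q$, the outcomes $(0,1),(0,2),(0,3)$ are excluded because she would report $B$, and $(1,0)$ is excluded because she would report $\One$ from $P_0$. Your version simply phrases the paper's two-case elimination as first forcing $M(Q)=(1,0)$ and then deriving the contradiction. You are also right that Lemma~\ref{lem:Q} depends only on the anchors and not on $M(P_0)$.
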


\begin{proof}
Profiles $P_0=(B,B,B,B)$ and $Q=(B,\One,B,B)$ differ only at the preference of leaf-$1$ agent $i$. 
By Lemma~\ref{lem:Q}, the only possible admissible outcomes for $Q$ are $(0,1),(0,2),(0,3),(1,0)$.
\begin{itemize}
\item $M(Q) \in \{ (0,1),(0,2),(0,3) \}$.
If $Q$ is the truthful profile, $i$ has true preference $\One$ and cost $1$ for any of these possible outcomes. 
By misreporting $B$, the profile changes to $P_0$ and the outcome becomes $(1,2)$, for which $i$ has cost $0$.
Therefore, $i$ has incentive to deviate, contradicting that $M$ is strategyproof. 

\item $M(Q) = (1,0)$. 
If $P_0$ is the truthful profile, $i$ has true preference $B$ and her cost is $2$ for the outcome $(1,2)$. 
By misreporting $\One$, the profile changes to $Q$ with outcome $(1,0)$, for which her cost is $1$. This again contradicts that $M$ is strategyproof.  
\end{itemize}
Overall, none of the possible admissible outcomes for $Q$ can be chosen by $M$, and thus $M$ must output a non-admissible outcome, leading to an  approximation ratio of at least $3/2$.
\end{proof}

We next switch to {\bf (case CL)} where $M(P_0) = (0,1)$. 
Contrary to {\bf (case LL)}, here the analysis is laborious and requires arguing about a chain of profiles, until we reach $Q$.

\begin{lemma}\label{lem:caseB}
If $M(P_0)=(0,1)$, the approximation ratio of $M$ is at least $3/2$.
\end{lemma}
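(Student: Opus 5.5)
The plan is to exploit the fact that, in \textbf{(case CL)}, the direct comparison between $P_0=(B,B,B,B)$ and $Q=(B,\One,B,B)$ is not enough by itself. If the leaf-$1$ agent reports $\One$ instead of $B$, her cost under $(0,1)$ stays $1$. That same cost of $1$ is also what she gets at $(0,1),(0,2),(0,3),(1,0)$ when her true type is $\One$. So Lemma~\ref{lem:Q} alone only leaves the four candidates, and we must propagate constraints through a chain of intermediate profiles until each of $(0,1),(0,2),(0,3),(1,0)$ is excluded for $Q$. As in Lemma~\ref{lem:Q}, every step will have the same form. We consider a pair of profiles that differ in one agent. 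We list the admissible outcomes of the new profile, computing $\OPT$ and the $<3/2$ threshold by hand on $K_{1,3}$. We then delete every outcome that gives a profitable deviation in either direction against an outcome already pinned down, either by an anchor in Lemma~\ref{lem:anchors} or by an earlier step.

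Concretely, I would first use the residual symmetry of \textbf{(case CL)}. Since $M(P_0)=(0,1)$ singles out leaf $1$, leaves $2$ and $3$ remain interchangeable. I would then move from $P_0$ by single-agent changes at leaves $2$ and $3$ toward the anchors $(B,B,\One,\One)\mapsto(0,1)$ and $(B,B,\Two,\Two)\mapsto(1,0)$. The first intermediate profiles are $(B,B,\One,B)$ and $(B,B,B,\Two)$, whose outcomes are constrained from one side by $M(P_0)=(0,1)$ and from the other by these anchors. Next, I would change the center agent, for instance via $(\Two,\One,B,B)$ and $(\One,\One,B,B)$. These connect to the singleton anchors $(\One,\One,\Two,\One)\mapsto(0,2)$ and $(\Two,\One,\Two,\Two)\mapsto(1,0)$, and to $Q$ by a center deviation. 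The aim is to pin down the outcomes of the neighbours of $Q$ at the center and at leaves $2,3$ tightly enough that, from $Q$, some agent gains by deviating.

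The elimination itself would proceed in two parts. For $M(Q)=(1,0)$, the leaf-$1$ agent in $Q$ pays $0$, so the contradiction must come from another agent. I would look at the center agent, who pays $2$ at $(1,0)$ in $Q$. I would show that a pinned neighbour such as $(\Two,\One,B,B)$ or $(\One,\One,B,B)$ forces an outcome that is better for her, or else that her deviation from that neighbour into $Q$ is profitable. For $M(Q)\in\{(0,1),(0,2),(0,3)\}$, I would use the leaf-$2$ and leaf-$3$ agents against the anchors $(B,\One,\One,B)\mapsto(0,3)$ and $(B,\One,B,\One)\mapsto(0,2)$. I would combine this with the symmetric pinned outcomes from the first stage, so that $F_2$ sitting at a leaf in $Q$ gives some leaf agent a strict improvement by switching between $B$ and a singleton.

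The main obstacle is the middle of the chain. There, the intermediate profiles have $\OPT$ around $5$ or $6$ and therefore many admissible outcomes, and each step removes only a few of them. I expect the hard part to be choosing the right order of intermediate profiles so that every step ends with a unique surviving outcome. I would start with the center-deviation profiles adjacent to $Q$, since the center agent's cost separates outcomes most sharply on the claw. I would fall back on the leaf-$2$ and leaf-$3$ profiles adjacent to $P_0$ if the center profiles leave two candidates alive.
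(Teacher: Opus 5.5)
\textbf{There is a genuine gap.} Your general method is the right one: propagate constraints from the anchors and from $M(P_0)=(0,1)$ along single-agent deviations until all four survivors of Lemma~\ref{lem:Q} are excluded. But the proposal never exhibits a chain that closes, and the concrete steps you do commit to fail.

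\emph{The center-agent route for $(1,0)$.} You claim the center agent pays $2$ at $(1,0)$ in $Q$. In fact she pays $d(0,1)+d(0,0)=1$. A $B$-agent at the center pays exactly $1$ at each of $(0,1),(0,2),(0,3),(1,0)$, which is her minimum over all feasible outcomes. So no deviation \emph{from} $Q$ by the center can be profitable.

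Deviations \emph{into} $Q$ fare little better. The $\One$-center of $(\One,\One,B,B)$ would get cost $1$ at $(1,0)$, the largest cost a center $\One$-agent can ever have, so that deviation is never profitable. Only $(\Two,\One,B,B)$ could work, and only if you show $M(\Two,\One,B,B)$ puts $\Ftwo$ on a leaf. Nothing in your plan forces this: that profile has ten admissible outcomes, and neither center-neighbour of $Q$ is a single deviation away from the singleton anchors you cite. Your heuristic is also backwards: the center agent's cost varies by at most one unit across outcomes, so she is the \emph{least} discriminating agent.

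\emph{The anchors for the other three survivors.} The anchors $(B,\One,\One,B)\mapsto(0,3)$ and $(B,\One,B,\One)\mapsto(0,2)$ are already exhausted by Lemma~\ref{lem:Q}. The leaf-$2$ agent's $B$-cost in $Q$ is $3,1,3,3$ at $(0,1),(0,2),(0,3),(1,0)$, versus $3$ at $(0,3)$ after deviating. In the reverse direction, her $\One$-cost at the survivors never drops below $1$. Leaf $3$ is symmetric.

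\emph{The first-stage profiles.} These do not help. Each of $(B,B,\One,B)$ and $(B,B,B,\Two)$ differs from $Q$ in two agents. With only $M(P_0)$ and the anchors they are not even pinned: $(B,B,\One,B)$ keeps both $(0,1)$ and $(0,3)$, and $(B,B,B,\Two)$ keeps both $(1,0)$ and $(2,0)$.

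\textbf{The missing idea.} Aim at the two leaf-neighbours of $Q$ in which leaf $2$ or leaf $3$ switches from $B$ to $\Two$. Once $M(B,\One,\Two,B)=(0,2)$ and $M(B,\One,B,\Two)=(0,3)$ are known, single deviations kill all four survivors with no center-agent argument. The leaf-$2$ agent of $Q$ pays $3$ at each of $(0,1),(0,3),(1,0)$ and only $1$ after reporting $\Two$. The leaf-$3$ agent pays $3$ at $(0,2)$ and only $1$ after reporting $\Two$.

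The real work is pinning $(B,\One,\Two,B)$; the other profile follows by swapping leaves $2$ and $3$, which is where your symmetry remark is legitimately used. The paper proceeds in three stages:
\begin{enumerate}
\item It forces $(B,\One,\Two,\One)\mapsto(0,2)$, $(B,\One,\Two,\Two)\mapsto(1,0)$ and $(B,\Two,\Two,\One)\mapsto(3,0)$ directly from anchors.
\item It then forces $(B,\Two,B,B)\mapsto(0,1)$, $(B,\Two,B,\One)\mapsto(0,1)$, $(B,B,B,\One)\mapsto(0,1)$, $(B,B,\Two,\One)\mapsto(3,0)$ and $(B,B,\Two,B)\mapsto(3,0)$. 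The assumption $M(P_0)=(0,1)$ enters at $(B,\Two,B,B)$, $(B,B,B,\One)$ and $(B,B,\Two,B)$.
\item Only then is $(B,\One,\Two,B)$ forced to $(0,2)$.
\end{enumerate}
Without identifying these target profiles and an order in which every step leaves a unique survivor, what you have is a description of the method rather than a proof.
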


\begin{proof} 
By Lemma~\ref{lem:Q}, the only possible admissible outcomes for $Q=(B,\One,B,B)$ are $(0,1)$, $(0,2)$, $(0,3)$, $(1,0)$. We will argue that two neighboring profiles of $Q$, $(B,\One,\Two,B)$ and $(B,\One,B,\Two)$, have a unique admissible outcome that is compatible with strategyproofness. In particular, we show the following properties; the proof of this lemma follows by arguing about the outcomes of several other profiles and can be found in the appendix. 

\begin{lemma}\label{lem:lower:3/2:general-graph:neighbor-profiles}
$M(B,\One,\Two,B)=(0,2)$ and $M(B,\One,B,\Two)=(0,3)$.
\end{lemma}

\noindent 
Using these, we can eliminate the four possible admissible outcomes for $Q$, as follows:
\begin{itemize} 
    \item $M(Q) \in \{(0,1), (0,3), (1,0)\}$. 
    If the truthful profile is $Q$, then the leaf-$2$ agent with true preference $B$ has cost $3$.  
    By misreporting $\Two$, the profile changes to $(B,\One,\Two,B)$ with outcome $(0,2)$, and the cost of the agent reduces to $1$. This contradicts that $M$ is strategyproof. 

    \item $M(Q) = (0,2)$. 
    If the truthful profile is $Q$, then the leaf-$3$ agent with true preference $B$ has cost $3$.
    By misreporting $\Two$, the profile changes to $(B,\One,B,\Two)$ with outcome $(0,3)$, and the cost of the agent reduces to $1$. This again contradicts that $M$ is strategyproof.
\end{itemize}
Given that no admissible outcome is possible for $Q$, the approximation ratio of $M$ when given $Q$ as input must be at least $3/2$. 
\end{proof}

By Lemma~\ref{lem:caseA}, which handles {\bf (case LL)}, and Lemma~\ref{lem:caseB}, which handles {\bf (case CL)}, we can finally conclude that the approximation ratio of $M$ is at least $3/2$, and the proof of Theorem~\ref{thm:lower:3/2:general-graphs} is now complete.

\section{Conclusion}\label{sec:conclusion}
In this paper, we settled the optimal deterministic approximation ratio for discrete heterogeneous two-facility location on the line: the mechanism $\Mfourthirds$ is $4/3$-approximate for every number of agents, matching the known lower bound. Its core is the \EPM{} mechanism, which restricts the facilities to opposite parity classes and places each at a restricted median of its supporters; local overrides handle the remaining small instances. Beyond the line, for instances with $n \geq 4$ agents, we introduced the \OICM{} mechanism, which similarly restricts the two facilities to different domains constructed based on a maximal independent set of the occupied subgraph. Together with \SL{} for instances with $n \leq 3$ agents, this leads to a deterministic strategyproof $2$-approximation on every connected graph. In the other direction, we showed that on the claw graph $K_{1,3}$ no deterministic strategyproof mechanism achieves a ratio below $3/2$: the bound of $4/3$ does not extend even to trees with a single branching node.

The main immediate open problem is to close the gap $[3/2,2]$ for general connected graphs. Further directions include randomized mechanisms on graphs, where the lower bound of $3/2$ no longer applies; $k\ge3$ facilities, for which our framework asks for a partition of the nodes into $k$ mutually dominating parts, whose existence is no longer guaranteed; objectives beyond the social cost; and the original domain of \citet{serafino-ventre}, in which agents may approve neither facility.

\paragraph{LLM Usage Disclosure.}
ChatGPT 5.5/5.6 and Claude Opus 4.8 were used during exploratory searches for some of the mechanisms and for the claw lower-bound construction. All stated mechanisms, examples, and proofs were independently checked, completed, and written by the authors.

\bibliographystyle{plainnat}
\bibliography{references}

@article{Kanellopoulos2026obnoxious,
  author       = {Panagiotis Kanellopoulos and
                  Alexandros A. Voudouris},
  title        = {Constrained Truthful Obnoxious Two-Facility Location with Optional
                  Preferences},
  journal      = {Algorithmica},
  volume       = {88},
  number       = {3},
  pages        = {42},
  year         = {2026}
}

@article{serafino-ventre,
  author       = {Paolo Serafino and
                  Carmine Ventre},
  title        = {Heterogeneous facility location without money},
  journal      = {Theoretical Computer Science},
  volume       = {636},
  pages        = {27--46},
  year         = {2016}
}

@article{Kanellopoulos2023sidma,
  author       = {Panagiotis Kanellopoulos and
                  Alexandros A. Voudouris and
                  Rongsen Zhang},
  title        = {On Discrete Truthful Heterogeneous Two-Facility Location},
  journal      = {{SIAM} Journal on Discrete Mathematics},
  volume       = {37},
  number       = {2},
  pages        = {779--799},
  year         = {2023}
}

@article{Deligkas2025agent,
  author       = {Argyrios Deligkas and
                  Mohammad Lotfi and
                  Alexandros A. Voudouris},
  title        = {Agent-constrained truthful facility location games},
  journal      = {Journal of Combinatorial Optimization},
  volume       = {49},
  number       = {2},
  pages        = {24},
  year         = {2025}
}

@inproceedings{fl-survey,
  author    = {Hau Chan and
               Aris Filos{-}Ratsikas and
               Bo Li and
               Minming Li and
               Chenhao Wang},
  title     = {Mechanism design for facility location problems: {A} survey},
  booktitle = {Proceedings of the Thirtieth International Joint Conference on Artificial
               Intelligence {(IJCAI)}},
  pages     = {4356--4365},
  year      = {2021}
}

@article{procaccia09approximate,
  author    = {Ariel D. Procaccia and
               Moshe Tennenholtz},
  title     = {Approximate Mechanism Design without Money},
  journal   = {{ACM} Transactions on Economics and Computation},
  volume    = {1},
  number    = {4},
  pages     = {18:1--18:26},
  year      = {2013},
}

@article{Xu2021minimum,
  author       = {Xinping Xu and
                  Bo Li and
                  Minming Li and
                  Lingjie Duan},
  title        = {Two-facility Location Games with Minimum Distance Requirement},
  journal      = {Journal of Artificial Intelligence Research},
  volume       = {70},
  pages        = {719--756},
  year         = {2021}
}

@article{kanellopoulos2025,
  author       = {Panagiotis Kanellopoulos and
                  Alexandros A. Voudouris and
                  Rongsen Zhang},
  title        = {Truthful Two-Facility Location with Candidate Locations},
  journal     = {Theoretical Computer Science},
  volume =      {1024},
  pages = {114913},
  year          = {2025}
}

@article{ZhaoLNF24,
  author       = {Qi Zhao and
                  Wenjing Liu and
                  Qingqin Nong and
                  Qizhi Fang},
  title        = {Constrained heterogeneous two-facility location games with sum-variant},
  journal      = {Journal of Combinatorial Optimization},
  volume       = {47},
  number       = {4},
  pages        = {65},
  year         = {2024}
}

@inproceedings{GaiLW22,
  author       = {Ling Gai and
                  Mengpei Liang and
                  Chenhao Wang},
  title        = {Obnoxious Facility Location Games with Candidate Locations},
  booktitle    = {Proceedings of the 16th International
                  Conference Algorithmic Aspects in Information and Management ({AAIM})},
  volume       = {13513},
  pages        = {96--105},
  year         = {2022},
}

@inproceedings{Tang2020candidate,
  author       = {Zhongzheng Tang and
                  Chenhao Wang and
                  Mengqi Zhang and
                  Yingchao Zhao},
  title        = {Mechanism Design for Facility Location Games with Candidate Locations},
  booktitle    = {Proceedings of the 14th International Conference on Combinatorial Optimization and Applications ({COCOA})},
  pages        = {440--452},
  year         = {2020}
}

@article{lotfi2024max,
  author       = {Mohammad Lotfi and
                  Alexandros A. Voudouris},
  title        = {On Truthful Constrained Heterogeneous Facility Location with Max-Variant
                  Cost},
  journal      = {Operations Research Letters},
  volume       = {52},
  pages         = {107060},
  year         = {2024},
}

@article{Zhao2023constrained,
  author       = {Qi Zhao and
                  Wenjing Liu and
                  Qingqin Nong and
                  Qizhi Fang},
  title        = {Constrained heterogeneous facility location games with max-variant
                  cost},
  journal      = {Journal of Combinatorial Optimization},
  volume       = {45},
  number       = {3},
  pages        = {90},
  year         = {2023}
}

@inproceedings{Sui2015constrained,
  author       = {Xin Sui and
                  Craig Boutilier},
  title        = {Approximately Strategy-proof Mechanisms for (Constrained) Facility Location},
  booktitle    = {Proceedings of the 2015 International Conference on Autonomous Agents
                  and Multiagent Systems ({AAMAS})},
  pages        = {605--613},
  year         = {2015},
}

@article{FangFLNV2025predictions,
  author       = {Jiazhu Fang and
                  Qizhi Fang and
                  Wenjing Liu and
                  Qingqin Nong and
                  Alexandros A. Voudouris},
  title        = {Mechanism Design with Predictions for Facility Location Games with
                  Candidate Locations},
  journal      = {Journal of Combinatorial Optimization},
  volume       = {49},
  number       = {5},
  pages        = {74},
  year         = {2025}
}

@article{moulin1980,
   author  = {Herv{\'e} Moulin},
   title   = {On strategy-proofness and single peakedness},
   journal = {Public Choice},
   volume  = {35},
   number  = {4},
   pages   = {437--455},
   year    = {1980}
 }

@article{schummer-vohra2002,
   author  = {James Schummer and Rakesh V. Vohra},
   title   = {Strategy-proof location on a network},
   journal = {Journal of Economic Theory},
   volume  = {104},
   number  = {2},
   pages   = {405--428},
   year    = {2002}
 }

@article{fotakis-tzamos-2014,
   author  = {Dimitris Fotakis and Christos Tzamos},
   title   = {On the power of deterministic mechanisms for facility location games},
   journal = {ACM Transactions on Economics and Computation},
   volume  = {2},
   number  = {4},
   pages   = {15:1--15:37},
   year    = {2014}
 }

@inproceedings{dokow-etal-2012,
   author    = {Elad Dokow and Michal Feldman and Reshef Meir and Ilan Nehama},
   title     = {Mechanism design on discrete lines and cycles},
   booktitle = {Proceedings of the 13th ACM Conference on Electronic Commerce (EC)},
   pages     = {423--440},
   year      = {2012}
 }

\appendix

\section*{Proof of Lemma~\ref{lem:lower:3/2:general-graph:neighbor-profiles}}
We will show the lemma for the profile $(B,\One,\Two,B)$ by arguing about the admissible outcomes of neighboring profiles due to strategyproofness. The other profile $(B,\One,B,\Two)$ can be handled similarly by relabeling the leaves. Recall the standing assumption $M(P_0)=(0,1)$ of Lemma~\ref{lem:caseB}. 

\begin{claim}\label{cl:B121}
$M(B,\One,\Two,\One)=(0,2)$.
\end{claim}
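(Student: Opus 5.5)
The plan is to first list the admissible outcomes of the profile $A^\ast=(B,\One,\Two,\One)$, and then eliminate all but $(0,2)$. For the elimination I will use strategyproofness against two anchor profiles from Lemma~\ref{lem:anchors} that each differ from $A^\ast$ in exactly one agent.

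\emph{Step 1: admissible outcomes.} Here $N_1=\{0,1,3\}$ (center, leaf~$1$, leaf~$3$) and $N_2=\{0,2\}$ (center, leaf~$2$). Placing $F_1$ at the center costs $2$, and placing $F_2$ at leaf~$2$ costs $1$, so $\SC(0,2)=3$. By Lemma~\ref{lem:opt-median}, $\OPT(A^\ast)\ge f(N_1)+f(N_2)=2+1=3$, hence $\OPT(A^\ast)=3$. An outcome is then admissible exactly when its social cost is at most $4$. A direct check of the remaining outcomes gives:
\begin{itemize}
\item cost $4$: $(1,0)$, $(3,0)$, $(1,2)$ and $(3,2)$;
\item cost at least $5$: every other outcome, for example $(0,1)$, $(0,3)$ and $(2,0)$.
\end{itemize}
So the candidates for $M(A^\ast)$ are $(0,2),(1,0),(3,0),(1,2),(3,2)$.

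\emph{Step 2: eliminate $(1,0)$ and $(3,0)$ via the leaf-$2$ agent.} The profile $A^\ast$ differs from the anchor $(B,\One,B,\One)$ only at the leaf-$2$ agent, and that anchor's unique admissible outcome is $(0,2)$. If $M(A^\ast)\in\{(1,0),(3,0)\}$, then in the truthful profile $A^\ast$ the leaf-$2$ agent has type $\Two$ and pays $d(2,0)=1$. By misreporting $B$, she moves the outcome to $(0,2)$ and pays $0$, which is a profitable deviation.

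\emph{Step 3: eliminate $(1,2)$ and $(3,2)$ via the center agent.} The leaf-$2$ agent is useless here: she already pays $0$ under these outcomes, and in the reverse direction the deviation is not profitable. Instead I use that $A^\ast$ differs from the anchor $(\One,\One,\Two,\One)$ only at the center, and that anchor's unique admissible outcome is also $(0,2)$. If $M(A^\ast)\in\{(1,2),(3,2)\}$, then in the truthful profile $A^\ast$ the center agent has type $B$ and pays $1+1=2$. By misreporting $\One$, she moves the outcome to $(0,2)$ and pays $0+1=1$, which is again a profitable deviation.

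Only $(0,2)$ survives, which proves the claim. Note that the argument never uses the standing assumption $M(P_0)=(0,1)$. The main obstacle I expected was Step~3, since the nearest anchor handles only half of the candidates. The fix is to scan the single-agent deviations of each agent until one of them lands on an anchor.
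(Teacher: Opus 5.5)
Your proof is correct and follows the paper's argument exactly. You obtain the same five admissible outcomes. You then eliminate $(1,0),(3,0)$ by the leaf-$2$ agent deviating to $B$ towards the anchor $(B,\One,B,\One)$, and $(1,2),(3,2)$ by the center agent deviating to $\One$ towards the anchor $(\One,\One,\Two,\One)$. You are also right that the standing assumption $M(P_0)=(0,1)$ is not needed for this claim.
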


\begin{proof}
The admissible outcomes of $(B,\One,\Two,\One)$ are
$(0,2),(1,0),(1,2),(3,0),(3,2)$. Indeed, the optimal outcome is $(0,2)$ with a social cost of $3$, and any outcome not listed has social cost at least $5$. We will show that for each of these outcomes besides $(0,2)$, there is a profitable deviation, contradicting that $M$ is strategyproof. 
\begin{itemize}
\item $(1,2)$ and $(3,2)$. 
If the truthful profile is $(B,\One,\Two,\One)$, the central agent has true preference $B$ and cost $2$. 
By misreporting $\One$, she can change to the anchor profile $(\One,\One,\Two,\One)$ and the outcome becomes $(0,2)$ due to Lemma~\ref{lem:anchors}. Hence, she can decrease her cost to $1$.

\item $(1,0)$ and $(3,0)$.
If the truthful profile is $(B,\One,\Two,\One)$, the leaf-$2$ agent has true preference $\Two$ and cost $1$. 
By misreporting $B$, she can change to the anchor profile $(B,\One,B,\One)$ and the outcome becomes $(0,2)$ due to Lemma~\ref{lem:anchors}. Hence, she can decrease her cost to $0$.
\end{itemize}
The only remaining admissible outcome is $(0,2)$.
\end{proof}

\begin{claim}\label{cl:B122} 
$M(B,\One,\Two,\Two)=(1,0)$.
\end{claim}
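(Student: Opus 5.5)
The plan is to pin down $M(B,\One,\Two,\Two)$ by the same two-step pattern as Claim~\ref{cl:B121}. First list all admissible outcomes of this profile. Then remove every one except $(1,0)$ using unilateral deviations to anchor profiles from Lemma~\ref{lem:anchors}, whose outcomes are already forced.

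\emph{Step 1: admissible outcomes.} In $(B,\One,\Two,\Two)$ we have $N_1=\{0,1\}$ and $N_2=\{0,2,3\}$, and the social cost separates over the two facilities.
\begin{itemize}
\item The $\Fone$-cost is $d(0,z_1)+d(1,z_1)$. It equals $1$ at nodes $0$ and $1$, and $3$ at nodes $2$ and $3$.
\item The $\Ftwo$-cost is $d(0,z_2)+d(2,z_2)+d(3,z_2)$. It equals $2$ at node $0$, $3$ at nodes $2$ and $3$, and $5$ at node $1$.
\end{itemize}
Hence $\OPT=3$, attained only by $(1,0)$, and an outcome is admissible exactly when its social cost is below $9/2$, i.e., at most $4$. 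Checking the pairs with $z_1\ne z_2$ leaves exactly $(1,0)$ (cost $3$) together with $(0,2)$, $(0,3)$, $(1,2)$ and $(1,3)$ (cost $4$ each). For instance, $(2,0)$ and $(3,0)$ cost $5$.

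\emph{Step 2: eliminate $(0,2)$ and $(0,3)$.} Use the leaf-$1$ agent, whose true preference is $\One$. Under either outcome her cost is $1$. If she misreports $B$, the profile becomes the anchor $(B,B,\Two,\Two)$, whose unique admissible outcome is $(1,0)$ by Lemma~\ref{lem:anchors}. Her true cost then drops to $0$, which contradicts strategyproofness.

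\emph{Step 3: eliminate $(1,2)$ and $(1,3)$.} Use the central agent, whose true preference is $B$. Under either outcome her cost is $1+1=2$. If she misreports $\Two$, the profile becomes the anchor $(\Two,\One,\Two,\Two)$, whose unique admissible outcome is again $(1,0)$. Her true cost is then $1+0=1$, again a contradiction.

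Only $(1,0)$ survives, which proves the claim. The one place needing care is the enumeration in Step~1. There I must make sure no cost-$4$ outcome is missed, for example by checking that $(2,0)$ and $(3,0)$ indeed cost $5$. After that, both eliminations follow directly from two anchors of Lemma~\ref{lem:anchors}. Neither elimination uses the standing assumption $M(P_0)=(0,1)$ or Claim~\ref{cl:B121}.
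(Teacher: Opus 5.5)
Your proposal is correct and matches the paper's proof step for step. It uses the same five admissible outcomes, the leaf-$1$ agent's deviation to $B$ (reaching the anchor $(B,B,\Two,\Two)$) to eliminate $(0,2)$ and $(0,3)$, and the central agent's deviation to $\Two$ (reaching the anchor $(\Two,\One,\Two,\Two)$) to eliminate $(1,2)$ and $(1,3)$; your per-facility cost tables are accurate, and indeed neither the standing assumption nor Claim~\ref{cl:B121} is needed.
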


\begin{proof}
The admissible outcomes of $(B,\One,\Two,\Two)$ are $(0,2),(0,3),(1,0),(1,2),(1,3)$. Indeed, the optimal outcome is $(1,0)$ with a social cost of $3$ and any outcome not listed has social cost at least $5$. We will show that for each of these outcomes besides $(1,0)$, there is a profitable deviation, contradicting that $M$ is strategyproof.
\begin{itemize}
\item $(0,2)$ and $(0,3)$. 
If the truthful profile is $(B,\One,\Two,\Two)$, the leaf-$1$ agent has true preference $\One$ and cost $1$. 
By misreporting $B$, she can change to the anchor profile $(B,B,\Two,\Two)$ and the outcome becomes $(1,0)$ due to Lemma~\ref{lem:anchors}. Hence, she can decrease her cost to $0$. 

\item $(1,2)$ and $(1,3)$. 
If the truthful profile is $(B,\One,\Two,\Two)$, the central agent has true preference $B$ and cost $2$.
By misreporting $\Two$, she can change to the anchor profile $(\Two,\One,\Two,\Two)$ and the outcome becomes $(1,0)$ due to Lemma~\ref{lem:anchors}. Hence, she can decrease her cost to $1$.
\end{itemize}
The only remaining admissible outcome is $(1,0)$.
\end{proof}

\begin{claim}\label{cl:B221}
$M(B,\Two,\Two,\One)=(3,0)$.
\end{claim}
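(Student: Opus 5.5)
The plan mirrors Claims~\ref{cl:B121} and~\ref{cl:B122}. First I list the admissible outcomes of the profile $(B,\Two,\Two,\One)$. Then I remove every admissible outcome other than $(3,0)$ with a unilateral deviation to one of the anchor profiles of Lemma~\ref{lem:anchors}.

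\emph{Admissible outcomes.} Here $N_1=\{0,3\}$ (the center and leaf $3$) and $N_2=\{0,1,2\}$ (the center and leaves $1,2$).
\begin{itemize}
\item The $\Fone$-contribution is $1$ if $z_1\in\{0,3\}$ and $3$ if $z_1\in\{1,2\}$.
\item The $\Ftwo$-contribution is $2$ at $z_2=0$, $3$ at $z_2\in\{1,2\}$, and $5$ at $z_2=3$.
\end{itemize}
Hence $\OPT=3$, attained by $(3,0)$. An outcome is admissible only if its social cost is below $9/2$, i.e., at most $4$. This leaves exactly
\[
(3,0),\ (0,1),\ (0,2),\ (3,1),\ (3,2).
\]
For instance, $(1,0)$ has cost $5$, and any outcome with $\Ftwo$ at leaf $3$ has cost at least $6$.

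\emph{Eliminating $(3,1)$ and $(3,2)$.} Suppose the truthful profile is $(B,\Two,\Two,\One)$. The central agent has true preference $B$ and pays $d(0,3)+1=2$ under either outcome. If she misreports $\Two$, the profile becomes the anchor $(\Two,\Two,\Two,\One)$. By Lemma~\ref{lem:anchors}, its unique admissible outcome is $(3,0)$, under which her true cost is $1+0=1$. This is a profitable deviation.

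\emph{Eliminating $(0,1)$ and $(0,2)$.} Again take $(B,\Two,\Two,\One)$ as truthful. The leaf-$3$ agent has true preference $\One$ and pays $d(3,0)=1$. If she misreports $B$, the profile becomes the anchor $(B,\Two,\Two,B)$. By Lemma~\ref{lem:anchors}, its unique admissible outcome is $(3,0)$, under which her true cost is $0$. This is again a profitable deviation.

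Only $(3,0)$ survives, which proves the claim. The one step that needs care is the enumeration of admissible outcomes, in particular confirming that outcomes such as $(1,0)$ exceed the threshold. The two deviation steps are direct applications of the anchors and need no further argument.
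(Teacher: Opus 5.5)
Your proposal is correct and follows the paper's proof essentially step for step. You enumerate the same five admissible outcomes. You eliminate $(3,1)$ and $(3,2)$ via the central agent's deviation to the anchor $(\Two,\Two,\Two,\One)$, and $(0,1)$ and $(0,2)$ via the leaf-$3$ agent's deviation to the anchor $(B,\Two,\Two,B)$, both anchors having unique admissible outcome $(3,0)$.
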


\begin{proof}
The admissible outcomes of $(B,\Two,\Two,\One)$ are $(0,1),(0,2),(3,0),(3,1),(3,2)$. Indeed, the optimal outcome is $(3,0)$ with a social cost of $3$, and any outcome not listed has social cost at least $5$. We will show that for each outcome besides $(3,0)$, there is a profitable deviation, contradicting that $M$ is strategyproof.
\begin{itemize}
\item $(0,1)$ and $(0,2)$.
If the truthful profile is $(B,\Two,\Two,\One)$, the leaf-$3$ agent has true preference $\One$ and cost $1$. 
By misreporting $B$, she can change to the anchor profile $(B,\Two,\Two,B)$ and the outcome becomes $(3,0)$ due to Lemma~\ref{lem:anchors}. Hence, she can decrease her cost to $0$.

\item $(3,1)$ and $(3,2)$.
If the truthful profile is $(B,\Two,\Two,\One)$, the central agent has true preference $B$ and cost $2$. 
By misreporting $\Two$, she can change to the anchor profile $(\Two,\Two,\Two,\One)$ and the outcome becomes $(3,0)$ due to Lemma~\ref{lem:anchors}. Hence, she can decrease her cost to $1$. 
\end{itemize}
The only remaining admissible outcome is $(3,0)$.
\end{proof}

\begin{claim}\label{cl:B2BB}
$M(B,\Two,B,B)=(0,1)$.
\end{claim}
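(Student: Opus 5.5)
The plan is to pin down $M(B,\Two,B,B)$ in the same way as the earlier claims. First I list the admissible outcomes. Then I use the standing assumption $M(P_0)=(0,1)$ together with two anchor profiles of Lemma~\ref{lem:anchors} to rule out every admissible outcome except $(0,1)$.

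\textbf{Step 1: admissible outcomes.} In $(B,\Two,B,B)$, facility $\Fone$ is approved by the agents at nodes $0,2,3$, and $\Ftwo$ by all four agents. The cost of $\Fone$ is $2$ at node $0$, $3$ at nodes $2$ or $3$, and $5$ at node $1$. The cost of $\Ftwo$ is $3$ at node $0$ and $5$ at any leaf. Hence $\OPT=6$, attained by $(2,0)$ and $(3,0)$, and the admissibility threshold is $9$. The outcomes $(0,1),(0,2),(0,3)$ cost $7$. The outcomes $(1,0),(2,1),(3,1),(2,3),(3,2)$ cost $8$. Only $(1,2)$ and $(1,3)$, which cost $10$, are excluded. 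So the admissible set is
\[
(0,1),(0,2),(0,3),(1,0),(2,0),(3,0),(2,1),(3,1),(2,3),(3,2).
\]

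\textbf{Step 2: use $P_0$.} The profiles $P_0=(B,B,B,B)$ and $(B,\Two,B,B)$ differ only at the leaf-$1$ agent. Suppose $(B,\Two,B,B)$ is truthful. This agent truly approves only $\Ftwo$, and by reporting $B$ she obtains $M(P_0)=(0,1)$, which gives her cost $0$. Strategyproofness therefore forces $\Ftwo$ to be at node $1$ in $M(B,\Two,B,B)$. Only $(0,1)$, $(2,1)$ and $(3,1)$ survive.

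\textbf{Step 3: eliminate $(2,1)$ and $(3,1)$ via anchors.}
\begin{itemize}
\item Suppose $M(B,\Two,B,B)=(2,1)$. The leaf-$3$ agent has true type $B$ and cost $2+2=4$. By reporting $\Two$ she reaches the anchor $(B,\Two,B,\Two)$, whose outcome is $(2,0)$ by Lemma~\ref{lem:anchors}. There her cost is $2+1=3$, which is a profitable deviation.
\item Suppose $M(B,\Two,B,B)=(3,1)$. Symmetrically, the leaf-$2$ agent has cost $4$. By reporting $\Two$ she reaches the anchor $(B,\Two,\Two,B)$, whose outcome is $(3,0)$, and her cost becomes $3$.
\end{itemize}
Hence only $(0,1)$ remains, and the claim follows.

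The main obstacle is only bookkeeping: the social costs in Step 1 must be computed correctly. After that, the single deviation to $P_0$ already cuts the candidate set to three outcomes, and each of the two leftover outcomes needs just one anchor deviation to rule it out.
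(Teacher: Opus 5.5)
Your proof is correct and follows the paper's approach. You list the same admissible set, use the same anchor deviations to rule out $(2,1)$ and $(3,1)$, and use the leaf-$1$ deviation to $P_0$. The one difference is that you apply that deviation to every outcome with $z_2\neq 1$, which also disposes of $(0,2),(0,3),(2,3),(3,2)$. The paper applies it only to $(1,0),(2,0),(3,0)$ and excludes those four other outcomes with separate deviations from the anchors $(B,\Two,\Two,B)$ and $(B,\Two,B,\Two)$ into $(B,\Two,B,B)$. Your version is the shorter of the two.
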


\begin{proof}
The admissible outcomes of $(B,\Two,B,B)$ are $(0,1)$, $(0,2)$, $(0,3)$, $(1,0)$, $(2,0)$, $(2,1)$, $(2,3)$, $(3,0)$, $(3,1)$, $(3,2)$. Indeed, the optimal outcome has social cost $6$ and the only two outcomes not listed have social cost $10$.
We will show that for each of these outcomes besides $(0,1)$, there is a profitable deviation. 
\begin{itemize}
\item $(0,2)$.
If the anchor profile $(B,\Two,\Two,B)$ is the truthful one, then the outcome is $(3,0)$ due to Lemma~\ref{lem:anchors}, and the leaf-$2$ agent with true preference $\Two$ has cost $1$. By misreporting $B$, she can change the profile to $(B,\Two,B,B)$ and the outcome becomes $(0,2)$. Hence, she can decrease her cost to $0$.

\item $(0,3)$.
If the anchor profile $(B,\Two,B,\Two)$ is the truthful one, then the outcome is $(2,0)$ due to Lemma~\ref{lem:anchors}, and the leaf-$3$ agent with true preference $\Two$ has cost $1$. By misreporting $B$, she can change the profile to $(B,\Two,B,B)$ and the outcome becomes $(0,3)$. Hence, she can decrease her cost to $0$.

\item $(1,0)$, $(2,0)$ and $(3,0)$.
If the profile $(B,\Two,B,B)$ is the truthful one and the outcome is one of $(1,0)$, $(2,0)$ or $(3,0)$, then the leaf-$1$ agent with true preference $\Two$ has cost $1$. By misreporting $B$, she can change the profile to $P_0=(B,B,B,B)$, whose outcome is $(0,1)$ by the standing assumption. Hence, she can decrease her cost to $0$.

\item $(2,1)$.
If the truthful profile is $(B,\Two,B,B)$, then the leaf-$3$ agent with true preference $B$ has cost $4$. 
By misreporting $\Two$, she can change the profile to $(B,\Two,B,\Two)$ and the outcome becomes $(2,0)$ due to Lemma~\ref{lem:anchors}. 
Hence, she can decrease her cost to $3$.

\item $(2,3)$.
If the anchor profile $(B,\Two,B,\Two)$ is the truthful one, then the outcome is $(2,0)$ due to Lemma~\ref{lem:anchors}, and the leaf-$3$ agent with true preference $\Two$ has cost $1$. By misreporting $B$, she can change the profile to $(B,\Two,B,B)$ and the outcome becomes $(2,3)$. Hence, she can decrease her cost to $0$.

\item $(3,1)$.
If the truthful profile is $(B,\Two,B,B)$, then the leaf-$2$ agent with true preference $B$ has cost $4$. 
By misreporting $\Two$, she can change the profile to $(B,\Two,\Two,B)$ and the outcome becomes $(3,0)$ due to Lemma~\ref{lem:anchors}. 
Hence, she can decrease her cost to $3$.

\item $(3,2)$.
If the anchor profile $(B,\Two,\Two,B)$ is the truthful one, then the outcome is $(3,0)$ due to Lemma~\ref{lem:anchors}, and the leaf-$2$ agent with true preference $\Two$ has cost $1$. By misreporting $B$, she can change the profile to $(B,\Two,B,B)$ and the outcome becomes $(3,2)$. Hence, she can decrease her cost to $0$.
\end{itemize}
The only remaining admissible outcome is $(0,1)$.
\end{proof}

\begin{claim}\label{cl:B2B1}
$M(B,\Two,B,\One)=(0,1)$.
\end{claim}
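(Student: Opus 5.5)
The plan is to mirror the earlier claims. I will first list the admissible outcomes of the profile $(B,\Two,B,\One)$. Then I will eliminate every outcome except $(0,1)$ using unilateral deviations by the leaf-$3$ and leaf-$2$ agents, which connect this profile to profiles whose outcomes are already pinned down: $(B,\Two,B,B)$, whose outcome is $(0,1)$ by Claim~\ref{cl:B2BB}; the anchor $(B,\Two,B,\Two)$, whose outcome is $(2,0)$ by Lemma~\ref{lem:anchors}; and $(B,\Two,\Two,\One)$, whose outcome is $(3,0)$ by Claim~\ref{cl:B221}.

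\emph{Step 1 (admissible set).} The supporters of $\Fone$ are the center and leaves $2,3$, and the supporters of $\Ftwo$ are the center and leaves $1,2$. Each facility therefore has one-facility cost $2$ at the center and $3$ at a leaf of one of its supporters. It follows that $\OPT=5$, attained for example by $(0,1)$, $(0,2)$, $(2,0)$ and $(3,0)$. A direct computation of all twelve outcomes gives the admissible ones, namely those with social cost at most $7$:
\[
(0,1),(0,2),(0,3),(1,0),(2,0),(3,0),(2,1),(3,1),(3,2).
\]

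\emph{Step 2 (leaf-$3$ deviations).} In our profile the leaf-$3$ agent truly reports $\One$. By reporting $B$ she moves to $(B,\Two,B,B)$ and obtains $(0,1)$, where her cost is $1$. This eliminates every outcome with $z_1\in\{1,2\}$, that is, $(1,0),(2,0),(2,1)$.

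Now take $(B,\Two,B,B)$ as the truthful profile. There the leaf-$3$ agent has cost $2$ at $(0,1)$, and by reporting $\One$ she would reach our profile. This eliminates $(3,0)$, where her cost would be $1$.

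Next take the anchor $(B,\Two,B,\Two)$ as the truthful profile. There the leaf-$3$ agent has cost $1$ at $(2,0)$, and by reporting $\One$ she would reach our profile. This eliminates $(0,3)$, where her cost would be $0$.

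The outcomes $(0,1),(0,2),(3,1),(3,2)$ remain.

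\emph{Step 3 (leaf-$2$ deviations).} In our profile the leaf-$2$ agent truly reports $B$. At $(3,1)$ her cost is $4$. By reporting $\Two$ she moves to $(B,\Two,\Two,\One)$ and obtains $(3,0)$, with cost $3$, so $(3,1)$ is eliminated.

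Conversely, take $(B,\Two,\Two,\One)$ as the truthful profile. There the leaf-$2$ agent truly reports $\Two$ and has cost $1$ at $(3,0)$. By reporting $B$ she would reach our profile. At $(0,2)$ and at $(3,2)$ her cost would be $0$, so both are eliminated. Only $(0,1)$ survives, which proves the claim.

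The main obstacle is bookkeeping rather than any conceptual step. One must pick, for each surviving outcome, the correct neighbor and the correct direction of the deviation, since some outcomes fall only to "reverse" deviations in which the neighboring profile is the truthful one. One must also verify each individual cost correctly on $K_{1,3}$, using that leaf-to-leaf distance is $2$. I would double-check the Step~1 cost table first, because every later elimination relies on it.
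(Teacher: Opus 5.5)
Your proof is correct and is essentially the paper's argument. The admissible set and every elimination match, except for $(1,0)$: the paper removes it via the anchor $(B,\One,B,\One)$, with the leaf-$1$ agent misreporting $\Two$, whereas you fold it into the leaf-$3$ deviation to $(B,\Two,B,B)$, which works equally well since her cost drops from $2$ to $1$. There is one harmless slip: in $(B,\Two,B,B)$ the leaf-$3$ agent's cost at $(0,1)$ is $d(3,0)+d(3,1)=3$, not $2$, which only strengthens your elimination of $(3,0)$.
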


\begin{proof}
The admissible outcomes of $(B,\Two,B,\One)$ are $(0,1)$, $(0,2)$, $(0,3)$, $(1,0)$, $(2,0)$, $(2,1)$, $(3,0)$, $(3,1)$, $(3,2)$. 
Indeed, the optimal outcome is $(0,1)$ with social cost $5$, and any outcome not listed has social cost at least $8$.
We will show that for each of these outcomes besides $(0,1)$, there is a profitable deviation.
\begin{itemize}
\item $(0,2)$.
If the truthful profile is $(B,\Two,\Two,\One)$, then the outcome is $(3,0)$ due to Claim~\ref{cl:B221}, and the leaf-$2$ agent with true preference $\Two$ has cost $1$. By misreporting $B$, she can change the profile to $(B,\Two,B,\One)$ and the outcome becomes $(0,2)$. Hence, she can decrease her cost to $0$.

\item $(0,3)$.
If the anchor profile $(B,\Two,B,\Two)$ is the truthful one, then the outcome is $(2,0)$ due to Lemma~\ref{lem:anchors}, and the leaf-$3$ agent with true preference $\Two$ has cost $1$. By misreporting $\One$, she can change the profile to $(B,\Two,B,\One)$ and the outcome becomes $(0,3)$. Hence, she can decrease her cost to $0$.

\item $(1,0)$.
If the anchor profile $(B,\One,B,\One)$ is the truthful one, then the outcome is $(0,2)$ due to Lemma~\ref{lem:anchors}, and the leaf-$1$ agent with true preference $\One$ has cost $1$. By misreporting $\Two$, she can change the profile to $(B,\Two,B,\One)$ and the outcome becomes $(1,0)$. Hence, she can decrease her cost to $0$.

\item $(2,0)$ and $(2,1)$.
If the truthful profile is $(B,\Two,B,\One)$ and the outcome is either $(2,0)$ or $(2,1)$, then the leaf-$3$ agent with true preference $\One$ has cost $2$. By misreporting $B$, she can change the profile to $(B,\Two,B,B)$ and the outcome becomes $(0,1)$ due to Claim~\ref{cl:B2BB}. Under this outcome, her true cost is $1$. Hence, she can decrease her cost.

\item $(3,0)$.
If the truthful profile is $(B,\Two,B,B)$, then the outcome is $(0,1)$ due to Claim~\ref{cl:B2BB}, and the leaf-$3$ agent with true preference $B$ has cost $3$. By misreporting $\One$, she can change the profile to $(B,\Two,B,\One)$ and the outcome becomes $(3,0)$. 
Hence, she can decrease her cost to $1$.

\item $(3,1)$.
If the truthful profile is $(B,\Two,B,\One)$, then the leaf-$2$ agent with true preference $B$ has cost $4$.
By misreporting $\Two$, she can change the profile to $(B,\Two,\Two,\One)$ and the outcome becomes $(3,0)$ due to Claim~\ref{cl:B221}. 
Hence, she can decrease her cost to $3$.

\item $(3,2)$.
If the truthful profile is $(B,\Two,\Two,\One)$, then the outcome is $(3,0)$ due to Claim~\ref{cl:B221}, and the leaf-$2$ agent with true preference $\Two$ has cost $1$. By misreporting $B$, she can change the profile to $(B,\Two,B,\One)$ and the outcome becomes $(3,2)$. Hence, she can decrease her cost to $0$.
\end{itemize}
The only remaining admissible outcome is $(0,1)$.
\end{proof}

\begin{claim}\label{cl:BBBB1}
$M(B,B,B,\One)=(0,1)$.
\end{claim}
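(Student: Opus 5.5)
The plan is to pin down $M(B,B,B,\One)$ the same way as in Claims~\ref{cl:B121}--\ref{cl:B2B1}. First I list the admissible outcomes of this profile. Then I rule out every admissible outcome other than $(0,1)$ with a unilateral deviation to a neighbouring profile whose outcome is already known.

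\textbf{Admissible outcomes.} Here $N_1$ consists of all four agents and $N_2$ consists of the center and leaves $1,2$. The one-facility optima are $3$ for $F_1$ and $2$ for $F_2$, both attained at the center. Moving $F_2$ to leaf $1$ gives $\SC(0,1)=3+3=6$, and this is $\OPT$. The threshold is therefore $9$. Computing the social cost of each of the twelve outcomes should give exactly ten admissible ones:
\[
(0,1),(0,2),(0,3),(1,0),(2,0),(3,0),(1,2),(2,1),(3,1),(3,2).
\]
Only $(1,3)$ and $(2,3)$, each of cost $10$, are excluded.

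\textbf{The two neighbouring profiles used.} The first is $(B,\Two,B,\One)$, which differs from our profile only at leaf $1$. By Claim~\ref{cl:B2B1} it outputs $(0,1)$. The second is $P_0=(B,B,B,B)$, which differs only at leaf $3$. By the standing assumption it also outputs $(0,1)$.

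\textbf{Elimination via leaf $1$.} Take our profile as truthful, so the leaf-$1$ agent has true type $B$. If she misreports $\Two$, the outcome becomes $(0,1)$ and her cost becomes $1$. Under each of $(0,2)$, $(0,3)$, $(1,2)$, $(2,1)$, $(3,1)$, $(3,2)$, $(2,0)$, $(3,0)$ her cost is at least $2$. So all eight of these outcomes are eliminated.

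\textbf{Elimination via leaf $3$.} This handles the one remaining outcome, $(1,0)$, which is the step I expect to need care. Under $(1,0)$ the leaf-$1$ agent already has cost $1$, so the deviation above gives her no strict gain. Instead I use the leaf-$3$ agent, whose true type is $\One$. Under $(1,0)$ her cost is $d(3,1)=2$. If she reports $B$, the profile becomes $P_0$ with outcome $(0,1)$, and her cost drops to $1$. This eliminates $(1,0)$ as well.

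Only $(0,1)$ survives, which proves the claim. The main risk is an arithmetic slip in the admissibility list or in the cost comparisons. I will double-check those, in particular that $(1,3)$ and $(2,3)$ are truly inadmissible, since the argument above does not cover them.
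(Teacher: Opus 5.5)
Your proof is correct, and it takes a more economical route than the paper's. I checked the admissible set and each cost comparison.

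The paper eliminates the nine admissible outcomes other than $(0,1)$ in six groups, each with its own deviation:
\begin{itemize}
\item $(0,2)$: the leaf-$1$ agent reports $\Two$ to reach $(B,\Two,B,\One)$, using Claim~\ref{cl:B2B1}.
\item $(0,3)$ and $(3,0)$: the leaf-$3$ agent of $P_0$ misreports $\One$.
\item $(1,0)$, $(1,2)$ and $(2,0)$, $(2,1)$: a leaf agent of the anchors $(B,\One,B,\One)$ and $(B,B,\One,\One)$ misreports $B$.
\item $(3,1)$ and $(3,2)$: a leaf agent with true type $B$ deviates to one of these anchors.
\end{itemize}

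You instead use a single deviation, the leaf-$1$ agent reporting $\Two$, to dispose of every outcome in which her true $B$-cost is at least $2$. You then handle the lone survivor $(1,0)$ by having the leaf-$3$ agent move to $P_0$.

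Your route needs even less than you think. Since $z_1\neq z_2$, the leaf-$1$ agent's cost $d(1,z_1)+d(1,z_2)$ equals $1$ only for $(0,1)$ and $(1,0)$. It is at least $2$ for every other feasible outcome, including $(1,3)$ and $(2,3)$. So your worry that these two are ``not covered'' is unfounded. The admissibility enumeration is in fact superfluous for this claim: strategyproofness plus the known outputs at $(B,\Two,B,\One)$ and $P_0$ already pins the outcome down.

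The price is heavier reliance on Claim~\ref{cl:B2B1}. The paper also invokes that claim here, so you add no new dependencies, and you drop the direct use of the two anchors.
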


\begin{proof}
The admissible outcomes of $(B,B,B,\One)$ are $(0,1)$, $(0,2)$, $(0,3)$, $(1,0)$, $(1,2)$, $(2,0)$, $(2,1)$, $(3,0)$, $(3,1)$, $(3,2)$. 
Indeed, the optimal social cost is $6$ and the missing outcomes have social cost at least $9$.
We will show that for each of these outcomes besides $(0,1)$, there is a profitable deviation.
\begin{itemize}
\item $(0,2)$.
If the truthful profile is $(B,B,B,\One)$ and the outcome is $(0,2)$, then the leaf-$1$ agent with true preference $B$ has cost
$3$. By misreporting $\Two$, she can change the profile to $(B,\Two,B,\One)$, whose outcome is $(0,1)$ due to Claim~\ref{cl:B2B1}. 
Hence, she can decrease her cost to $1$.

\item $(0,3)$ and $(3,0)$.
If the truthful profile is $P_0=(B,B,B,B)$, then the outcome is $(0,1)$ by the standing assumption, and the leaf-$3$ agent with true preference $B$ has cost $3$.
By misreporting $\One$, she can change the profile to $(B,B,B,\One)$ and the outcome becomes either $(0,3)$ or $(3,0)$. 
Hence, she can decrease her cost to $1$.

\item $(1,0)$ and $(1,2)$.
If the anchor profile $(B,\One,B,\One)$ is the truthful one, then the outcome is $(0,2)$ due to Lemma~\ref{lem:anchors}, and the leaf-$1$ agent with true preference $\One$ has cost $1$. By misreporting $B$, she can change the profile to $(B,B,B,\One)$ and the outcome becomes either $(1,0)$ or $(1,2)$. Hence, she can decrease her cost to $0$.

\item $(2,0)$ and $(2,1)$.
If the anchor profile $(B,B,\One,\One)$ is the truthful one, then the outcome is $(0,1)$ due to Lemma~\ref{lem:anchors}, and the leaf-$2$ agent with true preference $\One$ has cost $1$. By misreporting $B$, she can change the profile to $(B,B,B,\One)$ and the outcome becomes either $(2,0)$ or $(2,1)$. Hence, she can decrease her cost to $0$.

\item $(3,1)$.
If the truthful profile is $(B,B,B,\One)$ and the outcome is $(3,1)$, then the leaf-$2$ agent with true preference $B$ has cost $4$.
By misreporting $\One$, she can change the profile to $(B,B,\One,\One)$ and the outcome becomes $(0,1)$ due to Lemma~\ref{lem:anchors}. 
Hence, she can decrease her cost to $3$.

\item $(3,2)$.
If the truthful profile is $(B,B,B,\One)$ and the outcome is $(3,2)$, then the leaf-$1$ agent with true preference $B$ has cost $4$.
By misreporting $\One$, she can change the profile to $(B,\One,B,\One)$ and the outcome becomes $(0,2)$ due to Lemma~\ref{lem:anchors}. 
Hence, she can decrease her cost to $3$.
\end{itemize}
The only remaining admissible outcome is $(0,1)$.
\end{proof}

\begin{claim}\label{cl:BB21}
$M(B,B,\Two,\One)=(3,0)$.
\end{claim}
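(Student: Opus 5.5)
The plan follows the template of Claims~\ref{cl:B121}--\ref{cl:BBBB1}. First I enumerate the admissible outcomes of $(B,B,\Two,\One)$. Then I eliminate every one except $(3,0)$ by exhibiting a profitable unilateral deviation of a leaf agent between this profile and a neighbouring profile whose outcome is already pinned down. The pinned profiles I plan to use are $(B,B,B,\One)$ (Claim~\ref{cl:BBBB1}, outcome $(0,1)$), $(B,\Two,\Two,\One)$ (Claim~\ref{cl:B221}, outcome $(3,0)$), $(B,\One,\Two,\One)$ (Claim~\ref{cl:B121}, outcome $(0,2)$), and the anchor $(B,B,\Two,\Two)$ (Lemma~\ref{lem:anchors}, outcome $(1,0)$).

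\emph{Admissible set.} Here $N_1=\{0,1,3\}$ and $N_2=\{0,1,2\}$, with $f(N_1)=f(N_2)=2$, both attained only at the center. Placing one facility at $0$ and the other at an adjacent leaf gives $\OPT=5$. Admissibility then means social cost at most $7$. A direct computation gives the admissible outcomes
\[(0,1),(0,2),(0,3),(1,0),(1,2),(2,0),(3,0),(3,1),(3,2),\]
since $(1,3)$, $(2,1)$ and $(2,3)$ have cost at least $8$.

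\emph{Eliminations.} I will verify the cost numbers for each case below.
\begin{itemize}
\item $(0,1)$ and $(3,1)$. Take $(B,\Two,\Two,\One)$ as the truthful profile, with outcome $(3,0)$. The leaf-$1$ agent has type $\Two$ and cost $1$. By reporting $B$ she obtains $F_2$ at node $1$, so cost $0$.
\item $(0,2)$, $(2,0)$ and $(3,2)$. Take $(B,B,B,\One)$ as the truthful profile, with outcome $(0,1)$. The leaf-$2$ agent has type $B$ and cost $1+2=3$. Reporting $\Two$ yields cost $1$, $1$ and $2$, respectively.
\item $(0,3)$. Take $(B,B,\Two,\Two)$ as the truthful profile, with outcome $(1,0)$. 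The leaf-$3$ agent has type $\Two$ and cost $1$. Reporting $\One$ yields $F_2$ at node $3$, so cost $0$.
\item $(1,0)$ and $(1,2)$. Take $(B,\One,\Two,\One)$ as the truthful profile, with outcome $(0,2)$. The leaf-$1$ agent has type $\One$ and cost $1$. Reporting $B$ places $F_1$ at node $1$, so cost $0$.
\end{itemize}
Only $(3,0)$ survives.

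The main obstacle is finding, for each candidate outcome, a deviation that actually bites. Deviations away from $(B,B,\Two,\One)$ often fail. For example, for $(0,1)$ or $(2,0)$ the agents of the profile itself gain nothing by moving to the known neighbours. The trick, already used in the earlier claims, is to reverse the direction: treat the already-determined neighbour as truthful and deviate \emph{into} $(B,B,\Two,\One)$. I expect to need this reversed direction in every case above. The remaining care is to recheck each cost computation on $K_{1,3}$, where leaf--leaf distance is $2$.
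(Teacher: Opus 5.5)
Your proposal is correct and follows the paper's proof essentially verbatim. It has the same admissible set and the same reversed deviations: from $(B,\Two,\Two,\One)$ for $(0,1),(3,1)$, from $(B,B,B,\One)$ for $(0,2)$, from $(B,B,\Two,\Two)$ for $(0,3)$, and from $(B,\One,\Two,\One)$ for $(1,0),(1,2)$. The one difference is that you also eliminate $(2,0)$ and $(3,2)$ through the leaf-$2$ deviation from $(B,B,B,\One)$, giving costs $1$ and $2$ against $3$, which is valid. The paper instead uses the anchor $(B,B,\One,\One)$ for $(2,0)$, and for $(3,2)$ a forward deviation of the leaf-$1$ agent to $(B,\One,\Two,\One)$.
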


\begin{proof}
The admissible outcomes of $(B,B,\Two,\One)$ are $(0,1)$, $(0,2)$, $(0,3)$, $(1,0)$, $(1,2)$, $(2,0)$, $(3,0)$, $(3,1)$, $(3,2)$. 
Indeed, the optimal outcome is $(3,0)$ with social cost $5$, and any missing outcome has social cost at least $8$.
We will show that for each of these outcomes besides $(3,0)$, there is a profitable deviation.
\begin{itemize}
\item $(0,1)$.
If the truthful profile is $(B,\Two,\Two,\One)$, then the outcome is $(3,0)$ due to Claim~\ref{cl:B221}, and the leaf-$1$ agent with true preference $\Two$ has cost $1$. By misreporting $B$, she can change the profile to $(B,B,\Two,\One)$ and the outcome becomes $(0,1)$. Hence, she can decrease her cost to $0$.

\item $(0,2)$.
If the truthful profile is $(B,B,B,\One)$, then the outcome is $(0,1)$ due to Claim~\ref{cl:BBBB1}, and the leaf-$2$ agent with true preference $B$ has cost $3$. By misreporting $\Two$, she can change the profile to $(B,B,\Two,\One)$ and the outcome becomes $(0,2)$.
Hence, she can decrease her cost to $1$.

\item $(0,3)$.
If the anchor profile $(B,B,\Two,\Two)$ is the truthful one, then the outcome is $(1,0)$ due to Lemma~\ref{lem:anchors}, and the leaf-$3$ agent with true preference $\Two$ has cost $1$. By misreporting $\One$, she can change the profile to $(B,B,\Two,\One)$ and the outcome becomes $(0,3)$. Hence, she can decrease her cost to $0$.

\item $(1,0)$ and $(1,2)$.
If the truthful profile is $(B,\One,\Two,\One)$, then the outcome is $(0,2)$ due to Claim~\ref{cl:B121}, and the leaf-$1$ agent with true preference $\One$ has cost $1$. By misreporting $B$, she can change the profile to $(B,B,\Two,\One)$ and the outcome becomes either $(1,0)$ or $(1,2)$. Hence, she can decrease her cost to $0$.

\item $(2,0)$.
If the anchor profile $(B,B,\One,\One)$ is the truthful one, then the outcome is $(0,1)$ due to Lemma~\ref{lem:anchors}, and the leaf-$2$ agent with true preference $\One$ has cost $1$. By misreporting $\Two$, she can change the profile to $(B,B,\Two,\One)$ and the outcome becomes $(2,0)$. Hence, she can decrease her cost to $0$. 

\item $(3,1)$.
If the truthful profile is $(B,\Two,\Two,\One)$, then the outcome is $(3,0)$ due to Claim~\ref{cl:B221}, and the leaf-$1$ agent with true preference $\Two$ has cost $1$. By misreporting $B$, she can change the profile to $(B,B,\Two,\One)$ and the outcome becomes $(3,1)$. 
Hence, she can decrease her cost to $0$.

\item $(3,2)$.
If the profile $(B,B,\Two,\One)$ is the truthful one and the outcome is $(3,2)$, then the leaf-$1$ agent with true preference $B$ has cost $4$.
By misreporting $\One$, she can change the profile to $(B,\One,\Two,\One)$, whose outcome is $(0,2)$ due to Claim~\ref{cl:B121}. 
Hence, she can decrease her cost to $3$.
\end{itemize}
The only remaining admissible outcome is $(3,0)$.
\end{proof}

\begin{claim}\label{cl:BB2B}
$M(B,B,\Two,B)=(3,0)$.
\end{claim}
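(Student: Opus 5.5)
The plan is to follow the same template as the preceding claims. First list the admissible outcomes of $(B,B,\Two,B)$, then eliminate every one except $(3,0)$ by exhibiting a profitable unilateral deviation. Each deviation goes to or from a neighbouring profile whose outcome is already pinned down. The most useful neighbour is $(B,B,\Two,\One)$, obtained by changing the leaf-$3$ agent's report, whose outcome is $(3,0)$ by Claim~\ref{cl:BB21}. The second is the anchor $(B,B,\Two,\Two)$ with outcome $(1,0)$ from Lemma~\ref{lem:anchors}.

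\emph{Step 1: admissible set.} Here $N_1=\{0,1,3\}$ and $N_2=\{0,1,2,3\}$. Placing $\Ftwo$ at the centre costs $3$, and placing $\Fone$ at leaf $1$ or leaf $3$ then costs $3$, so $\OPT=6$, attained by $(1,0)$ and $(3,0)$. Every outcome $(0,j)$ costs $7$. The outcomes $(2,0)$, $(1,2)$, $(1,3)$, $(3,1)$ and $(3,2)$ each cost $8$, while $(2,1)$ and $(2,3)$ cost $10>9$. So the admissible outcomes are $(0,1),(0,2),(0,3),(1,0),(1,2),(1,3),(2,0),(3,0),(3,1),(3,2)$.

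\emph{Step 2: deviation by the leaf-$3$ agent from the truthful profile.} Take $(B,B,\Two,B)$ as truthful, so the leaf-$3$ agent has true type $B$. By misreporting $\One$ she moves to $(B,B,\Two,\One)$, whose outcome is $(3,0)$ by Claim~\ref{cl:BB21}; her true cost there is $0+1=1$. Her true cost under each candidate outcome is:
\begin{itemize}
\item $3$ under $(0,1)$, $(0,2)$, $(1,0)$ and $(2,0)$;
\item $4$ under $(1,2)$;
\item $2$ under $(1,3)$, $(3,1)$ and $(3,2)$.
\end{itemize}
Each of these exceeds $1$, so all these outcomes are eliminated. Only $(0,3)$ and $(3,0)$ remain, since $(0,3)$ also gives her cost exactly $1$.

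\emph{Step 3: the remaining candidate $(0,3)$.} This is the one step Step 2 cannot handle, and it needs a deviation \emph{into} the profile instead. Take the anchor $(B,B,\Two,\Two)$ as truthful. Its outcome is $(1,0)$, and the leaf-$3$ agent, with true type $\Two$, pays $1$ there. By misreporting $B$ she reaches $(B,B,\Two,B)$; if its outcome were $(0,3)$, her cost would drop to $0$. This contradicts strategyproofness, so $M(B,B,\Two,B)=(3,0)$.

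The main obstacle is only bookkeeping: checking the social costs in Step 1 and the individual costs in Step 2, which I expect to be routine. If Step 3 had failed, I would have tried deviations by the leaf-$1$ or leaf-$2$ agent from $P_0$ or from $(B,\Two,\Two,B)$.
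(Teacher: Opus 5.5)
Your proof is correct. It follows the paper's template (admissible set, then elimination by deviations), but it reaches the conclusion by a more economical route.

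The paper eliminates the nine non-target admissible outcomes with seven separate deviation arguments. These draw on the standing assumption $M(P_0)=(0,1)$, the anchors $(B,\Two,\Two,B)$ and $(B,B,\Two,\Two)$, and Claim~\ref{cl:BB21}, which it uses only to rule out $(1,0)$.

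You instead notice that one deviation does almost all the work. When the leaf-$3$ agent misreports $\One$, the profile becomes $(B,B,\Two,\One)$, whose outcome is $(3,0)$, so she can guarantee herself cost $1$. Since the facilities occupy distinct nodes, $1$ is the least cost any outcome can give a type-$B$ agent at leaf $3$, and only $(3,0)$ and $(0,3)$ attain it. That single deviation therefore disposes of every other outcome at once. The anchor $(B,B,\Two,\Two)$ then removes $(0,3)$, exactly as in the paper.

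Your route is shorter and makes no direct appeal to $P_0$ or $(B,\Two,\Two,B)$; they enter only through the chain of claims behind Claim~\ref{cl:BB21}. It also makes your Step~1 unnecessary. Your two deviations exclude every outcome other than $(3,0)$, admissible or not, including $(2,1)$ and $(2,3)$.

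The paper's outcome-by-outcome treatment keeps this claim uniform with its neighbours, but nothing in it is needed here beyond what you use.
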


\begin{proof}
The admissible outcomes of $(B,B,\Two,B)$ are $(0,1)$, $(0,2)$, $(0,3)$, $(1,0)$, $(1,2)$, $(1,3)$, $(2,0)$, $(3,0)$, $(3,1)$, $(3,2)$,
as the optimal cost, due to $(3,0)$, is $6$ and all missing outcomes have cost $10$.
We will show that for each of these outcomes besides $(3,0)$, there is a profitable deviation.
\begin{itemize}
\item $(0,1)$.
If the anchor profile $(B,\Two,\Two,B)$ is the truthful one, then the outcome is $(3,0)$ due to Lemma~\ref{lem:anchors}, and the leaf-$1$ agent with true preference $\Two$ has cost $1$. By misreporting $B$, she can change the profile to $(B,B,\Two,B)$ and the outcome becomes $(0,1)$. Hence, she can decrease her cost to $0$.

\item $(0,2)$ and $(2,0)$.
If the truthful profile is $P_0=(B,B,B,B)$, then the outcome is $(0,1)$ by the standing assumption, and the leaf-$2$ agent with true preference $B$ has cost $3$. By misreporting $\Two$, she can change the profile to $(B,B,\Two,B)$ and the outcome becomes either $(0,2)$ or $(2,0)$. 
Hence, she can decrease her cost to $1$.

\item $(0,3)$ and $(1,3)$.
If the anchor profile $(B,B,\Two,\Two)$ is the truthful one, then the outcome is $(1,0)$ due to Lemma~\ref{lem:anchors}, and the leaf-$3$ agent with true preference $\Two$ has cost $1$. By misreporting $B$, she can change the profile to $(B,B,\Two,B)$ and the outcome becomes either $(0,3)$ or $(1,3)$. Hence, she can decrease her cost to $0$.

\item $(1,0)$.
If the truthful profile is $(B,B,\Two,B)$ and the outcome is $(1,0)$, then the leaf-$3$ agent with true preference $B$ has cost $3$.
By misreporting $\One$, she can change the profile to $(B,B,\Two,\One)$ and the outcome becomes $(3,0)$ due to Claim~\ref{cl:BB21}. 
Hence, she can decrease her cost to $1$.

\item $(1,2)$.
If the truthful profile is $(B,B,\Two,B)$ and the outcome is $(1,2)$, then the leaf-$3$ agent with true preference $B$ has cost $4$.
By misreporting $\Two$, she can change the profile to $(B,B,\Two,\Two)$ and the outcome becomes $(1,0)$ due to Lemma~\ref{lem:anchors}. 
Hence, she can decrease her cost to $3$.

\item $(3,1)$.
If the anchor profile $(B,\Two,\Two,B)$ is the truthful one, then the outcome is $(3,0)$ due to Lemma~\ref{lem:anchors}, and the leaf-$1$ agent with true preference $\Two$ has cost $1$. By misreporting $B$, she can change the profile to $(B,B,\Two,B)$ and the outcome becomes $(3,1)$. Hence, she can decrease her cost to $0$.

\item $(3,2)$.
If the truthful profile is $(B,B,\Two,B)$ and the outcome is $(3,2)$, then the leaf-$1$ agent with true preference $B$ has cost $4$.
By misreporting $\Two$, she can change the profile to $(B,\Two,\Two,B)$, whose outcome is $(3,0)$ due to Lemma~\ref{lem:anchors}. 
Hence, she can decrease her cost to $3$.
\end{itemize}
The only remaining admissible outcome is $(3,0)$.
\end{proof}

We are now finally ready to prove that $M(B,\One,\Two,B)=(0,2)$. 
The admissible outcomes for this profile are $(0,1)$, $(0,2)$, $(0,3)$, $(1,0)$, $(1,2)$, $(1,3)$, $(2,0)$, $(3,0)$, $(3,2)$ since the optimal outcome is $(0,2)$ with social cost $5$ and the missing outcomes have social cost at least $8$. We argue that for any of these outcomes besides $(0,2)$, there is a profitable deviation. 
\begin{itemize}
\item $(0,1)$.
If the truthful profile is $(B,\Two,\Two,B)$, then the outcome is $(3,0)$ due to Lemma~\ref{lem:anchors}, and the leaf-$1$ agent with true preference $\Two$ has cost $1$. By misreporting $\One$, she can change the profile to $(B,\One,\Two,B)$ and the outcome becomes $(0,1)$. 
Hence, she can decrease her cost to $0$. 

\item $(1,0)$.
If the truthful profile is $(B,B,\Two,B)$, then the outcome is $(3,0)$ due to Claim~\ref{cl:BB2B}, and the leaf-$1$ agent with true preference $B$ has cost $3$.  By misreporting $\One$, she can change the profile to $(B,\One,\Two,B)$ and the outcome becomes $(1,0)$. 
Hence, she can decrease her cost to $1$.

\item $(0,3)$ and $(1,3)$.
If the truthful profile is $(B,\One,\Two,\Two)$, then the outcome is $(0,2)$ due to Claim~\ref{cl:B122}, and the leaf-$3$ agent with true preference $\Two$ has cost $1$. By misreporting $B$, she can change the profile to $(B,\One,\Two,B)$ and the outcome becomes either $(0,3)$ or $(1,3)$. Hence, she can decrease her cost to $0$.

\item $(1,2)$.
If the truthful profile is $(B,\One,\Two,B)$ and the outcome is $(1,2)$, then the leaf-$3$ agent with true preference $B$ has cost $4$.
By misreporting $\One$, she can change the profile to $(B,\One,\Two,\One)$ and the outcome becomes $(0,2)$ due to Claim~\ref{cl:B121}. 
Hence, she can decrease her cost to $3$.

\item $(2,0)$.
If the truthful profile is $(B,\One,\One,B)$, then the outcome is $(0,3)$ due to Lemma~\ref{lem:anchors}, and the leaf-$2$ agent with true preference $\One$ has cost $1$. By misreporting $\Two$, she can change the profile to $(B,\One,\Two,B)$ and the outcome becomes $(2,0)$. Hence, she can decrease her cost to $0$.

\item $(3,0)$ and $(3,2)$.
If the truthful profile is $(B,\One,\Two,\One)$, then the outcome is $(0,2)$ due to Claim~\ref{cl:B121}, and the leaf-$3$ agent with true preference $\One$ has cost $1$. By misreporting $B$, she can change the profile to $(B,\One,\Two,B)$ and the outcome becomes either $(3,0)$ or $(3,2)$. Hence, she can decrease her cost to $0$.
\end{itemize}
The only remaining admissible outcome is $(0,2)$.
\hfill $\qed$

\end{document}